\renewcommand{\epsilon}{\varepsilon}
\renewcommand{\phi}{\varphi}
\newcommand{\M}{\mathcal{M}}
\newcommand{\complex}{\mathbb{C}}
\newcommand{\nat}{\mathbb{N}}
\newcommand{\id}{\mathds{1}}
\newcommand{\sa}[1]{\M(\complex)^{\mathrm{sa}}_{#1}}
\newcommand{\st}{\mathfrak{S}}
\newcommand{\eff}{\mathfrak{E}}
\newcommand{\meas}{\mathfrak{M}}
\newcommand{\ch}{\mathfrak{C}}
\newcommand{\ins}{\mathfrak{I}}
\newcommand{\qs}[1]{\st(\complex^{#1})}
\newcommand{\qe}[1]{\eff(\complex^{#1})}
\newcommand{\qm}[2]{\meas(#1, \complex^{#2})}
\newcommand{\qmall}[1]{\meas(\complex^{#1})}
\renewcommand{\qc}[2]{\ch(\complex^{#1},\complex^{#2})}
\newcommand{\qi}[3]{\ins(#1, \complex^{#2}, \complex^{#3})}
\newtheorem*{rep@theorem}{\rep@title}
\newcommand{\newreptheorem}[2]{
\newenvironment{rep#1}[1]{
 \def\rep@title{#2 \ref{##1}}
 \begin{rep@theorem}}
 {\end{rep@theorem}}}
\newcommand*\pFq[6][8]{
  \begingroup
  \pFqmuskip=#1mu\relax
  \mathchardef\normalcomma=\mathcode`,
  \mathcode`\,=\string"8000
  \begingroup\lccode`\~=`\,
  \lowercase{\endgroup\let~}\pFqcomma
  {}_{#2}F_{#3}{\left[\genfrac..{0pt}{}{#4}{#5};#6\right]}
  \endgroup
}
\newcommand{\pFqcomma}{{\normalcomma}\mskip\pFqmuskip}
\newtheorem{thm}{Theorem}[section]
\newtheorem*{thm*}{Theorem}
\newtheorem{cor}[thm]{Corollary}
\newtheorem{ex}[thm]{Example}
\newtheorem{defi}[thm]{Definition}
\newtheorem*{defi*}{Definition}
\newtheorem{prop}[thm]{Proposition}
\newtheorem{remark}[thm]{Remark}
\newtheorem{example}[thm]{Example}
\begin{document}

\author{Andreas Bluhm}
\affiliation{Univ.\ Grenoble Alpes, CNRS, Grenoble INP, LIG, 38000 Grenoble, France}
\email{andreas.bluhm@univ-grenoble-alpes.fr}
\orcid{0000-0003-4796-7633}

\author{Leevi Leppäjärvi}
\affiliation{RCQI, Institute of Physics, Slovak Academy of Sciences, Dúbravská cesta 9, 84511 Bratislava, Slovakia}
\email{leevi.leppajarvi@savba.sk}
\orcid{0000-0002-9528-1583}

\author{Ion Nechita}
\affiliation{Laboratoire de Physique Th\'eorique, Universit\'e de Toulouse, CNRS, UPS, France}
\email{ion.nechita@univ-tlse3.fr}
\orcid{0000-0003-3016-7795}

\title{On the simulation of quantum multimeters}

\begin{abstract}
In the quest for robust and universal quantum devices, the notion of simulation plays a crucial role, both from a theoretical and from an applied perspective. In this work, we go beyond the simulation of quantum channels and quantum measurements, studying what it means to simulate a collection of measurements, which we call a multimeter. To this end, we first explicitly characterize the completely positive transformations between multimeters. However, not all of these transformations correspond to valid simulations, as otherwise we could create any resource from nothing. For example, the set of transformations includes maps that always prepare the same multimeter regardless of the input, which we call trash-and-prepare. From the perspective of an experimenter with a given multimeter as part of a complicated setup, having to discard the multimeter and using a different one instead is undesirable. We give a new definition of multimeter simulations as transformations that are triviality-preserving, i.e., when given a multimeter consisting of trivial measurements they can only produce another trivial multimeter. In the absence of a quantum ancilla, we then characterize the transformations that are triviality-preserving and the transformations that are trash-and-prepare. Finally, we use these characterizations to compare our new definition of multimeter simulation to three existing ones: classical simulations, compression of multimeters, and compatibility-preserving simulations.
\end{abstract}

\maketitle

\titlepage

\tableofcontents

\newpage

\section{Introduction}
One of the most important goals in the development of quantum computers is the simulation of quantum systems of interest, both in an analogue and digital fashion \cite{cirac2012goals}. Measurements play an important role in this effort, as they are the only way for us to access the information of a quantum system. Therefore, in this work we focus on the simulation of measurements (or meters) and in particular simulation of collections of measurements, which we call \emph{multimeters}. An extreme case of such a simulation are \emph{compatible} measurements. Measurements are compatible if there exists a joint measurement one can perform instead and obtain the measurement statistics from classical postprocessing, possibly using randomness, of its outcomes \cite{Heinosaari2016,Guhne2023}. In that sense, the joint measurements simulate the compatible measurements, such that it makes more sense to implement the joint measurement than all of the compatible ones. A striking feature of quantum mechanics which distinguishes it from classical mechanics is the existence of incompatible measurements \cite{Heisenberg1927,Bohr1928}, for example projective measurements with non-commuting elements. Therefore, measurement compatibility does not capture the full picture of what it means to simulate a multimeter.

Several different notions have been proposed in the literature for the simulation of multimeters, all of them generalizing measurement compatibility in different ways. The first way in which multimeters can simulate other multimeters is by classical means, see e.g. \cite{guerini2017operational, oszmaniec2017simulating, filippov2018simulability,Oszmaniec2019}. In these works, a collection of measurements can be classically simulated by other measurements by randomly selecting measurements from the simulating set and then classically postprocessing their outcomes. Thus, one can for example ask the question whether a given collection of measurements can be performed using a smaller number of measurements or measurements with less outcomes, thereby simplifying the task. Compatible measurements are then the measurements which can be simulated from one measurement alone and are in a sense as simple as possible. We call this scenario \emph{classical simulation} of multimeters. 

Instead of using postprocessing, we can instead consider a simulation in which the quantum state to be measured can be preprocessed with the help of a quantum instrument to reduce the dimension of the quantum input, partially converting it to classical information. The original collection of measurements one would like to perform is then simulated by performing measurements on this smaller quantum system, possibly using the classical side information in the process.  We call this scenario the \emph{compression} scenario. It has recently been considered in \cite{IoannouSimulability2022,Jones2023,Jokinen2023}. Again, compatible measurements represent an extreme case of this procedure: instead of conserving any quantum system, the joint measurement is performed on the quantum input, thereby destroying it completely. The simulation now consists of classical postprocessing of the outputs of the joint measurements, thereby obtaining the desired output statistics. The simulation of compatible measurements is therefore also in this framework as simple as possible. In addition to generalizing compatibility, compressibility (also called high-dimensional simulability) is shown to be equivalent to high-dimensional steering \cite{Jones2023}.

Finally, we can combine pre- and postprocessing in order to simulate multimeters. This has been done in \cite{buscemi2020complete}. Here, the authors argue that a simulation should preserve the compatibility of measurements, i.e., a multimeter consisting of compatible measurements can only be used to simulate compatible measurements. Note that although the simulation scheme put forward in \cite{buscemi2020complete} preserves compatibility, the authors do not claim that it is the most general scheme which has this property. We call their setup \emph{compatibility-preserving simulation}.

As we have demonstrated, there is no single agreed-upon notion of what it means to simulate a multimeter by another one. The first two notions of simulations are clearly incomparable, as one is only interested in the number of measurements and the classical information resulting from them, whereas the other focuses on the dimension of the quantum input, treating the classical information practically as free. One way of unifying them could be to simply combine them, using compression of the quantum input and classical postprocessing of the output to simulate. In some sense, this is what compatibility-preserving simulation does, but one can imagine even broader notions of multimeter simulation. 

Therefore, our aim is to find the most general definition possible of what it means to  simulate a multimeter by another multimeter. Our article is organized as follows. In Sec.\  \ref{sec:main-results}, we present the main results of our work. In Sec.\ \ref{sec:fundamental-quantum-devices} we collect the necessary preliminaries concerning states, measurements and channels 
in quantum mechanics and set up our framework to study multimeters. Next, in Sec.\  \ref{sec:transformations}, we characterize the transformations between multimeters as quantum supermaps. In order to understand these transformations better, we discuss in Sec.\  \ref{sec:previous-simulations} which kind of operations are encompassed by these transformation between multimeters, including different notions of multimeter simulation that have been proposed in previous work. Subsequently, we give in Sec.\  \ref{sec:multimeter-simulation} our new definition of which transformations should be allowed for a non-trivial notion of multimeter simulation, arguing that multimeter simulations should be the triviality-preserving transformations. In the absence of a quantum ancilla, we characterize the triviality-preserving transformations and the transformations which act as trash-and-prepare maps, i.e., which always simulate the same multimeter. We conclude the section by comparing our results to the previous notions of multimeter simulation. Finally, we end with an outlook in Sec.\  \ref{sec:discussion}.

\section{Main results} \label{sec:main-results}
In this section, we present the main results of this work. The objects we are concerned with are collections of measurements, called \emph{multimeters}. A multimeter $M$ is therefore a (finite) set $\{M_{\cdot|x}\}_{x \in [g]}$ of positive operator-valued measures (POVMs) $M_{\cdot|x} = \{M_{a|x} \}_{a \in [k]}$. A multimeter can be seen as a quantum channel where all the classical information about the measurements involved and their outcomes can be embedded in suitable quantum systems. 

Our first result is a characterization of transformations between multimeters as quantum channels. The transformations we allow are completely positive maps that map Choi matrices of multimeters to Choi matrices of other multimeters. The informal version of our result is the following (see Thm.\  \ref{thm:multimeter-transformation-realization} for the formal version):

\begin{thm*}
For any transformation $\Psi$ which maps multimeters of $g$ POVMs each with $k$ outcomes on a $d$-dimensional quantum system to multimeters of $r$ POVMs each with $l$ outcomes on an $n$-dimensional quantum system, there exist an ancillary system $\complex^s$, completely positive maps $\Lambda_{x|y}$ which form an instrument for any choice of $y$, and a set of POVMs $B = \{B_{\cdot|a,x,y} \}_{a \in [k], x \in [g], y \in [r]}$ such that $\Psi$ maps 
\begin{equation*}
   (M, y) \mapsto \sum_{x=1}^g \sum_{a=1}^k \Lambda^*_{x|y}(M_{a|x} \otimes B_{\cdot |a,x,y})\, .
\end{equation*}
In the Schrödinger picture, this means that the simulated POVMs $\{N_{\cdot|y}\}_{y \in [r]}$ arise from $M$ as 
\begin{equation*}
   \Tr[N_{b|y} \varrho] = \sum_{x=1}^g \sum_{a=1}^k \Tr\left[B_{b |a,x,y} \Tr_{\complex^d}[ (M_{a|x} \otimes \id_s)  \Lambda_{x|y}(\varrho)] \right]
\end{equation*}
for all quantum states $\varrho$.
\end{thm*}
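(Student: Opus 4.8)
The plan is to identify each multimeter with a measure-and-prepare channel and then invoke the realization theorem for quantum supermaps. First I would encode the input multimeter $M = \{M_{a|x}\}$ as the channel that, on a classical setting $x$ stored in a register in the computational basis together with a quantum state $\varrho \in \complex^d$, outputs the classical distribution $\sum_a \Tr[M_{a|x}\varrho]\,\ket{a}\bra{a}$; its Choi matrix carries exactly the data of $M$, and the POVM normalization $\sum_a M_{a|x} = \id_d$ is equivalent to this map being trace-preserving on the quantum register for each fixed $x$. Since $\Psi$ is given as a completely positive map on Choi matrices, the Choi--Jamio\l kowski correspondence turns it into a deterministic supermap sending the $M$-channel (with input registers $\complex^g\otimes\complex^d$ and output $\complex^k$) to the $N$-channel (with input $\complex^r\otimes\complex^n$ and output $\complex^l$).

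Next I would apply the standard dilation theorem for superchannels: any completely positive transformation mapping channels to channels factors sequentially as a pre-processing channel acting on the target input together with a fresh ancilla $\complex^s$, followed by one use of the input channel, followed by a post-processing channel that also consumes the ancilla. The essential extra structure here is that both channels carry classical registers --- the settings $x,y$ and the outcomes $a,b$ --- which are prepared and read out in fixed bases. I would resolve the pre-processing channel with respect to the classical input $y$ and the classical setting $x$ that it feeds into $M$; this turns it into a family of completely positive maps $\Lambda_{x|y}\colon \complex^n \to \complex^d\otimes\complex^s$, and the requirement that $\Psi$ send channels to channels forces $\sum_x\Lambda_{x|y}$ to be trace-preserving for every $y$, i.e.\ $\{\Lambda_{x|y}\}_x$ is an instrument. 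Resolving the post-processing with respect to the outcome $a$ of $M$, the classical data $(x,y)$, and the final outcome $b$ similarly yields POVMs $B_{\cdot|a,x,y}$ on the ancilla $\complex^s$.

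From this decomposition the Heisenberg-picture formula reads off directly: the dual instrument $\Lambda^*_{x|y}$ pulls back the effect $M_{a|x}$ on $\complex^d$ together with the post-processing effect $B_{b|a,x,y}$ on $\complex^s$, and summing over the intermediate classical data $a,x$ gives $N_{\cdot|y} = \sum_{x,a}\Lambda^*_{x|y}(M_{a|x}\otimes B_{\cdot|a,x,y})$. Dualizing into the Schr\"odinger picture --- so that $\Lambda_{x|y}$ now acts first on the input state $\varrho$, $M_{\cdot|x}$ is then measured on the $\complex^d$ factor via $(M_{a|x}\otimes\id_s)$ and $\complex^d$ is traced out, and finally $B_{\cdot|a,x,y}$ is measured on the retained ancilla --- reproduces the stated trace expression for $\Tr[N_{b|y}\varrho]$.

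The main obstacle I anticipate is not the existence of the comb decomposition, which the supermap realization theorem guarantees, but the bookkeeping required to show that the generic pre- and post-processing channels split correctly along the classical registers. Concretely, one must verify that conditioning the pre-processing on the classical input $y$ and the emitted setting $x$ genuinely produces completely positive maps summing to a channel (the instrument property), and that the normalization inherited from $\Psi$ mapping channels to channels is exactly what enforces the POVM constraint $\sum_b B_{b|a,x,y} = \id_s$. Checking that the computational-basis registers of the abstract comb are matched consistently with the explicit multimeter data is where the care lies.
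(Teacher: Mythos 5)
Your very first step fails under the theorem's actual hypothesis, and it is precisely the step you declare to be ``guaranteed''. The theorem assumes only that $\Psi$ is a CP map with $\Psi(\mathcal{J}(\M\M(g,k,d)))\subseteq\mathcal{J}(\M\M(r,l,n))$; it is \emph{not} assumed that $\Psi$ maps Choi matrices of arbitrary channels in $\qc{dg}{k}$ to Choi matrices of channels. Multimeter Choi matrices all have the block-diagonal form $\sum_{a,x}\ketbra{a}{a}\otimes M^T_{a|x}\otimes\ketbra{x}{x}$, so the hypothesis constrains $\Psi$ only on this low-dimensional subspace, and preservation of this subset does not promote $\Psi$ to a superchannel. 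A concrete counterexample with $d=g=n=r=1$ and $k=l=2$ (so multimeter Choi matrices are the diagonal $2\times2$ density matrices, while channel Choi matrices are \emph{all} $2\times2$ density matrices): the map $\Psi(W):=\bra{+}W\ket{+}\,\id_2$ is CP and sends every diagonal density matrix to the diagonal density matrix $\frac{1}{2}\id_2$, yet $\Psi(\ketbra{+}{+})=\id_2$ has trace $2$, so $\Psi$ is not a superchannel and the Chiribella--D'Ariano--Perinotti realization theorem you invoke simply does not apply to it. The existence of the comb decomposition is therefore the missing idea here, not the bookkeeping you identified as the main obstacle.

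The gap is repairable, but it requires an argument you do not give: precompose $\Psi$ with the pinching (complete dephasing) $\mathcal{P}$ on the classical registers $\complex^k$ and $\complex^g$. A short computation shows that $\mathcal{P}$ sends every channel Choi matrix to a multimeter Choi matrix (the pinched blocks are positive, and the trace-preservation condition $\Tr_{\complex^k}[J]=\id_{dg}$ turns into the POVM normalization $\sum_a M_{a|x}=\id_d$), hence $\Psi\circ\mathcal{P}$ \emph{is} an honest superchannel agreeing with $\Psi$ on all multimeter Choi matrices, and to this map your realization-plus-bookkeeping plan can be applied; a minor further remark is that the POVMs extracted from the generic post-processing depend only on $a$ (the dependence on $x,y$ must be routed through the ancilla), which is a harmless special case of the claimed form. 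Even with this repair, your route is genuinely different from the paper's: the paper never invokes the superchannel realization theorem, but instead slices $\Psi$ into conditional CP maps $\Psi^*_{b,x|a,y}$ using the classical block structure, feeds in extremal (trivial) multimeters to derive the normalization and $a$-independence identities, and then constructs $(\Lambda,B)$ from a Stinespring dilation combined with a Radon--Nikodym theorem for CP maps. As submitted, however, your proposal has a genuine hole at its foundation.
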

That is, to simulate the multimeter $N$ on an input state, first a conditional instrument $\{\Lambda_{x|y}\}$ is performed on the state that might depend on the measurement $y$ to be simulated. The choice of $y$ means that the multimeter $N$ performs the measurement $\{N_{\cdot|y}\}$ on the input state. The instrument  has a classical outcome $x$ and outputs a quantum system. Depending on its outcome $x$, the measurement $M_{\cdot|x}$ is performed on part of the quantum output of the instrument, giving classical outcome $a$. Finally, another measurement $B_{\cdot|a,x,y}$ is performed on the remaining quantum system, which can depend on all the available classical information. We have illustrated this in Fig.\ \ref{fig:multimeter-simulation-intro}.

\begin{figure}[htb]
    \centering
    \includegraphics{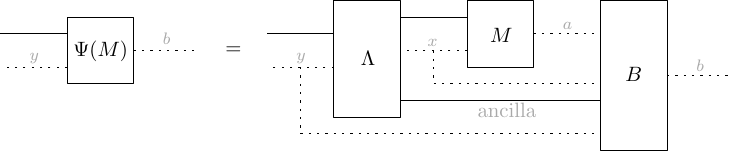}
    \caption{A multimeter $M$ is transformed using instruments $\Lambda_{\cdot|y}$ and a postprocessing $B_{\cdot|a,x,y}$. Quantum systems are depicted by solid lines, while classical systems are represented by dotted wires. Note the \emph{quantum ancilla} wire connecting the multi-instrument $\Lambda$ and the multimeter $B$.}
    \label{fig:multimeter-simulation-intro}
\end{figure}

However, not every transformation between multimeters can be considered a simulation of one multimeter by another. The choice of which transformations one would like to rule out depends on the application one has in mind. In this work, we take the perspective of an experimenter who has a multimeter at her disposal and wonders which other multimeters she can implement with it. Therefore, she wants to use at least some part of the simulating devices in the simulation process instead of just ignoring them. Hence, the transformations we want to rule out are the ones in which any multimeter is replaced by the same fixed multimeter. We call such transformations \emph{trash-and-prepare}, because they just throw away the multimeter and replace it by another. In the case when the ancilla in Fig.\ \ref{fig:multimeter-simulation-intro} is classical, we can characterize these trash-and-prepare transformations. In this case, the instrument has another classical output $\lambda$ and the measurements $B_{\cdot|a,x,y}$ are just a collection of probability distributions $\nu= \{\nu_{\cdot|a,x,y, \lambda}\}_{a ,x ,y, \lambda }$. We have depicted this in Fig.\ \ref{fig:ancilla-free-simulation-intro}.
\begin{figure}[htb]
    \centering
    \includegraphics{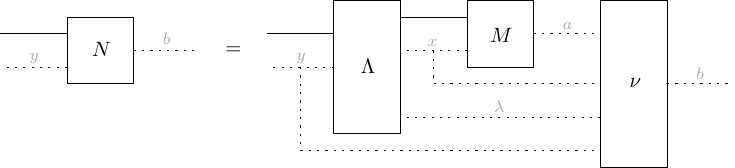}
    \caption{The simulation of multimeter $N$ by the multimeter $M$ admits a realization with a classical ancilla represented by $\lambda \in [s]$. Compare with the general case in Fig.\ \ref{fig:multimeter-simulation-intro}, and notice that in this case the postprocessing $\nu$ and the ancilla $\lambda$ are classical.}
    \label{fig:ancilla-free-simulation-intro}
\end{figure}

We find in Thm.\  \ref{thm:ancilla-free-tap-maps}:
\begin{thm*}
    Let us consider a quantum superchannel $\Psi$ between multimeters that admits a realization $(s, \Lambda, \tilde \nu)$ with a classical ancilla. Then $\Psi$ is trash-and-prepare if and only there exists a possibly different realization $(s, \Lambda, \nu)$ such that all the conditional probability distributions in $ \nu= \{ \nu_{\cdot|a,x,y, \lambda}\}_{a ,x ,y, \lambda}$ are independent of $a$. If $s=1$ (there is not even a classical ancilla), we can take $\nu= \tilde \nu$.
\end{thm*}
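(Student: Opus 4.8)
The plan is to work in the Heisenberg picture with the realization guaranteed by Thm.~\ref{thm:multimeter-transformation-realization}, specialized to a classical ancilla: for each label $y$ the instrument consists of completely positive maps $\Lambda_{x,\lambda|y}$ indexed by the outcome $x\in[g]$ and the classical ancilla value $\lambda\in[s]$, with $\sum_{x,\lambda}\Lambda_{x,\lambda|y}$ trace-preserving, and the postprocessing is a family of probability distributions $\tilde\nu_{\cdot|a,x,y,\lambda}$. The simulated POVMs are then
\begin{equation*}
N_{b|y}=\sum_{x=1}^{g}\sum_{a=1}^{k}\sum_{\lambda=1}^{s}\tilde\nu_{b|a,x,y,\lambda}\,\Lambda^{*}_{x,\lambda|y}(M_{a|x}).
\end{equation*}
First I would dispatch the easy implication. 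If a realization $(s,\Lambda,\nu)$ has every $\nu_{\cdot|a,x,y,\lambda}$ independent of $a$, write it as $\nu_{b|x,y,\lambda}$ and pull the sum over $a$ inside the linear map $\Lambda^{*}_{x,\lambda|y}$, using the POVM normalization $\sum_a M_{a|x}=\id$. This gives $N_{b|y}=\sum_{x,\lambda}\nu_{b|x,y,\lambda}\,\Lambda^{*}_{x,\lambda|y}(\id)$, which no longer depends on $M$; hence $\Psi$ sends every multimeter to the same fixed $N$, i.e.\ it is trash-and-prepare. One also checks $\sum_b N_{b|y}=\id$ using unitality of the dual instrument, so $N$ is a genuine multimeter.

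For the converse, assume $\Psi$ is trash-and-prepare, so the displayed expression for $N_{b|y}$ is constant over all input multimeters $M$. The key move is to evaluate it on the trivial multimeter $M_{a|x}=\tfrac1k\id$ and define the averaged postprocessing
\begin{equation*}
\nu_{b|x,y,\lambda}:=\frac1k\sum_{a=1}^{k}\tilde\nu_{b|a,x,y,\lambda},
\end{equation*}
which is manifestly independent of $a$ and is a valid probability distribution in $b$, being a convex combination of the $\tilde\nu_{\cdot|a,x,y,\lambda}$. Because $\nu$ does not depend on $a$, the computation from the easy direction shows that $(s,\Lambda,\nu)$ produces the constant multimeter $\sum_{x,\lambda}\nu_{b|x,y,\lambda}\Lambda^{*}_{x,\lambda|y}(\id)$; evaluating the original expression at the trivial multimeter shows this constant equals the fixed output of $\Psi$. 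Since both realizations send every input multimeter to the same fixed $N$, they realize the same (trash-and-prepare) superchannel $\Psi$, now with an $a$-independent postprocessing, as desired.

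It remains to treat the sharper claim for $s=1$, which I expect to be the main obstacle, since here we may not modify $\nu$ at all and must instead show $\tilde\nu$ is already $a$-independent. With no ancilla the constraint reads $N_{b|y}=\sum_{x,a}\tilde\nu_{b|a,x,y}\Lambda^{*}_{x|y}(M_{a|x})=\mathrm{const}$. I would exploit the freedom to perturb a single POVM: fixing $x_0$ and taking $M_{\cdot|x_0}$ near the trivial measurement, replace $M_{a|x_0}\mapsto M_{a|x_0}+t\delta_a$ with $\sum_a\delta_a=0$, which stays a valid POVM for small $t$. Constancy of $N_{b|y}$ forces $\sum_a\tilde\nu_{b|a,x_0,y}\,\Lambda^{*}_{x_0|y}(\delta_a)=0$; choosing $\delta_{a_1}=X$, $\delta_{a_2}=-X$, and $\delta_a=0$ otherwise yields $(\tilde\nu_{b|a_1,x_0,y}-\tilde\nu_{b|a_2,x_0,y})\,\Lambda^{*}_{x_0|y}(X)=0$ for every Hermitian $X$. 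On any branch with $\Lambda_{x_0|y}\neq0$ this pins $\tilde\nu_{b|a,x_0,y}$ to be independent of $a$; branches with $\Lambda_{x_0|y}=0$ contribute nothing to $N$ and may be pruned, so after removing such vacuous branches we may take $\nu=\tilde\nu$. The delicate point to get right is precisely this support bookkeeping: ensuring that the only way the perturbation argument fails to determine $\tilde\nu$ is on branches that never occur, which can therefore be discarded without altering $\Psi$.
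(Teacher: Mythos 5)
Your proof is correct, and for the general classical-ancilla case it takes a genuinely shorter route than the paper's. The paper first establishes a structural identity (its Eq.~\eqref{eq:t-a-p-cond}): by feeding in POVMs of the form $M_{a'_{x'}|x'}=B$, $M_{a_{x'}|x'}=\id-B$ for an arbitrary effect $B$ and using that effects span $\M(\complex)_d$, it shows that $\sum_{\lambda}\tilde\nu_{b|a,x,y,\lambda}\Lambda^*_{x,\lambda|y}$ is independent of $a$ for every $b,x,y$; only then does it introduce the averaged postprocessing $\nu_{b|a,x,y,\lambda}=\frac1k\sum_{a'}\tilde\nu_{b|a',x,y,\lambda}$ and verify, via that identity, that $(s,\Lambda,\nu)$ realizes $\Psi$. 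You introduce the same averaged $\nu$ but bypass the identity entirely: since $\nu$ is $a$-independent, the new realization is automatically constant in $M$ by POVM normalization, and its constant value is pinned to the fixed output $N$ of $\Psi$ by a single evaluation of the original realization at the uniform trivial multimeter $M_{a|x}=\frac1k\id$, which literally coincides with the averaged expression. This is arguably the cleanest argument for the general case; what it gives up is the pointwise information in Eq.~\eqref{eq:t-a-p-cond}, which the paper reuses to settle the $s=1$ refinement instantly, whereas you need a separate first-order perturbation argument ($M_{a|x_0}\mapsto\frac1k\id+t\delta_a$ with $\sum_a\delta_a=0$) --- though that computation is essentially the paper's identity derivation in disguise, so no real extra machinery is involved. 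A further point in your favour: you correctly flag the degenerate branches with $\Lambda_{x_0|y}=0$, on which the trash-and-prepare condition forces nothing about $\tilde\nu$; the paper's assertion that Eq.~\eqref{eq:t-a-p-cond} ``already implies'' $a$-independence of $\tilde\nu$ when $s=1$ silently assumes every $\Lambda^*_{x|y}\neq0$. Only your fix is phrased loosely: one cannot ``prune'' outcomes from an instrument, since it must remain indexed by $[g]$; the precise repair is to redefine $\tilde\nu_{\cdot|a,x_0,y}$ on null branches (e.g.\ by its $a$-average), which changes nothing in the realized map, so that $\nu=\tilde\nu$ holds verbatim exactly when all branch maps are nonzero.
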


Our result can be intuitively illustrated: a transformation $\Psi$ (that admits a realization with a classical ancilla) is trash-and-prepare if and only if there exists a realization as in Fig.\ \ref{fig:ancilla-free-simulation-intro} such that the classical wire $a$ between $M$ and $\nu$ can be cut without changing the map. If this is the case the outcome $a$ of the multimeter $M$ can be simply discarded and the postprocessing $\nu$ is not affected by $a$. Thus, in the end a fixed multimeter is applied irrespective of the input multimeter $M$. We note that our result is constructive so that it also gives the recipe for the (possibly different) postprocessing $\nu$.

Finally, we introduce in our article our definition of what it means to simulate a multimeter by another one. To this end, we consider multimeters of \emph{trivial measurements} which do not depend on the input quantum states, i.e., where $M_{a|x} = p_{a|x} \mathds{1}$ for some probability distributions $p_{\cdot|x}$. We want to call a simulation a transformation that cannot map trivial multimeters to non-trivial ones. Our reasoning is that trivial multimeters discard the quantum state without measuring it. Hence if a transformation maps trivial to non-trivial multimeters, it means that at least one additional device that extracts information from the quantum state is needed. This argument is similar to the idea behind compatibility-preserving simulations, but the property of the multimeters we seek to preserve is much more basic.
\begin{defi*}[Simulation of multimeters]
     A \emph{simulation of multimeters} is a transformation between multimeters, i.e., a quantum superchannel between multimeters, that is \emph{triviality-preserving} in the sense that whenever the input multimeter consists of only trivial POVMs, then the multimeter simulated by $\Psi$ corresponds to a multimeter that only consists of trivial POVMs as well.
\end{defi*}

In the case where the transformation $\Psi$ has an ancilla-free realization (i.e., $s=1$), we can characterize the transformations that are simulations of multimeters in the sense of this new definition. The following result can be found as part of Thm.\  \ref{thm:ancilla-free-tp-maps}:

\begin{thm*}
    Let us consider a quantum superchannel $\Psi$ between multimeters that admits an ancilla-free realization $(\tilde \Lambda, \nu)$. Then $\Psi$ is triviality-preserving if and only if $\Psi$ admits a possibly different ancilla-free realization $(\Lambda, \nu)$ such that $\Lambda_{x | y} = \pi_{x|y} \Phi_{x,y}$ for all $x ,y$ for some conditional probability distribution $\pi = (\pi_{\cdot|y})_{y}$ and a family of \emph{quantum channels} $\{\Phi_{x,y}\}_{x ,y}$.
\end{thm*}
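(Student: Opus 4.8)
The plan is to work entirely in the Heisenberg picture, where the given ancilla-free realization $(\tilde\Lambda,\nu)$ sends an input multimeter $M$ to the multimeter $N$ with
\[
N_{b|y}=\sum_{x=1}^{g}\sum_{a=1}^{k}\nu_{b|a,x,y}\,\tilde\Lambda^*_{x|y}(M_{a|x}),
\]
and where, since $\{\tilde\Lambda_{x|y}\}_x$ is an instrument for each $y$, the dual maps obey the unitality relation $\sum_x\tilde\Lambda^*_{x|y}(\id_d)=\id_n$. The backward implication is immediate and I would dispatch it first: if $\tilde\Lambda_{x|y}=\pi_{x|y}\Phi_{x,y}$ with each $\Phi_{x,y}$ a channel, then $\Phi^*_{x,y}(\id_d)=\id_n$, so a trivial input $M_{a|x}=p_{a|x}\id_d$ yields $N_{b|y}=\big(\sum_{x,a}\nu_{b|a,x,y}\,p_{a|x}\,\pi_{x|y}\big)\id_n$, which is again trivial. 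Hence the content lies in the forward (necessity) direction.

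For necessity I would record the positive operators $T_{x|y}:=\tilde\Lambda^*_{x|y}(\id_d)$ and split off their traceless parts, writing $T_{x|y}=t_{x|y}\id_n+S_{x|y}$ with $t_{x|y}=\Tr[T_{x|y}]/n$ and $\Tr[S_{x|y}]=0$; the unitality relation gives $\sum_x t_{x|y}=1$ and $\sum_x S_{x|y}=0$. Evaluating the realization on a trivial multimeter $M_{a|x}=p_{a|x}\id_d$ and keeping only the traceless part, triviality-preservation becomes equivalent to $\sum_x\big(\sum_a\nu_{b|a,x,y}\,p_{a|x}\big)S_{x|y}=0$ for every family of probability distributions $p_{\cdot|x}$ and all $b,y$. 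As the left-hand side is affine on the product of simplices, it suffices to impose it at the vertices, i.e.\ $\sum_x\nu_{b|a_x,x,y}S_{x|y}=0$ for all $(a_x)_x\in[k]^g$. Subtracting two such identities differing in a single coordinate $x_0$ gives $(\nu_{b|a,x_0,y}-\nu_{b|a',x_0,y})S_{x_0|y}=0$, so for each pair $(x,y)$ either $S_{x|y}=0$ (equivalently $T_{x|y}=t_{x|y}\id_n$), or the weights $\nu_{b|a,x,y}$ do not depend on $a$ for any $b$. Substituting back leaves the residual constraint $\sum_{x:\,S_{x|y}\neq 0}\nu_{b|x,y}S_{x|y}=0$ for all $b$, where $\nu_{b|x,y}$ denotes the common value of $\nu_{b|a,x,y}$ over $a$.

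With this dichotomy I would build the product-form realization keeping the \emph{same} postprocessing $\nu$. Set $\pi_{x|y}:=t_{x|y}$, which is a conditional probability distribution. On indices with $S_{x|y}=0$ and $t_{x|y}>0$ the map $\Phi_{x,y}:=t_{x|y}^{-1}\tilde\Lambda_{x|y}$ is already a channel, since its dual sends $\id_d$ to $\id_n$, so $\Lambda_{x|y}:=\pi_{x|y}\Phi_{x,y}=\tilde\Lambda_{x|y}$ is unchanged (and when $t_{x|y}=0$ one has $\tilde\Lambda_{x|y}=0$, so any channel works); on the remaining indices, where $\nu$ is $a$-independent, I take $\Phi_{x,y}$ to be an arbitrary channel. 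To verify $(\Lambda,\nu)$ realizes the same $\Psi$, I compute the difference $N_{b|y}-N'_{b|y}$ termwise: it vanishes on the first family of indices, while on the second the $a$-independence of $\nu$ together with $\sum_a M_{a|x}=\id_d$ collapses each term to $\nu_{b|x,y}\big(T_{x|y}-t_{x|y}\id_n\big)=\nu_{b|x,y}S_{x|y}$, and these sum to zero by the residual constraint. Finally $\{\Lambda_{x|y}\}_x$ is a legitimate instrument because $\sum_x\pi_{x|y}\Phi_{x,y}$ is a convex combination of channels. The main obstacle is precisely this forward direction: one must recognize that triviality-preservation forces the postprocessing to \emph{forget} the outcome $a$ exactly on the branches where $\tilde\Lambda^*_{x|y}(\id_d)$ fails to be a multiple of the identity, and that on those branches the contribution to $N$ is a fixed, identity-proportional operator, so the offending instrument components can be replaced by product-form channels without changing the map.
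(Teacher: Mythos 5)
Your proof is correct and follows essentially the same route as the paper's: you reduce to extremal (deterministic) trivial multimeters, use the single-coordinate differencing trick to obtain the dichotomy that for each $(x,y)$ either $\tilde\Lambda^*_{x|y}(\id_d)\propto\id_n$ or $\nu_{\cdot|a,x,y}$ is independent of $a$, and then repair the instrument on the degenerate branch while keeping $\nu$ fixed, checking invariance of the supermap via the residual constraint. Your traceless-part bookkeeping ($T_{x|y}=t_{x|y}\id_n+S_{x|y}$) and your direct jump to the product form $\Lambda_{x|y}=\pi_{x|y}\Phi_{x,y}$ (rather than passing through the partially-normalized intermediate condition) are only presentational differences from the paper's argument.
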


\begin{figure}[htb]
    \centering
    \includegraphics{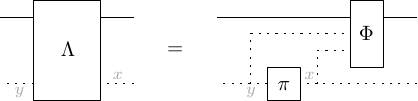}
    \caption{A multi-instrument $\Lambda$ that factorises and induces a triviality-preserving multimeter transformation $\Psi$.}
    \label{fig:Lambda-factorises-triviality-preserving-intro}
\end{figure}

Our result is intuitively illustrated in Fig.~\ref{fig:Lambda-factorises-triviality-preserving-intro}: a tranformation $\Psi$ (that admits an ancilla-free realization) is triviality-preserving if and only if there exists a realization $(\Lambda, \nu)$ such that the preprocessing part $\Lambda$ factorizes as in  Fig.~\ref{fig:Lambda-factorises-triviality-preserving-intro} into just probabilistically applying some set of channels instead of some general instruments.

In conclusion, we have introduced the triviality-preserving transformations between multimeters as the most general reasonable definition of quantum simulation and fully characterized such simulations in the ancilla-free case. This article is therefore the starting point for the further exploration of such simulations of multimeters, especially when we allow a quantum ancilla.

\section{Fundamental quantum devices} \label{sec:fundamental-quantum-devices}

Quantum theory is an operational theory meaning that it can be described by its primitives which are physical devices: \emph{state preparators}, \emph{measurement devices} and \emph{transformations}.  Together they can be used to conduct physical experiments giving information about the systems described by the theory. We will start by recalling the mathematical description of these primitives in quantum theory. See \cite{Heinosaari2011, watrous2018theory} for a more detailed introduction to the formalism.

\subsection{States, measurements, transformations}

Let $d \in \nat$ and let us denote $[d] := \{1, \ldots, d\}$.  We denote the set of complex $d \times d$ matrices by $\M(\complex)_d$ and its subset of self-adjoint (Hermitian) matrices by $\sa{d}$. The states $\qs{d}$ of a $d$-dimensional quantum system are represented by the set of positive-semidefinite matrices in $\sa{d}$ with trace one, i.e., 
\begin{equation}
\qs{d} := \{ \varrho \in \sa{d} \, : \, \varrho \geq 0, \Tr[\varrho] =1 \} \, .
\end{equation}
The elements in $\qs{d}$ are also called \emph{density matrices}. 

A transformation between two quantum systems with density matrices $\qs{d}$ and $\qs{n}$, respectively, is described by a \emph{(quantum) channel} $\Phi$ which is taken to be a \emph{completely positive (CP)} and \emph{trace-preserving (TP)} linear map $\Phi: \M(\complex)_d \to \M(\complex)_n$ meaning that $\Phi \otimes id_{d'}: \M(\complex)_d \otimes \M(\complex)_{d'} \to \M(\complex)_n \otimes \M(\complex)_{d'}$ is positive for all $d' \in \nat$ (CP) and that $\Tr[\Phi(X)] = \Tr[X]$ for all $X \in \M(\complex)_d$ (TP). The set of quantum channels between systems $\qs{d}$ and $\qs{n}$ is denoted by $\qc{d}{n}$. A completely positive map $\Psi: \M(\complex)_d \to \M(\complex)_n$ that is not trace-preserving but only trace-nonincreasing (TNI), i.e. $\Tr[\Psi(X)] \leq \Tr[X]$ for all $X \in  \M(\complex)_d$, is called a \emph{(quantum) operation} and is interpreted as a probabilistic transformation where the transformation probability of a state $\varrho \in \qs{d}$ is given by $\Tr[\Psi(\varrho)]$. 

Measurements on a $d$-dimensional quantum system can be described by using \emph{effect operators}, i.e., positive elements in $\sa{d}$ bounded above by the identity matrix $\id_d$ so that the set of effects $\qe{d}$ is then
\begin{equation}
\qe{d} := \{ E \in \sa{d} \, : \, 0 \leq E \leq \id_d \} \, .
\end{equation}
A measurement (or a meter) with $k \in \nat$ outcomes (where we assume that $k < \infty$ for simplicity) now corresponds to a \emph{positive operator-valued measure (POVM)} $M: j \mapsto M_j$ from $[k]$ to the set of effects $\qe{d}$ such that $\sum_{j =1}^k M_j = \id_d$. The set of $k$-outcome POVMs on $\qs{d}$ is denoted by $\qm{k}{d}$ and the set of all POVMs (with finite outcomes) on $\qs{d}$ is denoted by $\qmall{d}$.  The probability that an outcome $j \in [k]$ is obtained in a measurement of a POVM $M \in \qm{k}{d}$ on a quantum system in state $\varrho \in \qs{d}$ is given by the Born rule as $\Tr[M_j \varrho]$. 

If we consider several measurements, usually not all of them can be measured at the same time, for example, if the effects consist of projections which do not commute. If simultaneous measurement is possible, the measurements are called \emph{compatible} or \emph{jointly measurable} (see \cite{Heinosaari2016} and \cite{Guhne2023} for reviews on joint measurability).
\begin{defi} \label{def:compatible-measurements}
Let $\{E_{\cdot|x}\}_{x \in [g]} \subset \qm{k}{d}$ form a collection of POVMs. These POVMs are \emph{compatible} or \emph{jointly measurable} if there is some $\Lambda \in \mathbb N$ and a POVM $M \in \qm{\Lambda}{d}$ such that
\begin{equation*}
E_{a|x} = \sum_{\lambda=1}^\Lambda p_{a|x,\lambda}M_\lambda
\end{equation*}
for all $x \in [g]$, $a \in [k]$ and some conditional probability distribution $p:= (p_{\cdot|x,\lambda})_{x \in [g], \lambda \in [\Lambda]}$ on $[k]$.
\end{defi}

The interpretation behind compatibility as performing a joint measurement comes from the concept of postprocessing:
\begin{defi}
    A POVM $N \in \qm{l}{d}$ is said to be a \emph{postprocessing} of a POVM $M \in \qm{k}{d}$ if there exists a conditional probability distribution $\mu := (\mu_{\cdot|a})_{a \in [k]}$ on $[l]$ such that $N_b = \sum_{a=1}^k\mu_{b|a} M_a$ for all $b \in [l]$. In this case we denote that $N = \mu \circ M$.
\end{defi}
The interpretation of postprocessing is that if we measure $M$ and obtain an outcome $a$ then  $\mu_{b|a}$ describes the probability of assigning an outcome $b$ instead. Thus, postprocessing describes a classical manipulation of measurement outcomes including merging and splitting different outcomes. Hence, for compatible POVMs  $\{E_{\cdot|x}\}_{x \in [g]} \subset \qm{k}{d}$ we can always find a joint POVM $M \in \qm{\Lambda}{d}$ for some $\Lambda \in \nat$ from which every POVM $E_{\cdot|x}$ can be postprocessed with the conditional probability distributions $p^{(x)} := (p_{\cdot|x,\lambda})_{\lambda \in \Lambda}$ so that $E_{a|x} = (p^{(x)} \circ M)_{a}$ for all $x \in [g]$ and $a \in [k]$.

A measurement device which does not only produce a classical measurement outcome (as measurements described by POVMs do) but also includes the description of the transformation of the measured state is described by a \emph{(quantum) instrument}. A $k$-outcome quantum instrument between $\qs{d}$ and $\qs{n}$ is an operation-valued measure $\Lambda: j \mapsto \Lambda_j$ from $[k]$ to the set of quantum operations between $\qs{d}$ and $\qs{n}$ such that $\Phi^{\Lambda} := \sum_{j=1}^k \Lambda_j$ is a quantum channel in $\qc{d}{n}$. If the system is initially in a state $\varrho \in \qs{d}$, then the (unnormalized) conditional postmeasurement state is $\Lambda_j(\varrho)$, where $j$ is the outcome obtained in the measurement of the \emph{induced POVM} $M^\Lambda$ defined as $\Tr[M^\Lambda_j \varrho] = \Tr[\Lambda_j(\varrho)]$ for all $\varrho \in \qs{d}$. Thus, $M^\Lambda_j = \Lambda^\ast_j(\id_n)$, where $\Lambda^\ast_j$ is the dual map of $\Lambda_j$. The set of instruments from $\qs{d}$ to $\qs{n}$ with $k$ outcomes is denoted by $\qi{k}{d}{n}$.

\subsection{Quantum devices as channels}

All of the previously discussed quantum devices have some number of classical and quantum inputs and outputs: a state preparator is a device with no inputs and one quantum output, a channel is a device with one quantum input and output, the measurement of a POVM corresponds to a device with quantum input and a classical output and an instrument takes a quantum input and produces both a quantum and a classical output. These are depicted in Fig.~\ref{fig:devices}.

\begin{figure}
    \centering
    \includegraphics{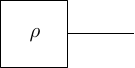}\qquad
    \includegraphics{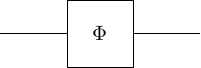}\qquad
    \includegraphics{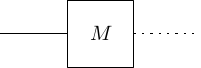}\qquad
    \includegraphics{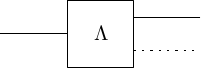}
    
    \caption{Pictorial representation of quantum devices. From left to right: a quantum state (preparator), a quantum channel, a measurement, and an instrument. Diagrams are to be read from left to right. Quantum systems are depicted by solid lines, while classical systems are represented by dotted lines.}
    \label{fig:devices}
\end{figure}

For our purposes it is convenient to consider all of them as channels where the additional knowledge that some of the inputs or outputs are classical gives us more constraints on the specific structure of the channel. In order to keep the mathematical treatment similar for all of the devices, we consider the classical systems to be embedded in a quantum system in the usual way: if $[k]$ is the classical set of indices, then the classical input/output $j \in [k]$ can be described by $\ketbra{j}{j} \in \qs{k}$, where now $\{\ket{j}\}_{j=1}^k$ is some orthonormal basis in $\complex^k$. By possibility of probabilistic mixing, the set of classical states must be convex and thus the most general description of a classical state $\delta \in \qs{k}$ is $\delta = \varrho_q := \sum_{j=1}^k q_j \ketbra{j}{j}$ for some probability distribution $q := (q_j)_{j=1}^k$ on $[k]$. 

When we apply this to the previously introduced devices which have classical outputs, namely POVMs and instruments, we have the following correspondence: we identify a POVM $M\in \qm{k}{d}$ with its related \emph{quantum-classical (q-c) channel} $\Phi_M \in \qc{d}{k}$ defined as 
\begin{equation}
\Phi_M(\varrho) := \sum_{j=1}^k \Tr[M_j \varrho] \ketbra{j}{j}
\end{equation}
for all $\varrho \in \qs{d}$, and similarly an instrument $\Lambda \in \qi{k}{d}{n}$ with the related block-diagonal channel $\Psi_\Lambda \in \qc{d}{kn}$ defined as 
\begin{equation}
\Psi_\Lambda(\varrho) := \sum_{j=1}^k \ketbra{j}{j} \otimes \Lambda_j(\varrho) 
\end{equation}
for all $\varrho \in \qs{d}$, where in both cases the classical information can be read by measuring the classical part of the system in the fixed basis $\{\ket{j}\}_{j=1}^k$.

\subsubsection{Multimeters as quantum channels}

Previously we have described the most fundamental physical devices in quantum theory. However, in the setting of physical experiments we sometimes also want to describe scenarios where we are using different collections of these devices. In this case we can include into the standard description of some collection of devices an additional classical input which can be used to determine which device from the collection is to be used.  In particular, we are interested in measurement devices described by a collection of POVMs such that by providing the device with a classical input, the device determines which POVM from the collection is measured in each round of the experiment. We call these devices \emph{multimeters}. Formally, a multimeter is just a collection $M=\{M_{\cdot |i}\}_{i=1}^g \subset \qm{k}{d}$ of $g$ POVMs each with $k$ outcomes on a $d$-dimensional quantum system.

Similarly as before we want to embed this additional classical system as well as the output, which is a classical measurement outcome, in the corresponding quantum systems. Thus, as we represented a measurement described by a POVM as a quantum-classical channel, now we wish to consider this kind of multimeter as a channel which quantum and classical input and classical output (a (qc)-c channel).

Motivated by this, we make the following definition:
\begin{defi}\label{def:multimeter-as-quantum-channel}
For a set of POVMs $M=\{M_{\cdot |i}\}_{i=1}^g \subset \qm{k}{d}$ for some $g, k \in \nat$ we define the related \emph{multimeter channel} $\Phi_M \in \qc{dg}{k}$ by setting
\begin{equation}
\Phi_M(\sigma) = \sum_{i=1}^g \sum_{j=1}^k \Tr[( M_{j|i} \otimes \ketbra{i}{i} )(\sigma)]\ketbra{j}{j}
\end{equation}
for all $\sigma \in \qs{dg}$.
\end{defi}
Then $\Phi_M(\varrho \otimes \ketbra{i}{i} ) = \sum_{j=1}^k \Tr[M_{j|i} \varrho] \ketbra{j}{j} = \Phi_{M_{\cdot |i}}(\varrho)$ for all $i \in [g]$ and $\varrho \in \qs{d}$, which means that the measurements corresponding to $\Phi_M$ are uniquely defined. Moreover, a mixture of the POVMs $\{M_{\cdot |i}\}_{i=1}^g$ can be measured by providing the multimeter channel with an input of the form $  \varrho \otimes \varrho_p= \varrho \otimes \left( \sum_{i=1}^g p_i \ketbra{i}{i} \right)  $ for some probability distribution $p = (p_i)_{i=1}^g$ on $[g]$ so that $\Phi_M( \varrho \otimes \varrho_p) = \sum_{j=1}^k \Tr\left[\sum_{i=1}^g p_iM_{j |i} \varrho\right] \ketbra{j}{j} = \Phi_{\sum_i p_i M_{\cdot |i}}(\varrho)$ for all $\varrho \in \qs{d}$. We denote the set of multimeter channels with $g \in \nat$ POVMs each with $k \in \nat$ outcomes on a quantum system of dimension $d \in \nat$ by $\M\M(g,k,d) \subseteq \qc{dg}{k}$. If we would like to consider measurements with different numbers of outcomes, we can just choose $k$ to be the highest number of outcomes and pad the measurements with less outcomes with zero effects. We note that from now on we will use the term multimeter to reference both a collection of POVMs and their related multimeter channels.

More useful, additional properties of quantum channels and instruments (such as the Choi–Jamiołkowski isomorphism and the Stinesping dilation) can be found in Appendix \ref{appendix:A}.

\section{Transformations between multimeters} \label{sec:transformations}
\subsection{Quantum superchannels}
Let us briefly summarize the notion of a (quantum) superchannel (for more details see e.g. \cite{chiribella2008transforming,Chiribella2009,jencova2012generalized}). By a \emph{quantum superchannel} we mean a CP map that maps quantum channels to quantum channels. More specifically, a quantum superchannel between channels in $\qc{d}{n}$ and channels in $\qc{d'}{n'}$ is a CP map $\Psi: \M(\complex)_{nd} \to \M(\complex)_{n'd'}$ such that $\Psi\left(\mathcal{J}(\complex^{nd})\right) \subseteq \mathcal{J}(\complex^{n'd'})$. Then by the Choi–Jamiołkowski isomorphism quantum superchannels $\Psi: \M(\complex)_{nd} \to \M(\complex)_{n'd'}$ are in one-to-one correspondence between CP maps $\hat{\Psi}$ that map linear maps from $\M(\complex)_d$ to $\M(\complex)_n$ to linear maps from $\M(\complex)_{d'}$ to $\M(\complex)_{n'}$ such that $\hat{\Psi}(\qc{d}{n}) \subseteq \qc{d'}{n'}$. It is known that any such map $\hat{\Psi}$ corresponding to a quantum superchannel $\Psi$ can be realized as $\hat{\Psi}(\Phi) = \Psi_{post} \circ \left( \Phi \otimes id_s   \right) \circ \Psi_{pre}$ for all channels $\Phi \in \qc{d}{n}$ for some  preprocessing channel $\Psi_{pre} \in \qc{d'}{ds}$ and a postprocessing channel $\Psi_{post} \in \qc{ns}{n'}$ for some ancillary system $\complex^s$ for some $s \in \nat$. Furthermore, the Choi matrices of quantum superchannels are also called \emph{2-combs} and for a superchannel $\Psi$ with prep- and postprocessing channels $\Psi_{pre}$ and $\Psi_{post}$ it can be shown that $J_{\Psi} = J_{\Psi_{post}} * J_{\Psi_{pre}}$ and $J_{\hat{\Psi}(\Phi)} = J_{\Psi} * J_{\Phi} = J_{\Psi_{post}} * J_{\Phi} * J_{\Psi_{pre}}$ for all $\Phi \in \qc{d}{n}$, where $*$ denotes the link product of Choi matrices. 

Next we will take a closer look on the structure of superchannels that map multimeters to multimeters. 

\subsection{Channels between multimeters}
In this section we want to characterize all possible transformations between multimeters. Since we can represent multimeters as a particular type of quantum channels as in Def.\ \ref{def:multimeter-as-quantum-channel}, we are in particular interested in transformations between quantum channels that describe multimeters. As was explained at the beginning of this section, these type of transformations are represented by quantum superchannels.  We will now proceed to give an elementary realization results for these superchannels on multimeters. A comparison to previous realization results is considered in Remark \ref{remark:comparison}.

We recall that the set of multimeters (as defined in Def.\ \ref{def:multimeter-as-quantum-channel}) with $g \in \nat$ POVMs each with $k \in \nat$ outcomes on a quantum system of dimension $d \in \nat$ is denoted by $\M\M(g,k,d) \subseteq \qc{dg}{k}$. Thus, we want to look at transformations between $\M\M(g,k,d)$ and $\M\M(r,l,n)$ for some fixed $d,n,g,r,k,l \in \nat$. Such transformations are represented by CP maps $\hat{\Psi}$ which map linear maps from $\M(\complex)_{dg}$ to $\M(\complex)_k$ to linear maps from $\M(\complex)_{nr}$ to $\M(\complex)_{l}$ such that $\hat{\Psi}(\M\M(g,k,d)) \subseteq \M\M(r,l,n)$. Let us denote the Choi matrices of multimeters in $\M\M(g,k,d)$ by $\mathcal{J}(\M\M(g,k,d))$. Because of the Choi-Jamiołkowski isomorphism, such maps $\hat{\Psi}$ correspond to CP maps $\Psi: \M(\complex)_{kdg} \to \M(\complex)_{lnr}$ such that $\Psi(\mathcal{J}(\M\M(g,k,d))) \subseteq \mathcal{J}(\M\M(r,l,n))$, where the correspondence is given by
\begin{equation}\label{eq:mm-correspondence}
    \hat{\Psi}(\Phi) = \mathcal{E}_{\Psi(J_\Phi)}, \quad \Psi(J_\Phi) = J_{\hat{\Psi}(\Phi)}
\end{equation}
for all $\Phi: \M(\complex)_{dg} \to \M(\complex)_k$, where $J_\Phi$ is the Choi matrix of map $\Phi$ and $\mathcal{E}_{J}$ is the inverse map defined by a Choi matrix $J$ (see Appendix \ref{appendix:A}, Eqs.\ \eqref{eq:Choi-matrix} and \eqref{eq:Choi-map}).

Now we can show the following realization theorem (the proof can be found in Appendix \ref{appendix:B}):

\begin{thm} \label{thm:multimeter-transformation-realization}
Let $\Psi: \M(\complex)_{kdg} \to \M(\complex)_{lnr}$ be a CP map such that $\Psi(\mathcal{J}(\M\M(g,k,d))) \subseteq \mathcal{J}(\M\M(r,l,n))$. Then $\Psi$ has a realization $(\complex^s, \Lambda, B)$, i.e., there exist an ancillary system $\complex^s$ for some $s \in \nat$, CP maps $\Lambda^*_{x|y}: \M(\complex)_{ds} \to \M(\complex)_n$ such that $\Lambda^*_y := \sum_{x \in [g]} \Lambda^*_{x|y}$ is a unital CP (UCP) map for all $y \in [r]$, and a set of POVMs $B = \{B_{\cdot|a,x,y} \}_{a \in [k], x \in [g], y \in [r]} \subset \qm{l}{s}$ such that 
\begin{equation}
    \Psi \left(J_{\Phi_{\left\lbrace M_{\cdot|x}\right\rbrace_{x\in [g]}}} \right) = J_{\Phi_{\left\lbrace \sum_{x=1}^g \sum_{a=1}^k \Lambda^*_{x|y}(M_{a|x} \otimes B_{\cdot |a,x,y}) \right\rbrace_{y \in [r]}}},   
\end{equation}
where $\left\lbrace \sum_{x=1}^g \sum_{a=1}^k \Lambda^*_{x|y}(M_{a|x} \otimes B_{\cdot |a,x,y}) \right\rbrace_{y \in [r]} \subset \qm{l}{n}$ is a set of POVMs for all \\ $\{M_{\cdot|x}\}_{x\in [g]} \subset \qm{k}{d}$.
\end{thm}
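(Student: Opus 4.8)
The plan is to reduce the statement to the standard realization theorem for quantum superchannels quoted above, and then to exploit the (qc)-c structure of multimeters to pin down the form of the pre- and postprocessing channels.

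First I would address the fact that $\Psi$ is only assumed to send multimeter Choi matrices into multimeter Choi matrices, whereas the realization theorem applies to CP maps sending \emph{all} channels to channels. To bridge this gap, let $\Delta_m(X) = \sum_i \ketbra{i}{i} X \ketbra{i}{i}$ denote the completely dephasing channel on $\M(\complex)_m$, and let $\hat D(\Phi) := \Delta_k \circ \Phi \circ (id_d \otimes \Delta_g)$ be the \emph{multimeterization} superchannel on $\qc{dg}{k}$, which has the elementary realization $(id_d \otimes \Delta_g, \Delta_k)$ with trivial ancilla. A channel in $\qc{dg}{k}$ is a multimeter exactly when it has classical output and dephases its classical input register, i.e.\ exactly when $\hat D(\Phi) = \Phi$; hence the Choi-level CP map $D$ of $\hat D$ fixes every multimeter Choi matrix. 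Therefore $\Psi \circ D$ is CP, agrees with $\Psi$ on $\mathcal{J}(\M\M(g,k,d))$, and corresponds to a superchannel sending all of $\qc{dg}{k}$ into $\M\M(r,l,n) \subseteq \qc{nr}{l}$. Replacing $\Psi$ by $\Psi \circ D$ changes nothing on the inputs appearing in the statement while turning $\Psi$ into a genuine superchannel, so I may write $\hat\Psi(\Phi) = \Psi_{post} \circ (\Phi \otimes id_s) \circ \Psi_{pre}$ for a preprocessing channel $\Psi_{pre} \in \qc{nr}{dgs}$, a postprocessing channel $\Psi_{post} \in \qc{ks}{l}$, and some ancilla $\complex^s$.

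Next I would massage this realization by absorbing dephasing channels that act trivially on the relevant wires. Since $\Phi_N := \hat\Psi(\Phi_M)$ dephases its $\complex^r$ register, I may replace $\Psi_{pre}$ by $\Psi_{pre} \circ (id_n \otimes \Delta_r)$; since $\Phi_M$ dephases its $\complex^g$ input register, I may further replace $\Psi_{pre}$ by $(id_d \otimes \Delta_g \otimes id_s) \circ \Psi_{pre}$. Both replacements leave the action on every multimeter unchanged. After them, for each fixed $y \in [r]$ the map $\varrho \mapsto \Psi_{pre}(\varrho \otimes \ketbra{y}{y}_r)$ outputs a state classical on $\complex^g$, so it has the form $\sum_{x=1}^g \ketbra{x}{x}_g \otimes \Lambda_{x|y}(\varrho)$ for CP maps $\Lambda_{x|y} \colon \M(\complex)_n \to \M(\complex)_{ds}$; trace preservation of $\Psi_{pre}$ forces $\sum_x \Lambda_{x|y}$ to be trace-preserving, so $\{\Lambda_{x|y}\}_x$ is an instrument for each $y$ and $\Lambda^*_y = \sum_x \Lambda^*_{x|y}$ is unital. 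Dually, because both $\Phi_M$ and $\Phi_N$ have classical output, I may replace $\Psi_{post}$ by $\Delta_l \circ \Psi_{post} \circ (\Delta_k \otimes id_s)$; the resulting $\Psi_{post}$ reads its $\complex^k$ input classically and, for each outcome $a \in [k]$, applies a POVM $B_{\cdot|a} \in \qm{l}{s}$ on the ancilla (one may keep a classical copy of $x$ in $\complex^s$ so that these POVMs are indexed as $B_{\cdot|a,x,y}$, matching Fig.~\ref{fig:multimeter-simulation-intro}).

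Finally I would assemble the output POVM by composing the three stages on $\varrho \otimes \ketbra{y}{y}_r$: the instrument produces $\sum_x \ketbra{x}{x}_g \otimes \Lambda_{x|y}(\varrho)$, then $\Phi_M \otimes id_s$ yields $\sum_{x,a} \ketbra{a}{a}_k \otimes \Tr_{\complex^d}[(M_{a|x} \otimes \id_s)\Lambda_{x|y}(\varrho)]$, and the postprocessing assigns to each $b$ the weight $\sum_{x,a} \Tr[B_{b|a,x,y}\, \Tr_{\complex^d}[(M_{a|x} \otimes \id_s)\Lambda_{x|y}(\varrho)]]$. By the definition of the dual maps this equals $\Tr[\varrho\, \sum_{x,a} \Lambda^*_{x|y}(M_{a|x} \otimes B_{b|a,x,y})]$, identifying $N_{b|y} = \sum_{x,a} \Lambda^*_{x|y}(M_{a|x} \otimes B_{b|a,x,y})$ and, through Eq.~\eqref{eq:mm-correspondence}, the claimed Choi-matrix identity. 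That $\{N_{\cdot|y}\}_y$ is a genuine set of POVMs for \emph{every} input multimeter is the remaining point, and it follows cleanly: $\sum_b N_{b|y} = \sum_x \Lambda^*_{x|y}(\id_d \otimes \id_s) = \Lambda^*_y(\id_{ds}) = \id_n$ using $\sum_b B_{b|a} = \id_s$, $\sum_a M_{a|x} = \id_d$, and unitality of $\Lambda^*_y$. I expect the main obstacle to be precisely this register bookkeeping together with the domain reduction of the first paragraph: one must argue carefully that each dephasing absorption is harmless on multimeter inputs and that the trace conditions translate into exactly the instrument and unitality constraints, rather than into something weaker.
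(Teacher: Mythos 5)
Your proof is correct, but it takes a genuinely different route from the paper's. The paper proves the realization from scratch: it slices $\Psi$ into CP blocks $\Psi^*_{b,x|a,y} := S_b \circ R_y \circ \Psi \circ P_x^* \circ Q_a^*$ acting between the classical-index blocks of the Choi matrices, then feeds $\Psi$ carefully chosen test multimeters (extremal trivial ones, and two-outcome perturbations $B_x$, $\id - B_x$) to prove that $\sum_b \Psi^*_{b,x|a,y}$ is independent of $a$ and that $\sum_{x}\sum_b\Psi^*_{b,x|a,y}$ is unital, and finally invokes a Stinespring dilation of $\Psi^*_{x|y}$ together with a Radon--Nikodym theorem for CP maps (Cor.\ \ref{cor:radon-nikodym-instruments}, extending Raginsky's theorem to non-minimal dilations) to split off the ancilla POVMs $B_{\cdot|a,x,y}$. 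You instead reduce to the off-the-shelf superchannel realization theorem of Chiribella et al.\ via the ``multimeterization'' projection $\hat D$; the key observation making this legitimate is that at the Choi level $D = \Delta_k \otimes id_d \otimes \Delta_g$ (dephasing of the two classical registers), which is manifestly CP, fixes $\mathcal{J}(\M\M(g,k,d))$ pointwise, and maps every channel Choi matrix into a multimeter Choi matrix, so that $\Psi \circ D$ is a genuine superchannel agreeing with $\Psi$ on all inputs relevant to the statement -- you should state this explicit form of $D$, since CP-ness of the Choi-level map is the one thing you assert without justification. Your subsequent dephasing-absorption steps are exactly the work needed to undo the drawback the paper points out in Remark \ref{remark:comparison}: a naive application of the supermap realization buries the classical wires $a,x,y$ inside the quantum ancilla, whereas the block structure of the dephased $\Psi_{pre}$ and $\Psi_{post}$ recovers the separated form of Thm.\ \ref{thm:multimeter-transformation-realization}. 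Two details to make explicit: (i) $\hat D$ is idempotent, so $\hat D(\Phi)$ is indeed always a multimeter; (ii) the pointwise computation on inputs $\varrho \otimes \ketbra{y}{y}$ identifies the output POVMs only because the output channel is known to be a multimeter channel, and such channels are determined by their action on these product inputs (as noted after Def.\ \ref{def:multimeter-as-quantum-channel}). Comparing the two approaches: yours is shorter and more modular, and even yields the slightly sharper conclusion that the ancilla POVMs can be taken to depend on $a$ alone (with $x,y$-dependence routed through the ancilla by the preprocessing); the paper's is self-contained, never uses the superchannel realization theorem it quotes in Sec.\ \ref{sec:transformations}, and gives explicit control of the ancilla dimension via $s \geq \max_{x,y} \rank(J_{\Psi^*_{x|y}})$, which the black-box theorem does not provide.
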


If now $\Psi: \M(\complex)_{kdg} \to \M(\complex)_{lnr}$ is a CP map characterized by the previous theorem, then translating back to the CP map $\hat{\Psi}$ given by Eq.\ \eqref{eq:mm-correspondence} which maps linear maps from $\M(\complex)_{dg}$ to $\M(\complex)_k$ to linear maps from $\M(\complex)_{nr}$ to $\M(\complex)_{l}$ we have that for $\sigma \in \mathcal M(\mathbb C)_{nr}$,
\begin{align*}
    \hat{\Psi}\left(\Phi_{\left\lbrace M_{\cdot|x}\right\rbrace_{x\in [g]}}\right)(\sigma) &=  \mathcal{E}_{\Psi \left(J_{\Phi_{\left\lbrace M_{\cdot|x}\right\rbrace_{x\in [g]}}} \right)}(\sigma) = \mathcal{E}_{J_{\Phi_{\left\lbrace \sum_{x=1}^g \sum_{a=1}^k \Lambda^*_{x|y}(M_{a|x} \otimes B_{\cdot |a,x,y}) \right\rbrace_{y \in [r]}}}}(\sigma) \\
    &= \Tr_{\complex^{nr}}\left[\left(\id_l \otimes \sigma^T\right)J_{\Phi_{\left\lbrace \sum_{x=1}^g \sum_{a=1}^k \Lambda^*_{x|y}(M_{a|x} \otimes B_{\cdot |a,x,y})  \right\rbrace_{y \in [r]}}} \right] \\
    &=  \Tr_{\complex^{nr}}\Bigg[\left(\id_l \otimes \sigma^T\right) \Bigg( \sum_{b=1}^l\sum_{y=1}^r \ketbra{b}{b} \otimes \left( \sum_{x=1}^g \sum_{a=1}^k \Lambda^*_{x|y}(M_{a|x} \otimes B_{b |a,x,y}) \right)^T \\ & \qquad\qquad \otimes \ketbra{y}{y} \Bigg)  \Bigg] \\
    &= \sum_{b=1}^l\sum_{y=1}^r  \Tr\left[\sigma^T \left( \left( \sum_{x=1}^g \sum_{a=1}^k  \Lambda^*_{x|y}(M_{a|x} \otimes B_{b |a,x,y}) \right)^T \otimes \ketbra{y}{y} \right)   \right] \ketbra{b}{b} \\
    &= \sum_{b=1}^l\sum_{y=1}^r  \Tr\left[ \left( \left( \sum_{x=1}^g \sum_{a=1}^k  \Lambda^*_{x|y}(M_{a|x} \otimes B_{b |a,x,y}) \right)  \otimes \ketbra{y}{y} \right) \sigma  \right] \ketbra{b}{b} \\
    &= \underbrace{\Phi_{\left\lbrace \sum_{x=1}^g \sum_{a=1}^k  \Lambda^*_{x|y}(M_{a|x} \otimes B_{\cdot |a,x,y})  \right\rbrace_{y\in [r]}}}_{\in \M\M(r,n,l)}(\sigma)
\end{align*}
for all $\{M_{\cdot|x}\}_{x\in [g]} \subset \qm{k}{d}$. 

From the Schrödinger picture we get a clear recipe of how to measure the transformed POVMs $\left\lbrace N_{\cdot|y} := \sum_{x=1}^g \sum_{a=1}^k  \Lambda^*_{x|y}(M_{a|x} \otimes B_{\cdot |a,x,y})  \right\rbrace_{y\in [r]}$ given by $\Psi$:
\begin{align*}
   \Tr[N_{b|y} \varrho]  &= \Tr[\sum_{x=1}^g \sum_{a=1}^k  \Lambda^*_{x|y}(M_{a|x} \otimes B_{b |a,x,y}) \varrho] = \sum_{x=1}^g \sum_{a=1}^k \Tr[ (M_{a|x} \otimes B_{b |a,x,y})  \Lambda_{x|y}(\varrho)] \\
   &= \sum_{x=1}^g \sum_{a=1}^k \Tr\left[B_{b |a,x,y} \Tr_{\complex^d}[ (M_{a|x} \otimes \id_s)  \Lambda_{x|y}(\varrho)] \right]
\end{align*}
for all $\varrho \in \qs{n}$ for all $b \in [l]$ and $y \in [r]$. 

The interpretation is then as follows: given a state $\varrho \in \qs{n}$ if we want to measure the transformed POVM $N_{\cdot|y}$ given by the label $y$, we first apply the quantum instrument $\Lambda_y \in \qi{g}{n}{ds}$ (which is defined by the quantum operations $\Lambda_{x|y}$) from which we obtain an outcome $x$ and the conditional postmeasurement state $\Lambda_{x|y}(\varrho)$ of the system $\complex^d\otimes\complex^s$. Given $x$, we now measure the POVM $M_{\cdot|x}$ on the system $\complex^d$ and obtain an outcome $a$ while simultaneously leaving the system $\complex^s$ untouched (by just applying the identity channel on that part of the system). Now finally given the classical inputs and outputs $y,x,a$ we measure the system $\complex^s$ with a POVM $B_{\cdot|a,x,y}$ and obtain the final outcome $b$ which we report as the outcome of the transformed measurement $N_{\cdot|y}$; see Fig.\ \ref{fig:multimeter-simulation} for a graphical depiction of this procedure. 

\begin{figure}[htb]
    \centering
    \includegraphics{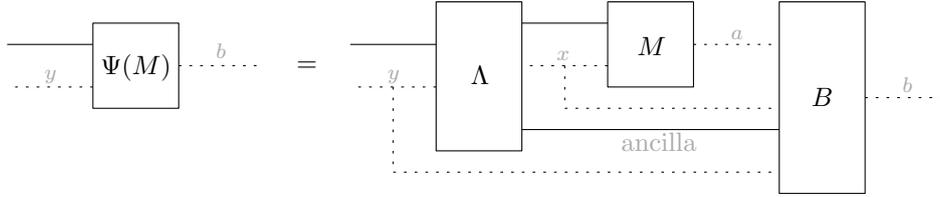}
    \caption{A multimeter $M$ is transformed using instruments $\Lambda_{\cdot|y}$ and a postprocessing $B_{\cdot|a,x,y}$. Quantum systems are depicted by solid lines, while classical systems are represented by dotted wires. Note the \emph{quantum ancilla} wire connecting the multi-instrument $\Lambda$ and the multimeter $B$.}
    \label{fig:multimeter-simulation}
\end{figure}

\begin{remark}\label{remark:comparison}
    Let us comment on how Thm.~\ref{thm:multimeter-transformation-realization} compares to previous results about supermaps from \cite{chiribella2008transforming,gour2019comparison}. One can start from the general setting of quantum supermaps (transforming quantum channels into quantum channels) and impose that some of the quantum systems appearing as inputs and/or outputs are classical. Following \cite{chiribella2008transforming}, this would yield a realization as in Fig.~\ref{fig:general-simulation}. Note that the formulation in Thm.~\ref{thm:multimeter-transformation-realization} and Fig.~\ref{fig:multimeter-simulation} can be mapped into this off-the-shelf form, by encapsulating all the classical systems denoted by the letters $y,x$ into the quantum ancilla between the pre- and post-processing operations $\Lambda$ and $B$. Our precise formulation in Thm.~\ref{thm:multimeter-transformation-realization} has the benefit of keeping separate the classical and the quantum ancillae; this will be useful later when considering simulations that do not use a quantum ancilla. Very recently, after the appearance of this work, in \cite{Allen2024} the authors generalized this formulation to any finite number of classical and quantum inputs and outputs.

\begin{figure}[htb]
    \centering
    \includegraphics{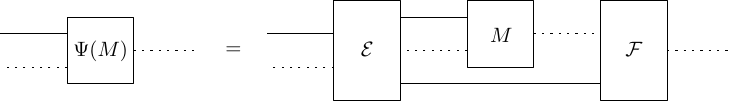}
    \caption{A general simulation of a multimeter $M$ using the supermap formalism from \cite{chiribella2008transforming,gour2019comparison}. The simulation is performed by a quantum pre-processing operation $\mathcal E$, followed by a quantum post-processing operation $\mathcal F$ that is also connected to $\mathcal E$ by a quantum ancilla system. Compare to Fig.~\ref{fig:multimeter-simulation}, where the classical ancillae are explicit.}
    \label{fig:general-simulation}
\end{figure}
\end{remark}

The next example shows that the realization given by Thm.\ \ref{thm:multimeter-transformation-realization} is in general not unique.
\begin{example}\label{ex:t-a-p-not-unique}
Let us consider a transformation $\Psi: \M(\complex)_{kdg} \to \M(\complex)_{ln}$ (here $r=1$) which maps any input multimeter $M = \{M_{\cdot|x}\}_{x \in [g]} \subset \qm{k}{d}$ as $\Psi(J_{\Phi_M}) = J_{\Phi_N}$ where $N \in \qm{l}{n}$ is a fixed trivial POVM  defined by some probability distribution $p$ on $[l]$ as $N_b = p_b \id$ for all $b \in [l]$. Let us define a conditional probability distibution $\nu = (\nu_{\cdot|a,x})_{a \in [k], x \in [g]}$ as $\nu_{b|a,x} = p_b$ for all $b \in [l]$, $a \in [k]$ and $x \in [g]$. It is clear that $(\complex, \Lambda, \nu)$ is a realization of $\Psi$ for \emph{any} $\Lambda \in \qi{g}{n}{d}$: Indeed, we have 
$$\sum_{x=1}^g \sum_{a=1}^k \nu_{b |a,x} \Lambda^*_{x}(M_{a|x}) = \sum_{x=1}^g \sum_{a=1}^k p_b \Lambda^*_{x}(M_{a|x}) = p_b \sum_{x=1}^g  \Lambda^*_{x}(\id) =  p_b \id = N_b $$
for all $b \in [l]$ and all multimeters $M = \{M_{\cdot|x}\}_{x \in [g]} \subset \qm{k}{d}$. We note that in the case of $s=1$ the realization given by Thm.\ \ref{thm:multimeter-transformation-realization} indeed is of the form of the LHS in the previous equation since in that case it is evident that any POVM $A \in \qm{l}{s}$ is actually just a probability distribution $A=(A_b)_{b=1}^l$ on $[l]$, so that in particular the set of POVMs $B = \{B_{\cdot|a,x} \}_{a \in [k], x \in [g]} \subset \qm{l}{s}$ are just conditional probability distributions on $[l]$ which we labeled by $\nu$. Since we can choose any instrument $\Lambda$, this shows that the realization of the map $\Psi$ is not unique.
\end{example}

\section{Previously considered simulation schemes}\label{sec:previous-simulations}

One of the main motiviations behind this work is that there are several versions of simulation of measurements in the literature, but they are in general incomparable. Our aim is therefore to find the most general notion of simulability of multimeters that encompasses all the existing definitions.  Our starting point is that a simulation of any kind of devices (in our case multimeters) is a process that takes some existing device and transforms it to some other device. Thus, we will consider a simulation to be a (specific type of) transformation between multimeters which were characterized in Thm.\ \ref{thm:multimeter-transformation-realization}. Before we move on to considering more general simulations let us first review the previously considered notions of simulations of measurements.

\subsection{Realizations with a classical ancilla}

In what follows we will see that although the existing definitions of simulability are different in their nature they do share one common property as transformations between multimeters: none of them actually utilize the quantum ancilla in the realization scheme in Thm.\ \ref{thm:multimeter-transformation-realization} (see Fig.\ \ref{fig:multimeter-simulation}). If this is the case we call the realization of the transformation a \emph{realization with a classical ancilla} so that classical information is still allowed to be utilized. We note that this only addresses a particular realization of the supermap.

Thus, as we want to allow for a classical ancilla but not a quantum ancilla, what we want to consider a realization with a classical ancilla is a map as in Fig.\ \ref{fig:multimeter-simulation} but where the solid wire for quantum ancilla is replaced by a classical dotted wire. Formally we can do it as follows: In Thm.\ \ref{thm:multimeter-transformation-realization} for the realization $(\complex^s, \Lambda, B)$  of a map $\Psi: \M(\complex)_{kdg} \to \M(\complex)_{lnr}$ that transforms multimeters into multimeters we now take $s$ to represent the size of the classical ancilla system such that the classical ancilla is embedded in the quantum system of dimension $s$. Thus, we assume that only classical information is carried and measured on this ancilla. In particular, this means that the multimeter $B= \{B_{\cdot|a,x,y}\}_{a \in [k], x \in [g], y \in [r]} \subset \qm{l}{s}$ can only be a classical measurement so that it must be just a postprocessing of a measurement that distinguishes the $s$ different classical (pure) states. Hence, if we fix an orthonormal basis $\{ \ket{\lambda}\}_{\lambda=1}^s$ of $\complex^s$ correspoding to the $s$ different classical pure states, then we have that $B_{\cdot|a,x,y} = \sum_{\lambda=1}^s \nu_{\cdot|a,x,y,\lambda} \ketbra{\lambda}{\lambda}$ for all $a \in [k]$, $x \in [g]$ and $y \in [r]$ for some conditional probability distribution $\nu= (\nu_{\cdot|a,x,y,\lambda})_{a \in [k], x \in [g], y \in [r], \lambda \in [s]}$ on $[l]$. Here, $\nu$ then represents the postprocessing of the basis measurement related to the basis $\{ \ket{\lambda}\}_{\lambda=1}^s$ which tranforms it into the multimeter $B$. If we plug in this form of $B$ in the realization $(\complex^s, \Lambda, B)$ given by Thm.\ \ref{thm:multimeter-transformation-realization} we get that the transformed multimeter takes the form
\begin{align*}
    \sum_{x=1}^g \sum_{a=1}^k \Lambda^*_{x|y}(M_{a|x} \otimes B_{\cdot|a,x,y}) = \sum_{x=1}^g \sum_{a=1}^k \sum_{\lambda=1}^s \nu_{\cdot|a,x,y,\lambda} \Lambda^*_{x|y}(M_{a|x} \otimes \ketbra{\lambda}{\lambda})
\end{align*}
for all $y \in [r]$ and  $M=\{M_{\cdot|x}\}_{x \in [g]} \subset \qm{k}{d}$. We can now consider another set of instruments which we also label by $\Lambda$ by defining $\Lambda =\{\Lambda_{\cdot , \cdot |y}\}_{y \in [r]} \subset \qi{g \cdot s}{n}{d}$, where we have set $\Lambda^*_{x, \lambda |y}(A) := \Lambda^*_{x|y}(A \otimes \ketbra{\lambda}{\lambda})$ for all $x \in [g]$, $y \in [r]$, $\lambda \in [s]$ and $A \in \qs{d}$. Now we get that the transformed multimeter takes the form 
\begin{equation}\label{eq:ancilla-free}
    N_b = \sum_{x=1}^g \sum_{a=1}^k \sum_{\lambda=1}^s \nu_{\cdot|a,x,y,\lambda} \Lambda^*_{x, \lambda|y}(M_{a|x}) \subset \qm{l}{n}
\end{equation}
for all $y \in [r]$ and  $M=\{M_{\cdot|x}\}_{x \in [g]} \subset \qm{k}{d}$. Based on this we make the following definition:

\begin{defi}
    A map $\Psi: \M(\complex)_{kdg} \to \M(\complex)_{lnr}$ that transforms multimeters into multimeters has a \emph{realization with a classical ancilla} if there exists $s \in \nat$, a set of instruments $\Lambda =\{\Lambda_{\cdot,\cdot |y}\}_{y \in [r]} \subset \qi{g \cdot s}{n}{d}$ and a conditional probability distribution $\nu= (\nu_{\cdot|a,x,y,\lambda})_{a \in [k], x \in [g], y \in [r], \lambda \in [s]}$ on $[l]$ such that the transformed multimeters corresponding to the Choi matrix $\Psi(J_{\Phi_M})$ take the form of Eq.\ \eqref{eq:ancilla-free} for all $M=\{M_{\cdot|x}\}_{x \in [g]} \subset \qm{k}{d}$. We denote this realization by $(s, \Lambda, \nu)$ or simply $(\Lambda, \nu)$ in the special case when $s=1$.
\end{defi}

The transformation process goes as follows: Given a state $\varrho \in \qs{n}$ and a label $y$ for the resulting measurement, we measure the state with an instrument $\Lambda_{\cdot, \cdot |y} \in \qi{g \cdot s}{n}{d}$, obtain outcomes $x \in [g]$ and $\lambda \in [s]$.  After the measurement we also get a conditional output state $\Lambda_{x, \lambda|y}(\varrho)$, which we then measure by using the POVM $M_{\cdot|x}$ and obtain an outcome $a \in [k]$. Lastly, given the input $y$ and outcomes $x$ and $\lambda$, we postprocess the obtained outcome $a$ to an outcome $b$ with probability $\nu_{b|a,x,y, \lambda}$ and report this as the final outcome of the measurement corresponding to label $y$. Hence, we may identify the instruments $\Lambda$ as the preprocessing part of the transformation and similarly the conditional probabilities $\nu$ as the postprocessing part. We note that here the role of the classical ancilla is just to relay the classical side-information $\lambda$ given by the preprocessing $\Lambda$ and which affects the postprocessing $\nu$. See Fig.\ \ref{fig:ancilla-free-simulation} for an illustration of this process. 

\begin{figure}[htb]
    \centering
    \includegraphics{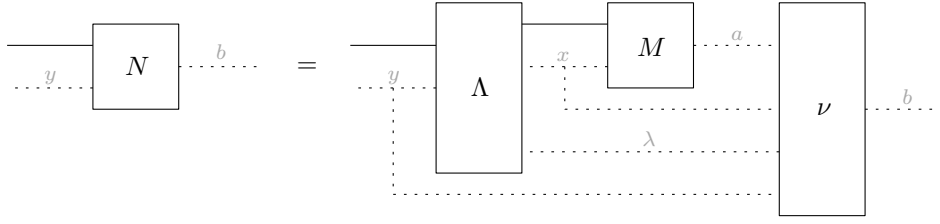}
    \caption{The simulation of multimeter $N$ by the multimeter $M$ admits a realization with a classical ancilla. Compare with the general case in Fig.\ \ref{fig:multimeter-simulation}, and notice that in this case the postprocessing $\nu$ and the ancilla $\lambda$ are classical.}
    \label{fig:ancilla-free-simulation}
\end{figure}

Note that there are multimeter transformations which do require a \emph{quantum} ancilla, as it is demonstrated in the following example.

\begin{example}
    In this example, we shall present a POVM transformation that requires a \emph{quantum ancilla}. To this end, consider a transformation $M \mapsto N$, which takes a POVM $M\in \qm{k}{d}$ to a POVM $N\in \qm{l}{n}$, that is given by
    $$\forall b\in [l]: \qquad N_b = \sum_{a=1}^k \frac{\Tr[M_a]}{d} B_{b|a},$$
    where $\{B_{\cdot|a}\}_{a\in [k]}\subset \qm{l}{n}$ is a collection of incompatible POVMs; see Fig. \ref{fig:quantum-ancilla-needed} for a graphical representation of this transformation. 

    \begin{figure}[htb]
        \centering
        \includegraphics{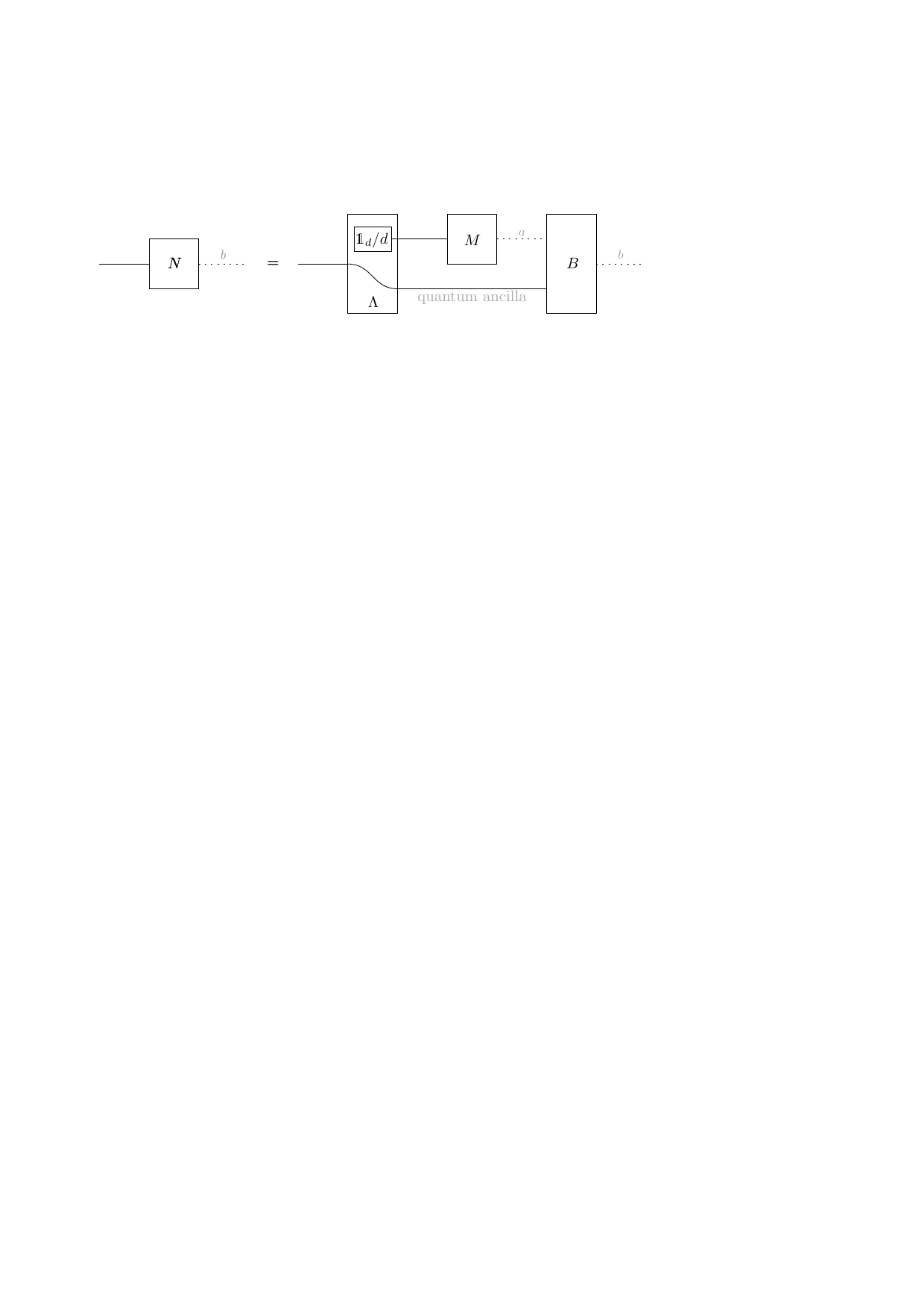}
        \caption{A  POVM transformation that requires a quantum ancilla. The quantum system is passed as an input to an incompatible multimeter $B$ that is controlled by the measurement outcome of the input POVM $M$ when presented with a maximally mixed state.}
        \label{fig:quantum-ancilla-needed}
    \end{figure}   

    To show that this transformation requires a quantum ancilla, let us assume one can realize it only using a classical ancilla (see Fig.\ \ref{fig:quantum-ancilla-needed-only-classical}) as in Eq.\ \eqref{eq:ancilla-free}: 
    $$\forall b\in [l]: \qquad \sum_{a=1}^k \frac{\Tr[M_a]}{d} B_{b|a} = \sum_{a=1}^k \sum_{\lambda=1}^s \Lambda_\lambda^*(M_a) \nu_{b|a, \lambda}.$$
    For all $a \in [k]$ consider now the trivial POVMs $M^{(a)} \in \qm{k}{d}$ with effect $\id_d$ for the outcome $a$ and effect $0$ for all the other outcomes $a'\neq a$. We have: 
    $$\forall a \in [k],b\in [l]: \qquad B_{b|a} = \sum_{\lambda=1}^s \Lambda_\lambda^*(\id_d) \nu_{b|a, \lambda}.$$
    This implies that the POVMs on the LHS of the equation above are postprocessings of the POVM $\Lambda_{\cdot}^*(\id_d) \in \qm{s}{n}$ showing that they are compatible, which is a contradiction.

    \begin{figure}[htb]
        \centering
        \includegraphics{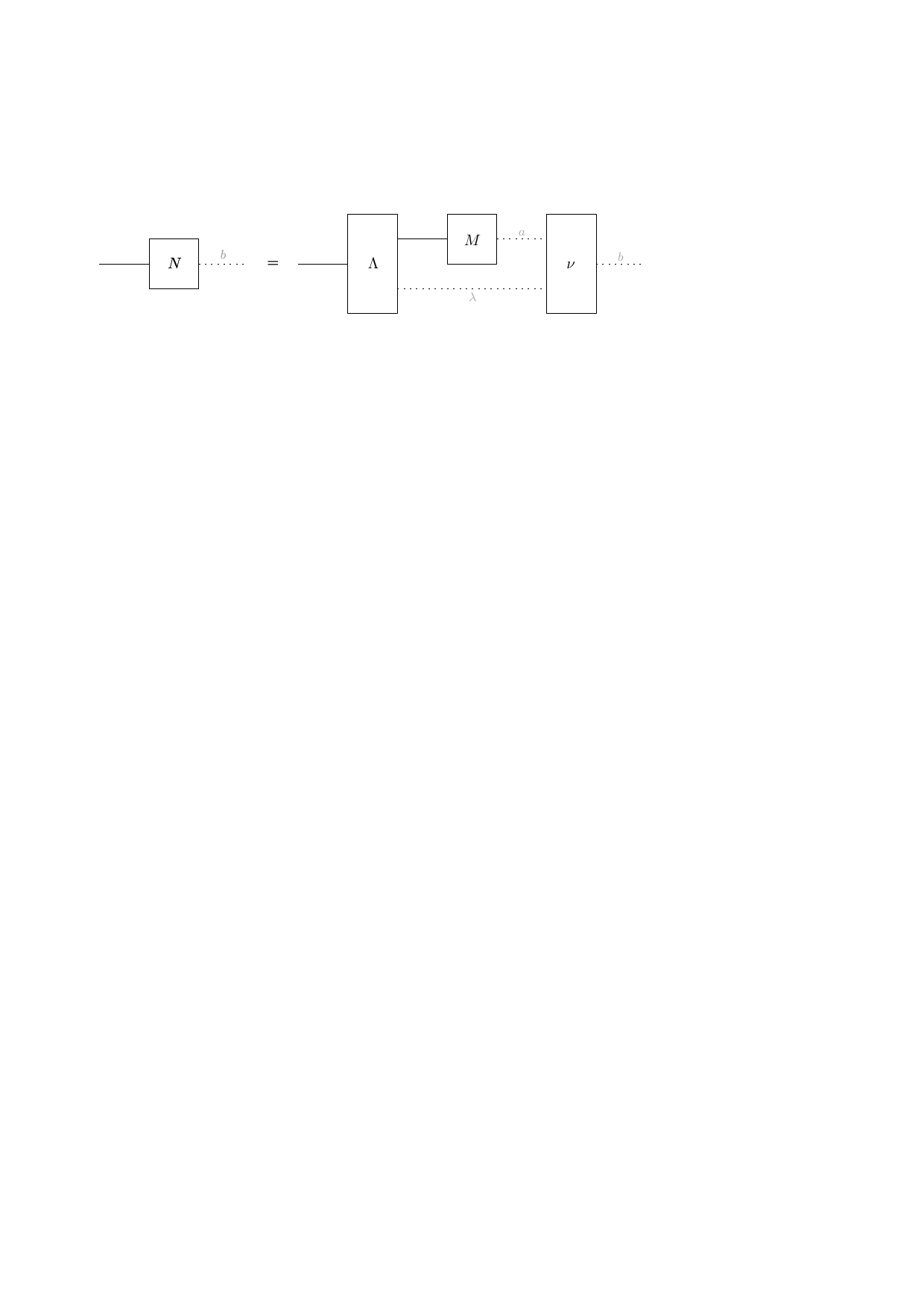}
        \caption{An (impossible) realization of the POVM transformation from Fig.\ \ref{fig:quantum-ancilla-needed} using only a classical ancilla $\lambda$.}
        \label{fig:quantum-ancilla-needed-only-classical}
    \end{figure}    
\end{example}

Later we will also consider an example showing that in some cases although a quantum ancilla may not be needed still a classical ancilla is required for the realization (see Example \ref{ex:cl-ancilla-needed}).

\subsection{Pre- and postprocessings}
In the simplest cases we can consider separately transformations where we either just preprocess or just postprocess. These lead to generic type of transformations that can be considered as simulations in the literature. We will consider the simplest cases without even a classical ancilla ($s=1$).

\subsubsection{Quantum preprocessings}
As a first simple example which could be considered a simulation is the case when we have a realization $(\Lambda, \nu)$ without even a classical ancilla for the quantum supermap $\Psi: \M(\complex)_{kdg} \to \M(\complex)_{lnr}$ such that $l=k$, $r=g$,  $\nu_{b|a,x,y} = \mathbf{1}_{a=b}$ for all $a,b \in [k]$ and $x,y \in [g]$, and $\Lambda^*_{x|y} = \mathbf{1}_{x=y} \Omega^*$ for some fixed channel $\Omega \in \qc{n}{d}$. In this case the transformed POVMs take the form
\begin{equation}
    \sum_{x=1}^g \sum_{a=1}^k \nu_{\cdot |a,x,y} \Lambda^*_{x|y}(M_{a|x}) =\Omega^*(M_{\cdot|y})  \in \qm{k}{n}
\end{equation}
for all $y \in [g]$ all sets of POVMs $M=\{M_{\cdot|x}\}_{x \in [g]} \subset \qm{k}{d}$. Thus, in this simulation scheme, given an input $y$, the measurement $M_{\cdot|y}$ is chosen and it is mapped to a POVM $\Omega^*(M_{\cdot|y})$. In the Schrödinger picture the recipe for doing this is just mapping the state $\varrho \in \qs{n}$ which we wish to measure by the channel $\Omega$ resulting in a state $\Omega(\varrho) \in \qs{d}$ and then just perform the measurement $M_{\cdot|y}$ on this transformed system. 

A preprocessing scheme with slightly more structure can be obtained when we choose $l=k$, $\nu_{b|a,x,y} = \mathbf{1}_{a=b}$, and $\Lambda^*_{x|y} = \Omega^*_{x|y}$ for all $a,b \in [k]$, $x \in [g]$ and $y\in [r]$ for some fixed set of instruments $\Omega=\{\Omega_{\cdot|y}\}_{y \in [r]} \subset \qi{g}{n}{d}$. 
In this case the transformed POVMs take the form
\begin{equation}
    \sum_{x=1}^g \sum_{a=1}^k \nu_{\cdot |a,x,y} \Lambda^*_{x|y}(M_{a|x}) =\sum_{x=1}^g \Omega^*_{x|y}(M_{\cdot|x})  \in \qm{k}{n}
\end{equation}
for all $y \in [r]$ for all sets of POVMs $M=\{M_{\cdot|x}\}_{x \in [g]} \subset \qm{k}{d}$. Thus, in this simulation scheme, a measurement with a label $y$ is obtained first by measuring the input state $\varrho \in \qs{n}$ by the instrument $\Omega_{\cdot|y}$, obtaining an outcome $x$ and bringing the system into a conditional output state $\Omega_{x|y}(\varrho)$, which is then measured by the POVM $M_{\cdot|x}$ from which an outcome $a $ is obtained and then reported as the final outcome of the transformed measurement with a label $y$. Something similar to this type of preprocessing is considered further in Sec.\ \ref{sec:compressibility}. Both types of preprocessing simulations discussed above are represented pictorially in Fig.\ \ref{fig:preprocessing}.

\begin{figure}[htb]
    \centering
    \includegraphics{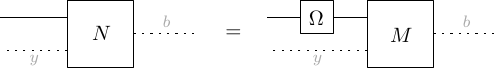} \\ \vspace{.5cm} \includegraphics{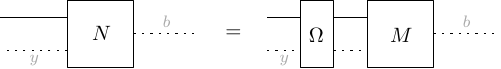}
    \caption{Preprocessing a multimeter $M$ by a quantum device $\Omega$. In the top panel, the device does not depend on the choice of measurement $y$, while in the bottom panel, it can.}
    \label{fig:preprocessing}
\end{figure}

\subsubsection{Classical preprocessings and postprocessings}\label{sec:cl-pre-post}
Another simple special case of a realization $(\Lambda, \nu)$ without even a classical ancilla is when we do not consider the preprocessing to be quantum at all, i.e., we set $n=d$ and $\Lambda^*_{x|y}  = p_{x|y} \mathrm{id}_d$ for some conditional probability distribution $p=(p_{\cdot|y})_{y \in [r]}$ on $[g]$. While we examine the most general case of this later in Sec.\ \ref{sec:classical-simulation}, in the special case when $\nu_{\cdot |a,x,y} = \nu_{\cdot |a,x',y}$ for all $a \in [k]$, $x,x' \in [g]$ and $y \in [r]$, we see that the transformed POVMs are of the form
\begin{equation}
    \sum_{x=1}^g \sum_{a=1}^k \nu_{\cdot |a,x,y} \Lambda^*_{x|y}(M_{a|x}) =\sum_{a=1}^k \nu_{\cdot |a,y} \left( \sum_{x=1}^g  p_{x|y} M_{a|x} \right) \in \qm{l}{n}
\end{equation}
for all $y \in [r]$. Thus, in this simulation scheme, a measurement with a label $y$ is obtained first by measuring the input state $\varrho \in \qs{n}$ with the POVM $M_{\cdot|x}$, where $x$ is first obtained by the conditional probability distribution $p_{\cdot|y}$, after which an outcome $a$ is obtained from the measurement and then it is postprocessed to an outcome $b$ with probability $\nu_{b |a,y}$ and then reported as the final outcome of the transformed measurement with a label $y$; this type of simulation is depicted in Fig.\ \ref{fig:classical-pre-post}. As an important special case we can now obtain mixtures of the POVMs $\{M_{\cdot|x}\}_{x \in [g]}$ by setting $l=k$ and $\nu_{b|a,y} = \mathbf{1}_{a=b}$ for all $a,b \in [k]$ and $y \in [r]$. Then, the resulting POVM is indeed of the form $\sum_{x=1}^g  p_{x|y} M_{\cdot|x}$ for all $y \in [r]$.

\begin{figure}
    \centering
    \includegraphics{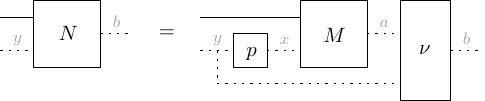}
    \caption{Classical pre- and post- processing of a multimeter $M$. Both devices used in the simulation ($p$ and $\nu$) are classical.}
    \label{fig:classical-pre-post}
\end{figure}

On the other hand, if we do not assume that $\nu$ is independent of $x$, and choose instead that $p_{x|y}=1$ for some $x = x_y \in [g]$, then the resulting POVM takes the following form
\begin{equation}
    \sum_{x=1}^g \sum_{a=1}^k \nu_{\cdot |a,x,y} \Lambda^*_{x|y}(M_{a|x}) =\sum_{a=1}^k \nu_{\cdot |a,x_y,y}  M_{a|x_y}
\end{equation}
for all $y \in [r]$. Thus, the simulated POVMs are simply just classical postprocessings of the original measurements.

\subsection{Classical simulation} \label{sec:classical-simulation}
From the more specific notions of simulations, let us start by reviewing a classical notion of simulation, purely in terms of classical mixing and postprocessing of measurements (see e.g. \cite{guerini2017operational, oszmaniec2017simulating, filippov2018simulability}). Essentially this is obtained by combining the two different classical simulations from Sec.\ \ref{sec:cl-pre-post}.

\begin{figure}[htb]
    \centering
    \includegraphics{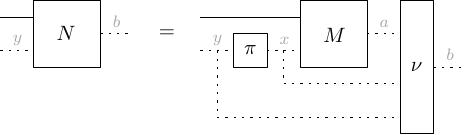}
    \caption{Classical simulation of multimeters.} 
    \label{fig:classical-simulation}
\end{figure}

\begin{defi}
   Let $N=\{N_{\cdot |y}\}_{y \in [r]} \subset \qm{l}{d}$ be a multimeter of $r$ POVMs each with $l$ outcomes on a $d$-dimensional Hilbert space. We say that $N$ can be \emph{classically simulated} (or is \emph{classically simulable}) with a multimeter $M = \{M_{\cdot|x}\}_{x \in [g]} \subset \qm{k}{d}$ of $g$ POVMs with $k$ outcomes (on the same Hilbert space) if there exist conditional probability distributions $\pi:= (\pi_{\cdot|y})_{y \in [r]}$ on $[g]$ and $ \nu := (\nu_{\cdot|a,x,y})_{a \in [k], x \in [g], y \in [r]} $ on $[l]$ such that 
\begin{equation}\label{eq:simulation}
    N_{b|y} = \sum_{x=1}^g \pi_{x|y} \sum_{a=1}^k \nu_{b|a,x,y} M_{a|x} 
\end{equation}
for all $b \in [l]$ and $y \in [r]$. 
\end{defi}

The operational interpretation of classical simulability is the following: We are conducting a physical experiment with a $d$-level quantum system where we can perform measurements with a multimeter $M$ with $g$ measurement settings. Given an input $y$ which corresponds to the label of the new measurement setting, with probability $\pi_{x|y}$ we choose the measurement setting $x$ and use the measurement  $M_{\cdot | x}$ to measure the system. After obtaining an outcome $a$ from the measurement of $M_{\cdot | x}$ instead of registering it we assign an outcome $b$ with probability $\nu_{b|a,x,y}$. Then, the resulting $r$ measurements (after multiple rounds of the experiment) are described by the multimeter $N$ in Eq.~\eqref{eq:simulation}; see Fig.\ \ref{fig:classical-simulation} for a graphical representation of this simulation scheme.

We see that classical simulation is a special case of a realization $(\Lambda, \nu)$ with a classical ancilla for the quantum supermap $\Psi: \M(\complex)_{kdg} \to \M(\complex)_{lnr}$. Namely, in Eq.\ \eqref{eq:ancilla-free} if we choose $s=1$, $n=d$ and take $\Lambda^*_{x|y} = \pi_{x|y} id_d$ for all $x \in [g]$ and $y \in [r]$ for some conditional probability distribution $\pi= (\pi_{\cdot|y})_{y \in [r]}$ on $[g]$, then 
\begin{equation}
\sum_{x=1}^g \sum_{a=1}^k \nu_{\cdot |a,y,x} \Lambda^*_{x|y}(M_{a|x}) =
 \sum_{x=1}^g \pi_{x|y} \sum_{a=1}^k \nu_{\cdot|a,x,y} M_{a|x} 
\end{equation}
for all $y \in [r]$ for all $M = \{M_{\cdot|x}\}_{x \in [g]} \subset \qm{k}{d}$. Thus, the classical simulation map is a particular instance of quantum superchannels between multimeters that admit a realization without even a classical ancilla. In particular, the realization only consists of classical pre- and postprocessing and both the original and the simulated multimeter act on the same-size quantum system.

The classical simulation scheme describes the construction of new measurements from existing ones by means of classical manipulations of the inputs and outputs of the measurement devices. Naturally, this is also linked to joint measurability since in the case of only one simulator, i.e., when $g=1$, the multimeter $M$ consists of only one POVM and the conditional probability distributions $\pi =(\pi_{\cdot|y})_{y \in [r]}$ are all trivial so that each POVM in $N$ can be postprocessed from the single POVM in $M$.

Furthermore, instead of just creating new multimeters from existing ones, one can also ask when a fixed multimeter can be simulated by some other multimeter with some desired properties. For example, one can ask when a multimeter can be simulated by a multimeter with a lesser number of measurements, or by a multimeter whose measurements have less number of outcomes, or by multimeters whose measurements are projective. Such topics have been explored in \cite{guerini2017operational, oszmaniec2017simulating, filippov2018simulability,Oszmaniec2019}.

\subsection{Compressibility} \label{sec:compressibility}
We continue by reviewing the results of the recent work \cite{IoannouSimulability2022} (see also \cite{Jones2007,Jokinen2023}), which we refer to as compressibility to distinguish the different notions of simulation of measurements. 

\begin{figure}[htb]
    \centering
    \includegraphics{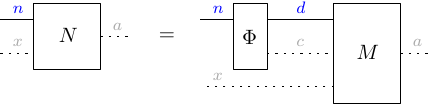}
    \caption{Compressibility of measurements. The quantum (solid) wires have dimensions indicated in blue.}
    \label{fig:compressibility}
\end{figure}

\begin{defi}
    Let $N:=\{N_{\cdot|x}\}_{x \in [g']} \subset \qm{k}{n}$ be a multimeter of $g'$ POVMs with $k$ outomes on an $n$-dimensional Hilbert space. We say that $N$ is \emph{$d$-compressible} if there exists a finite $C \in \nat$, a quantum instrument $\Phi \in \qi{C}{n}{d}$ and another multimeter $M=\{M_{\cdot|x,c}\}_{x \in [g'], c \in [C]} \subset \qm{k}{d}$ of $g' \cdot C$ POVMs with $k$ outcomes on a $d$-dimensional Hilbert space, such that 
\begin{equation}\label{eq:compression}
    N_{a|x} = \sum_{c =1}^C \Phi^\ast_c(M_{a|x,c})
\end{equation}
for all $a \in [k]$ and $x \in [g']$.
\end{defi}

Here the recipe of simulation is as follows:  a measurement with a label $x$ is obtained first by measuring the input state $\varrho \in \qs{n}$ by the instrument $\Phi$, obtaining an outcome $c$ and transforming the system into a conditional output state $\Phi_c(\varrho)$ which is now a (unnormalized) state of a $d$-dimensional system. This state is then measured by the POVM $M_{\cdot|x,c}$ from which an outcome $a $ is obtained and then reported as the final outcome of the transformed measurement with a label $x$. We depict this simulation scheme in Fig.\ \ref{fig:compressibility}. The terminology for compressibility comes from the case when $d<n$ so that one can simulate the measurements by performing some other measurements on a smaller quantum system. Note that \cite{IoannouSimulability2022} allows for non-finite $C$, which we exclude to avoid technical difficulties.

Similarly to classical simulability, in our framework we get compressibility as a special case of a realization $(\Lambda, \nu)$ with a classical ancilla for the quantum supermap $\Psi: \M(\complex)_{kdg} \to \M(\complex)_{lnr}$. Namely, in Eq.\ \eqref{eq:ancilla-free} we may choose $s=1$, $g=g' \cdot C$, $l=k$ and $r=g'$, and set $\nu_{a|b,x',c,x} = \mathbf{1}_{a=b}$ for all $a,b \in [k]$ and $x,x' \in [g']$, $c \in [C]$ and $\Lambda^*_{x',c|x} = \mathbf{1}_{x=x'} \Phi^*_c$ for all $x,x' \in [g']$ and $c \in [C]$ for some instrument $\Phi \in \qi{C}{n}{d}$ so that
\begin{equation}
    \sum_{x'=1}^{g'} \sum_{c =1}^C\sum_{a=1}^k \nu_{\cdot |a,x',c,x} \Lambda^*_{x',c|x}(M_{a|x', c}) =  \sum_{c=1} ^C  \Phi^*_c(M_{\cdot |x, c}) 
\end{equation}
for all $x \in [g']$ and for all multimeters $M=\{M_{\cdot|x,c}\}_{x \in [g'], c \in [C]} \subset \qm{k}{d}$. Thus, also compressibility is indeed a particular instance of a quantum superchannel between multimeters with a realization with a classical ancilla.

It is easy to see that POVMs are $1$-compressible if and only if they are compatible, because in this case the $M_{a|x,c}$ are just conditional probabilities and the effects $E_c := \Phi^\ast_c(1) \in \mathcal M(\complex)^{\mathrm{sa}}_n$ form a joint POVM $E$ for the multimeter $N$. In addition to generalizing compatibility, compressibility (also called high-dimensional simulability) is shown to be equivalent to high-dimensional steering \cite{Jones2023}.

\subsection{Compatibility-preserving simulations}\label{sec:comp-preserving}

In the previous cases one could see both simulations as generalizations of compatibility of multimeters so that compatibility emerges only as a special instant of the simulation. In the work \cite{buscemi2020complete}, however, the authors discuss a class of superchannels $\Psi$ that preserve the property of compatibility of multimeters. Their motivation for introducing such superchannels is to use them to build a resource theory of quantum incompatibility where such maps would act as free operations between the objects of the resource theory. In our setting, we can rephrase their definition of ``programmable measurement device  (PMD) processing''  as follows:

\begin{defi}
    Let $N:=\{N_{\cdot|y}\}_{y \in [r]} \subset \qm{l}{n}$ be a multimeter of $r$ POVMs with $l$ outomes on an $n$-dimensional Hilbert space. We say that $N$ can be \emph{compatibility-preservingly simulated} by a multimeter $M=\{M_{\cdot|x}\}_{x \in [g]}\subset \qm{k}{d}$ of $g$ POVMs with $k$ outcomes on a $d$-dimensional Hilbert space if there exists $K,L \in \nat$, a probability distribution $p$ on $[K]$, a set of quantum instruments $\Gamma := \{\Gamma_{\cdot|\kappa}\}_{\kappa \in [K]}\subset \qi{L}{n}{d}$ and conditional probability distributions $\pi := (\pi_{\cdot|y,\lambda,\kappa})_{y \in [r], \lambda \in [L], \kappa\in [K]}$ on $[g]$ and $\nu := (\nu_{\cdot|a,x,y, \lambda,\kappa})_{a \in [k], x \in [g], y \in [r], \lambda \in [L], \kappa \in [K]}$ on $[l]$ such that 
    \begin{equation}
        N_{b|y} =  \sum_{\kappa=1}^K \sum_{x=1}^g \sum_{a=1}^k  \sum_{\lambda=1}^L p_\kappa \nu_{b|a,x,y,\lambda,\kappa} \pi_{x|y, \lambda,\kappa} \Gamma^*_{\lambda|\kappa}(M_{a|x}) \label{eq:comp-preserving}
    \end{equation}
    for all $b \in [l]$ and $y \in [r]$.
\end{defi}

The interpretation is as follows: given an input state $\varrho \in \qs{n}$ and a classical input $y$ for the label of the new measurement, we choose an instrument $\Gamma_{\cdot|\kappa}$ according to probability $p_\kappa$ and measure the input state with it. The measurement leads to an outcome $\lambda$ and the state is transformed to a conditional output state $\Gamma_{\lambda|\kappa}(\varrho)$. Now, given $y$, $\kappa$ and $\lambda$ we choose label $x$ of the simulator POVM with probability $\pi_{x|y, \lambda, \kappa}$ and measure the (conditional) state $\Gamma_{\lambda|\kappa}(\varrho)$ with the POVM $M_{\cdot|x}$. After the measurement we obtain an outcome $a$ which we finally postprocess  (by taking also into account the classical information $x,y,\lambda,\kappa$) into an outcome $b$ with probability $\nu_{b|a,x,y,\lambda,\kappa}$ and report it as the final outcome of the new measurement $y$. We depict this process in Fig.\ \ref{fig:compatibility-preserving}. 
The motivation  behind the term ``compatibility-preserving simulation'' comes from the fact that if the multimeter $M$ is compatible then the resulting multimeter $N$ in Eq.\ \eqref{eq:comp-preserving} is compatible as well \cite{buscemi2020complete}. However, it is currently an open question whether all transformations between multimeters that preserve the compatibility of the multimeters are of this form.

\begin{figure}[htb]
    \centering
    \includegraphics{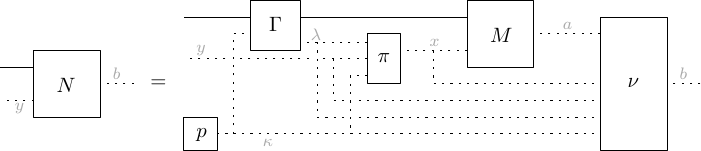}
    \caption{A compatibility-preserving transformation of a multimeter $M$; compare with \eqref{eq:comp-preserving}.}
    \label{fig:compatibility-preserving}
\end{figure}

Again we can show that the above simulation scheme can be presented as a special case of a realization with a classical ancilla of a the quantum supermap $\Psi: \M(\complex)_{kdh} \to \M(\complex)_{lnr}$ that takes multimeters to multimeters. We first note that in Fig.\ \ref{fig:compatibility-preserving} there are two classical ancillas connecting the preprocessing side and the postprocessing side. Hence, in Eq.\ \eqref{eq:ancilla-free} we set $s= K \cdot L$. Then we simply choose $\Lambda^*_{x,\lambda,\kappa|y} =p_\kappa \pi_{x|y,\lambda,\kappa} \Gamma^*_{\lambda|\kappa}$ for all $x \in [g]$, $y \in [r]$, $\lambda \in [L]$ and $\kappa\in [K]$ for some probability distribution $p$ on $[K]$, some set of quantum instruments $\Gamma := \{\Gamma_{\cdot|\kappa}\}_{\kappa \in [K]}\subset \qi{L}{n}{d}$ and some conditional probability distribution $\pi := (\pi_{\cdot|y,\lambda,\kappa})_{y \in [r], \lambda \in [L], \kappa\in [K]}$ on $[g]$. The transformed POVMs then take the form
\begin{align}
    \sum_{\kappa=1}^K \sum_{\lambda=1}^L \sum_{x=1}^g \sum_{a=1}^k  \nu_{\cdot |a,x,y, \lambda,\kappa} \Lambda^*_{x, \lambda, \kappa|y}(M_{a|x})  = \sum_{\kappa=1}^K \sum_{\lambda=1}^L \sum_{x=1}^g \sum_{a=1}^k  p_\kappa \nu_{\cdot |a,x,y, \lambda,\kappa}  \pi_{x|y,\lambda,\kappa} \Gamma^*_{\lambda|\kappa}(M_{a|x})
\end{align}
for all $y \in [r]$. Thus, also compatibility-preserving simulation is a particular instance of a quantum superchannel between multimeters that has a realization with a classical ancilla.

\section{A new notion of simulation of multimeters} \label{sec:multimeter-simulation}

\subsection{Not all transformations are simulations}
A simulation of multimeters is a process that takes an existing multimeter and transforms it to another multimeter. However, not all possible transformations can be considered to capture the essence of what would be considered a simulation. For example, one could state that in order for a transformation between multimeters to be truly considered a simulation, the simulation process should minimally involve using at least some parts of the original multimeter. Or by considering simulation as a resource theory, one could argue that not all transformations between multimeters can be considered as free operations since otherwise one could turn any object into another object freely so that there wouldn't be any resources to begin with. To see which types of transformations should be left out of simulations, we start by presenting the following example:

\begin{ex}
    Let us revisit Example \ref{ex:t-a-p-not-unique} and consider a process where we take the original multimeter, discard it and replace it with some other fixed multimeter. More precisely, such transformations are superchannels $\Psi: \M(\complex)_{kdg} \to \M(\complex)_{lnr}$ between multimeters such that $\Psi(J_{\Phi_M}) = J_{\Phi_N}$ for some fixed $N = \{N_{\cdot|y}\}_{y\in [r]} \subset \qm{l}{n}$ for all input multimeters $M=\{M_{\cdot|x}\}_{x \in [g]} \subset \qm{k}{d}$. In particular, even a trivial multimeter, i.e., a multimeter that consists of trivial POVMs $M = \{M_{\cdot|x}\}_{x \in [g]}$, where $M_{a|x} = p_{a|x} \id$ for all $a \in [k]$ for some conditional probability distribution $p = (p_{\cdot|x})_{x \in [g]}$ on $[k]$, is mapped to the fixed multimeter $N$.
\end{ex}

The two main points that we can infer from the previous examples are the following: First, since any multimeter $M$ is mapped to a fixed multimeter $N$, the simulation process corresponding to the previous map is not using any part of the simulator multimeter to perform the simulation. Second, if $N$ includes a nontrivial POVM, then trivial multimeters can simulate nontrivial ones. This suggests that it’s the simulation process, rather than the multimeter we aim to use, that extracts information from the quantum state. If we put ourselves in the position of an experimenter who has a multimeter at her disposal and would like to know what other measurements she can perform without having to change her experiment completely, these two things are not desirable in our opinion: If the experimenter wants the simulation process to engage at least part of the simulating device, rather than disregarding it, then transformations that map any multimeter to a fixed multimeter should be avoided. Furthermore, because trivial multimeters discard the quantum state without measuring it, if a transformation converts trivial multimeters into nontrivial ones, it indicates that an additional device is needed to extract information from the quantum state.

Hence, for a transformation between multimeters $\Psi: \M(\complex)_{kdg} \to \M(\complex)_{lnr}$ we list the following properties. We say that $\Psi$ is
\begin{enumerate}
    \item \emph{trash-and-prepare} if for all multimeters $M$ we have that $\Psi(J_{\Phi_M}) = J_{\Phi_N}$ for some fixed multimeter $N$,
    \item \emph{triviality-preserving} if whenever $M$ consists of only trivial POVMs, then $\Psi(J_{\Phi_M})$ corresponds to a multimeter that consists of trivial POVMs.
\end{enumerate}

An important thing to notice is that actually imposing that a map is triviality-preserving rules out most trash-and-prepare maps: Namely, if the map is triviality-preserving and trash-and-prepare, then the fixed multimeter $N$ to which it maps every multimeter $M$ must be a trivial multimeter. Thus, the only triviality-preserving trash-and-prepare maps are maps that take any multimeter to a fixed trivial multimeter. We note that operationally these type of maps may be considered simulations since trivial multimeters can be considered as free objects since they are defined only by the classical conditional probability distributions which can be thought of as an experimenter choosing to discard the quantum system and outputting a random number instead. This is something the experimenter can always do without having to change the measurement setup, i.e., the multimeter, even though it is a very poor use of the original simulator. Following this observation, we define what we consider in this article to be a simulation of multimeters:

\begin{defi}[Simulation of multimeters]\label{def:min-def-sim}
     A \emph{simulation of multimeters} is a transformation between multimeters, i.e., a quantum superchannel between multimeters, that is triviality-preserving.
\end{defi}

We note that, in general, the decision of which transformations to exclude depends on the intended application, allowing for the consideration of various different simulation schemes. However, in this work we focus on the scenario that we already describe earlier where an experimenter has a fixed multimeter to extract information from a quantum system and she is considering how other possible measurements could be implemented by using her device. Naturally, also in our set-up one can argue that maybe some other types of maps should be excluded from the above definition of simulability (such as maps that are not compatibility-preserving as in \cite{buscemi2020complete}). However, in this work we will focus on the above definition, which is a minimal definition for us, leaving us with a maximal set of simulation maps and present examples falling into this category of maps. Furthermore, next we will focus only on maps that admit a realization with a classical ancilla (or in the case when they are completely ancilla-free) since, as shown in the previous section, all the previously defined notions of simulations are of this type as well and we want to explore how these previous notions fall into our framework of simulation. We leave the treatment of the maps with a quantum ancilla for future work.

\subsection{Triviality-preserving maps}

Motivated by our minimal definition of simulability of multimeters (Def.\  \ref{def:min-def-sim}) we will next explore the structure of the realizations of a triviality-preserving map. In particular, in the case when a multimeter transformation admits an ancilla-free realization, i.e. $s=1$, we can show the following characterization result for the map preserving triviality.

\begin{thm}\label{thm:ancilla-free-tp-maps}
    Let $\Psi: \M(\complex)_{kdg} \to \M(\complex)_{lnr}$ be a quantum superchannel between multimeters that admits an ancilla-free realization (neither quantum nor classical ancilla). The following assertions are equivalent:
    \begin{enumerate}
        \item The transformation $\Psi$ is triviality-preserving.
        \item $\Psi$ admits an ancilla-free realization $(\Lambda, \nu)$ with the property that the multi-instrument $\Lambda$ is \emph{partially normalized} on the quantum system (see also Fig.~\ref{fig:Lambda-partially-normalized-quantum}): there exists a conditional probability distribution $\pi=(\pi_{\cdot|y})_{y \in [r]}$ on $[g]$ such that
        $$\Lambda^*_{x | y} (\id) = \pi_{x|y}\id$$
        for all $x \in [g]$ and $y \in [r]$.

\begin{figure}[htb]
    \centering
    \includegraphics{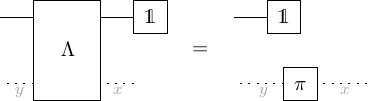}
    \caption{A multi-instrument $\Lambda$ that is partially normalized on the quantum system (continuous line).}
    \label{fig:Lambda-partially-normalized-quantum}
\end{figure}
        
        \item $\Psi$ admits an ancilla-free realization $(\Lambda, \nu)$ with the property that the multi-instrument $\Lambda$ factorizes as follows (see also Fig.~\ref{fig:Lambda-factorises-triviality-preserving}): there exists a conditional probability distribution $\pi = (\pi_{\cdot|y})_{y \in [r]}$ on $[g]$ and a family of $g\cdot r$ \emph{quantum channels} $\{\Phi_{x,y}\}_{x \in [g], y\in [r]} \subset \qc{n}{d}$ such that
        $$\Lambda_{x | y} = \pi_{x|y} \Phi_{x,y}$$
        for all $x \in [g]$ and $y \in [r]$.

\begin{figure}[htb]
    \centering
    \includegraphics{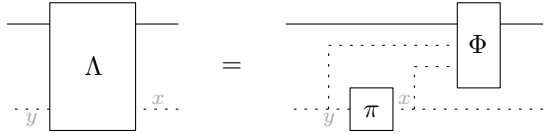}
    \caption{A multi-instrument $\Lambda$ that factorises and induces a triviality-preserving multimeter transformation $\Psi$.}
    \label{fig:Lambda-factorises-triviality-preserving}
\end{figure}
    \end{enumerate}
\end{thm}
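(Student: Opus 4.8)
The plan is to establish the cycle $(3)\Rightarrow(1)\Rightarrow(2)\Rightarrow(3)$; the implications $(3)\Rightarrow(1)$ and $(2)\Rightarrow(3)$ are short, and essentially all the content sits in $(1)\Rightarrow(2)$. Throughout I work with the ancilla-free expression of Eq.~\eqref{eq:ancilla-free} at $s=1$, so that a realization $(\Lambda,\nu)$ produces $N_{b|y}=\sum_{x,a}\nu_{b|a,x,y}\,\Lambda^*_{x|y}(M_{a|x})$, and I abbreviate the induced effects $E_{x|y}:=\Lambda^*_{x|y}(\id)$, which for each fixed $y$ form the POVM of the instrument $\Lambda_{\cdot|y}$, so $\sum_x E_{x|y}=\id$.

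For $(3)\Rightarrow(1)$ I feed a trivial multimeter $M_{a|x}=p_{a|x}\id$ into a factorizing realization; unitality of $\Phi^*_{x,y}$ gives $\Lambda^*_{x|y}(p_{a|x}\id)=\pi_{x|y}p_{a|x}\id$, so $N_{b|y}=\big(\sum_{x,a}\nu_{b|a,x,y}\pi_{x|y}p_{a|x}\big)\id$ is trivial. For $(2)\Rightarrow(3)$ I rescale branch by branch: where $\pi_{x|y}>0$ set $\Phi_{x,y}:=\pi_{x|y}^{-1}\Lambda_{x|y}$, which is trace-preserving since $\Phi^*_{x,y}(\id)=\pi_{x|y}^{-1}E_{x|y}=\id$; where $\pi_{x|y}=0$ partial normalization forces $E_{x|y}=0$ and hence $\Lambda_{x|y}=0$ (a positive map sending $\id$ to $0$ vanishes), so any fixed channel serves as $\Phi_{x,y}$. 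The same $\nu$ is kept.

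The crux is $(1)\Rightarrow(2)$. Starting from an arbitrary ancilla-free realization $(\tilde\Lambda,\tilde\nu)$ of a triviality-preserving $\Psi$, I evaluate on trivial inputs, where $N_{b|y}=\sum_x\big(\sum_a\tilde\nu_{b|a,x,y}p_{a|x}\big)\tilde E_{x|y}$ must be proportional to $\id$. Subtracting two trivial inputs that differ only in the distribution attached to a single setting $x_0$ shows $\big(\sum_a\tilde\nu_{b|a,x_0,y}(p_{a|x_0}-p'_{a|x_0})\big)\tilde E_{x_0|y}\propto\id$, which yields the key dichotomy: for each $(x_0,y)$, \emph{either} $\tilde E_{x_0|y}\propto\id$, \emph{or} $a\mapsto\tilde\nu_{b|a,x_0,y}$ is constant for every $b$. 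Accordingly I split, for each $y$, the settings into a ``good'' set $S_y=\{x:\tilde E_{x|y}=\pi_{x|y}\id\}$, already partially normalized, and a ``bad'' complement on which $\tilde\nu$ does not depend on $a$. On the bad branches, summing over $a$ collapses $\sum_a M_{a|x}=\id$, so their total contribution to $N_{b|y}$ is the input-independent operator $\sum_{x\in S_y^c}\tilde\nu_{b|x,y}\tilde E_{x|y}$; applying triviality-preservation once more forces this to equal $\beta_{b|y}\id$ for scalars $\beta_{b|y}\ge 0$.

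It remains to rebuild the realization. I keep the good branches verbatim and replace each bad branch by the trash-and-prepare channel $\Lambda_{x|y}(\varrho):=\pi_{x|y}\Tr[\varrho]\,\omega_0$ with a fixed state $\omega_0$ and weights $\pi_{x|y}:=\Tr[\tilde E_{x|y}]/n$, chosen precisely so that $\sum_{x\in S_y^c}\pi_{x|y}=1-\sum_{x\in S_y}\pi_{x|y}$ and $\sum_x\Lambda_{x|y}$ stays trace-preserving; each new branch obeys $\Lambda^*_{x|y}(\id)=\pi_{x|y}\id$, so the whole instrument is partially normalized. Taking the new postprocessing on the bad branches to be $a$-independent, $\nu_{b|a,x,y}:=\beta_{b|y}/\big(1-\sum_{x\in S_y}\pi_{x|y}\big)$ (a valid distribution since $\sum_b\beta_{b|y}=1-\sum_{x\in S_y}\pi_{x|y}$), reproduces exactly the fixed contribution $\beta_{b|y}\id$, while the good branches reproduce their original contribution; hence $(\Lambda,\nu)$ realizes the same $\Psi$ and is partially normalized. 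The main obstacle is this last implication: the hypothesis constrains $\Psi$ only on trivial inputs, so the work is to distill the structural dichotomy and then check that the surgically rebuilt instrument remains trace-preserving and reproduces $\Psi$ on \emph{all} inputs, not merely the trivial ones used to derive it.
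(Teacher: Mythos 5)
Your proposal is correct and takes essentially the same route as the paper's proof: the same implication cycle $(3)\Rightarrow(1)\Rightarrow(2)\Rightarrow(3)$, and for the crux $(1)\Rightarrow(2)$ the same key dichotomy (for each branch, either $\Lambda^*_{x|y}(\id)\propto\id$ or $a\mapsto\nu_{b|a,x,y}$ is constant for all $b$), followed by the same surgical replacement of the offending branches, justified by the identical observation that their total contribution to the output is a fixed multiple of the identity. The only cosmetic difference is in the reconstruction step: you replace each bad branch by preparation of a fixed state $\omega_0$ and redefine $\nu$ there (checking normalization via $\sum_b\beta_{b|y}=1-\sum_{x\in S_y}\pi_{x|y}$), whereas the paper keeps $\nu$ unchanged and replaces each bad branch by the trace-depolarizing map $Z\mapsto\Tr[\Lambda^*_x(Z)]\,\id/d$ --- both rest on the same identity derived from triviality preservation.
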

\begin{proof}
    We start by showing (3) $\implies$ (1). Consider $\Psi$ having an ancilla-free realization $(\Lambda,\nu)$ such that $\Lambda_{x|y} = \pi_{x|y} \Phi_{x,y}$ so that also $\Lambda^*_{x|y} = \pi_{x|y} \Phi^*_{x,y}$ for all $x \in [g]$ and $y \in [r]$ for some set of quantum channels $\{\Phi_{x,y}\}_{x \in [g], y \in [r]} \subset \qc{n}{d}$ and some conditional probability distribution $\pi = (\pi_{\cdot|y})_{y \in [r]}$ on $[g]$. Let $M = \{M_{\cdot|x}\}_{x \in [g]} \subset \qm{k}{d}$ be a trivial multimeter, i.e., there exists some conditional probability distribution $p=(p_{\cdot|x})_{x \in [g]}$ on $[k]$ such that $M_{a|x} = p_{a|x} \id$ for all $a \in [k]$ and $x \in [g]$. We now have for all $b \in [l]$ and $y \in [r]$ that
    \begin{align*}
         &\sum_{x=1}^g \sum_{a=1}^k \nu_{b |a,x,y} \Lambda^*_{x|y}(M_{a|x}) = \sum_{x=1}^g \sum_{a=1}^k \nu_{b |a,x,y} \pi_{x|y} p_{a|x} \Phi^*_{x,y}(\id) \\& = \left(\sum_{x=1}^g \sum_{a=1}^k \nu_{b |a,x,y} \pi_{x|y} p_{a|x} \right) \id =: q_{b|y} \id,
    \end{align*}
    where the last equality follows from unitality of the channels $\{\Phi^*_{x,y}\}_{x \in [g], y \in [r]}$. Since clearly $q=( q_{\cdot|y})_{y\in [r]}$ is a set of conditional probability distributions on $[l]$, it follows that $\Psi$ is triviality-preserving.

\smallskip

    Let us now prove (2) $\implies$ (3). Define for all $x\in[g]$ and $y \in [r]$:
    $$\Phi_{x,y}:= \begin{cases}
        \frac{1}{\pi_{x|y}} \Lambda_{x|y} &\qquad \text{ if }\pi_{x|y} \neq 0\\
        \Phi_0&\qquad \text{ if }\pi_{x|y} = 0,
    \end{cases}$$
where $\Phi_0$ is some fixed quantum channel (which does not play any role). Since $\Lambda$ is a multi-instrument, the maps $\Phi_{x,y}$ defined above are completely positive, and 
$$\forall x,y \qquad \Phi_{x,y}^*(\id) = \id,$$
proving that they are indeed quantum channels; this concludes the proof of (2) $\implies$ (3).
\smallskip

    We show now the last implication, (1) $\implies$ (2). Let $\Psi$ transform multimeters as in Eq.~\eqref{eq:ancilla-free}, with $s=1$ (no classical ancilla $\lambda$). In order for this to be triviality-preserving for any trivial multimeter $M=\{M_{\cdot|x}\}_{y\in [r]} = \{p_{\cdot|x} \id\}_{x\in [g]} \subset \qm{k}{d}$ it should result in some other trivial multimeter $N^M = \{N^M_{\cdot|y}\}_{y\in [r]}=\{q^p_{\cdot|y}\id\}_{y\in [r]} \subset \qm{l}{n}$ so that 
    \begin{equation}
    q^p_{b|y}\id = N^M_{b|y} = \sum_{x=1}^g \sum_{a=1}^k \nu_{b |a,x,y} \Lambda^*_{x|y}(M_{a|x}) =\sum_{x=1}^g \sum_{a=1}^k \nu_{b |a,x,y} p_{a|x}  \Lambda^*_{x|y}(\id)
    \end{equation}
    for all $ b\in [l]$ and $y \in [r]$. Our goal is to show that, for a possibly different realization $(\tilde \Lambda, \tilde \nu)$ of $\Psi$, $\tilde \Lambda^*_{x|y}(\id) \sim \id$ for all $x,y$.

    Note that we can reason individually for every  setting $y \in [r]$ of the resulting multimeter, hence we shall omit the variable $y$ in the rest of this proof, for the sake of simplicity. Moreover, we only need to check the previous equation for \emph{extremal} trivial multimeter. Those multimeters are parametrized by functions $\alpha: [g] \to [k]$ by 
    $$p^{(\alpha)}_{a|x} = \mathbf{1}_{a = \alpha(x)} \qquad \forall x \in [g],$$
    where $\mathbf{1}$ is the indicator function.
    The condition above reads in this case: 
    \begin{equation}\label{eq:Lambda-id-alpha}
        \forall b \in [l]: \qquad \sum_{x=1}^g  \nu_{b |\alpha(x),x}  \Lambda^*_x(\id) = q^{(\alpha)}_b \id.
    \end{equation}

    We shall partition the set $[g]$ in two subsets, depending on the behavior of the conditional probabilities $\nu$ appearing in the given realization $(\Lambda, \nu)$ of $\Psi$: $[g] = X_\neq \sqcup X_=$ with
    \begin{align*}
        X_\neq &:= \{ x \in [g] \, : \, \exists b \in [l], \, a_1,a_2 \in [k] \, \text{ s.t. } \, \nu_{b|a_1, x} \neq \nu_{b|a_2, x} \}\\
        X_= &:= \{ x \in [g] \, : \, \forall b \in [l], \, \text{the function } a \mapsto \nu_{b|a, x} \text{ is constant} \}.
    \end{align*}

    We shall now show that for all $x \in X_\neq$, $\Lambda_x(\id) \sim \id$. To this end, fix $x_0 \in X_\neq$ and $b_0 \in [l]$, $a_1, a_2 \in [k]$ such that $\nu_{b_0|a_1, x_0} \neq \nu_{b_0|a_2, x_0}$. Choose a function $\alpha_1 : [g] \to [k]$ such that $\alpha_1(x_0) = a_1$ and define $\alpha_2 : [g] \to [k]$ by
    $$\alpha_2(x) = \begin{cases}
        \alpha_1(x) &\qquad \text{ if } x \neq x_0\\
        a_2 &\qquad \text{ if } x = x_0.
    \end{cases}$$
    With these choices of $\alpha_{1,2}$, taking the difference of Eq.~\eqref{eq:Lambda-id-alpha}, we obtain
    $$(\nu_{b_0|a_1, x_0} - \nu_{b_0|a_2, x_0}) \Lambda_{x_0}^*(\id) = (q^{(\alpha_1)}_{b_0} - q^{(\alpha_2)}_{b_0} ) \id,$$
    which allows us to conclude that $\Lambda_{x_0}^*(\id) \sim \id$, as claimed. 

    Let us now consider the case of indices $x \in X_=$. Since for such indices we cannot conclude as before, we shall \emph{construct another representation} $(\tilde \Lambda, \nu)$ of $\Psi$, with the property that $\tilde \Lambda_x^*(\id) \sim \id$ for all $x \in [g]$. We define, for all matrix $Z$,
    $$\tilde \Lambda^*_x(Z) := \begin{cases}
        \Lambda^*_x(Z) &\qquad \text{ if } x \in X_{\neq}\\
        \frac{\Tr{\Lambda^*_x(Z)}}{d} \id &\qquad \text{ if } x \in X_=.
    \end{cases}$$
    Since $\{ \Lambda^*_x \}_{x \in [g]}$ were a family of completely positive maps summing up to a unital CP map, the same holds for $\{\tilde \Lambda^*_x \}_{x \in [g]}$. We need to show that $(\tilde \Lambda, \nu)$ is indeed a representation of the map $\Psi$. For a given multimeter $M = \{M_{\cdot|x}\}_{x \in [g]} \subset \qm{k}{d}$, the transformed multimeters take the form 
    \begin{align*}
            \sum_{x=1}^g \sum_{a=1}^k \nu_{b |a,x} \Lambda^*_{x}(M_{a|x}) &= \sum_{x \in X_\neq} \sum_{a=1}^k \nu_{b |a,x} \Lambda^*_{x}(M_{a|x}) + \sum_{x \in X_=} \sum_{a=1}^k \underbrace{\nu_{b |a,x}}_{=:\nu_{b|x}} \Lambda^*_{x}(M_{a|x}) \\
            &= \sum_{x \in X_\neq} \sum_{a=1}^k \nu_{b |a,x} \tilde  \Lambda^*_{x}(M_{a|x}) + \sum_{x \in X_=} \nu_{b|x} \sum_{a=1}^k  \Lambda^*_{x}(M_{a|x}) \\
            &= \sum_{x \in X_\neq} \sum_{a=1}^k \nu_{b |a,x} \tilde  \Lambda^*_{x}(M_{a|x}) + \sum_{x \in X_=} \nu_{b|x}   \Lambda^*_{x}(\id).
    \end{align*}
    Hence, in order to conclude, we need to show that for all $b \in [l]$, 
    $$ \sum_{x \in X_=} \nu_{b|x}   \Lambda^*_{x}(\id) =  \sum_{x \in X_=} \nu_{b|x}   \tilde \Lambda^*_{x}(\id).$$
    To this end, note that Eq.~\eqref{eq:Lambda-id-alpha} implies that, for all functions $ \alpha$ and $b \in [l]$, we have that
    $$q^{(\alpha)}_b \id = \sum_{x=1}^g  \nu_{b |\alpha(x),x}  \Lambda^*_x(\id) = \sum_{x \in X_\neq}  \nu_{b |\alpha(x),x}  \pi_x \id + \sum_{x\in X_=}  \nu_{b|x}  \Lambda^*_x(\id).$$
    In particular, we obtain that  
    \begin{equation}\label{eq:q-b-=}
        \sum_{x\in X_=}  \nu_{b|x}  \Lambda^*_x(\id) = q^{(=)}_b \id
    \end{equation}
    for all $b\in[l]$ for some non-negative scalar $q^{(=)}_b$, independent of $\alpha$. Taking the trace of this expression yields
    $$q^{(=)}_b = \sum_{x\in X_=}  \nu_{b|x} \frac{\Tr{\Lambda^*_x(\id)}}{d}.$$
    Plugging this value back into Eq.~\eqref{eq:q-b-=} we obtain
    $$\sum_{x\in X_=}  \nu_{b|x}  \Lambda^*_x(\id) = \sum_{x\in X_=}  \nu_{b|x} \frac{\Tr{\Lambda^*_x(\id)}}{d} \id = \sum_{x\in X_=}  \nu_{b|x} \tilde \Lambda^*_x(\id),$$
    which was our goal.
\end{proof}

What our result thus shows is that a tranformation $\Psi$ between multimeters that admits an ancilla-free realization $(\Lambda,\nu)$ is triviality-preserving if and only if there is a (possibly different) ancilla-free realization $(\tilde{\Lambda}, \nu)$ where the preprocessing part $\tilde{\Lambda}$ factorizes into just probabilistically applying some set of channels instead of some general instruments. This is exactly what is demonstrated in Fig.\ \ref{fig:Lambda-factorises-triviality-preserving}. This reflects our original motivation of considering triviality preserving simulations in the first place: Indeed, in the case of no ancilla, the only way a transformation between multimeters can be triviality-preserving is when the preprocessing part does not extract information from the quantum state but rather just probabilistically transforms the state. It is worth noting that our proof is constructive so that given the original realization $(\Lambda,\nu)$ the proof can be used to find the other realization $(\tilde{\Lambda}, \nu)$ that satisfies the conditions (2) and (3).

We also note that even in the case when the original realization has a classical ancilla, the conditions (2) and (3) (with added classical index $\lambda$ as an outcome of the preprocessing $\Lambda$) imply that the map is triviality-preserving (this is essentially just the same calculation as in the first part of the proof). However, the precise necessary condition for the map being triviality-preserving in the case of classical ancilla is still an open question.

We can now straight-forwardly apply Thm.\ \ref{thm:ancilla-free-tp-maps} to the classical simulability map, the compressibility map and the compatibility-preserving map.

\begin{cor}[Classical simulation]\label{cor:cs-tp}
    The classical simulation map defined in Eq.\ \eqref{eq:simulation} is always triviality-preserving.
\end{cor}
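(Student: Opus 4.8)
The plan is to recognize the classical simulation map as a special instance of the ancilla-free realizations treated in Thm.\ \ref{thm:ancilla-free-tp-maps}, and then simply check that its canonical realization already has the factorized form appearing in condition (3). Recall from Sec.\ \ref{sec:classical-simulation} that the classical simulation map of Eq.\ \eqref{eq:simulation} arises from the ancilla-free realization ($s=1$, $n=d$) given by $\Lambda^*_{x|y} = \pi_{x|y}\,\mathrm{id}_d$ together with the postprocessing $\nu$. Taking adjoints, this is equivalently $\Lambda_{x|y} = \pi_{x|y}\,\mathrm{id}_d$ for all $x \in [g]$, $y \in [r]$.

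First I would observe that this realization is literally of the form required by item (3) of Thm.\ \ref{thm:ancilla-free-tp-maps}: setting $\Phi_{x,y} := \mathrm{id}_d \in \qc{d}{d}$, which is a bona fide quantum channel (it is completely positive and trace-preserving, and $n=d$ here), we have $\Lambda_{x|y} = \pi_{x|y}\Phi_{x,y}$ with $\pi = (\pi_{\cdot|y})_y$ the conditional probability distribution from the definition of classical simulability. Since condition (3) is thereby satisfied, the implication (3) $\implies$ (1) of the theorem immediately yields that the map is triviality-preserving, concluding the proof.

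As a self-contained alternative not relying on Thm.\ \ref{thm:ancilla-free-tp-maps}, I would verify triviality-preservation directly: feeding a trivial multimeter $M_{a|x} = p_{a|x}\id$ into Eq.\ \eqref{eq:simulation} gives $N_{b|y} = \big(\sum_{x=1}^g \pi_{x|y}\sum_{a=1}^k \nu_{b|a,x,y}\,p_{a|x}\big)\id =: q_{b|y}\id$, and a one-line check using $\sum_b \nu_{b|a,x,y}=1$, $\sum_a p_{a|x}=1$ and $\sum_x \pi_{x|y}=1$ shows that $q=(q_{\cdot|y})_y$ is a valid family of probability distributions, so that $N$ is again trivial. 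There is essentially no obstacle here; the only point requiring care is matching the orientation of adjoints (the realization is stated in terms of the Heisenberg-picture maps $\Lambda^*_{x|y}$, whereas condition (3) is phrased in terms of the Schrödinger-picture instrument $\Lambda_{x|y}$), but since $\mathrm{id}_d$ is self-adjoint this matching is immediate.
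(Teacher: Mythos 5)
Your proof is correct and follows essentially the same route as the paper: the paper's proof also takes the canonical ancilla-free realization $\Lambda_{x|y} = \pi_{x|y}\,\mathrm{id}_d$ from Sec.\ \ref{sec:classical-simulation} and observes that it satisfies one of the equivalent conditions of Thm.\ \ref{thm:ancilla-free-tp-maps} (the paper checks condition (2), partial normalization, whereas you check condition (3), the factorized form --- an immaterial difference since both verifications are one line). Your additional self-contained computation for trivial multimeters is fine as well; it just reproduces the (3) $\implies$ (1) step of the theorem's proof in this special case.
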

\begin{proof}
    The explicit realization $(\Lambda, \nu)$ that we give in Sec.\ \ref{sec:classical-simulation} is defined by setting $n=d$ and $\Lambda_{x|y} = \pi_{x|y} id_d$ for all $x \in [g]$ and $y \in [r]$, where $id_d$ is the identity map on $\complex^d$. Clearly it is of the form given in the condition (2) of Thm.\ \ref{thm:ancilla-free-tp-maps}.
\end{proof}

\begin{cor}[Compressibility]\label{cor:c-tp}
    The compression map defined in Eq.\ \eqref{eq:compression} is triviality-preserving if and only if the compressing instrument $\Phi  \in \qi{C}{n}{d}$ is of the form $\Phi_c = \pi_{c} \Omega_{c}$ for all $c \in C$ for some probability distribution $\pi$ on $ [C]$ and some set of channels $\{\Omega_{c}\}_{c \in [C]} \subset \qc{n}{d}$.
\end{cor}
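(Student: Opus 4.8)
The plan is to reduce the corollary to a single linear-algebraic condition on the compressing instrument, namely that each effect $E_c := \Phi^*_c(\id)$ of its induced POVM is proportional to the identity. First I would record the equivalence
$$\Phi_c = \pi_c\, \Omega_c \text{ for some } \pi_c \geq 0,\ \Omega_c \in \qc{n}{d} \iff \Phi^*_c(\id) = \pi_c\, \id,$$
which is immediate: if $\Phi_c = \pi_c\Omega_c$ with $\Omega_c$ a channel then $\Phi^*_c(\id) = \pi_c\Omega^*_c(\id) = \pi_c\id$ by unitality of $\Omega^*_c$; conversely, given $\Phi^*_c(\id) = \pi_c\id$ with $\pi_c > 0$, the map $\Omega_c := \pi_c^{-1}\Phi_c$ is completely positive and satisfies $\Omega^*_c(\id) = \id$, hence is a channel (for $\pi_c = 0$ one takes $\Omega_c$ to be any fixed channel). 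Moreover $\sum_c E_c = (\sum_c \Phi_c)^*(\id) = \id$ because $\sum_c \Phi_c \in \qc{n}{d}$ is a channel, so $\pi = (\pi_c)_{c \in [C]}$ is automatically a probability distribution on $[C]$. It therefore suffices to prove that the compression map is triviality-preserving if and only if $E_c \propto \id$ for all $c$.

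For the ``if'' direction I would argue directly: plugging a trivial input multimeter $M_{a|x,c} = p_{a|x,c}\id$ into Eq.~\eqref{eq:compression} gives
$$N_{a|x} = \sum_{c=1}^C \Phi^*_c(p_{a|x,c}\id) = \sum_{c=1}^C p_{a|x,c}\, E_c = \Big(\sum_{c=1}^C p_{a|x,c}\,\pi_c\Big)\id,$$
so the output multimeter is trivial, the coefficients $q_{a|x} := \sum_c p_{a|x,c}\pi_c$ summing to $1$ over $a$ since each $p_{\cdot|x,c}$ is normalized and $\sum_c\pi_c = 1$. Equivalently, one can observe that the explicit realization $(\Lambda,\nu)$ constructed in Sec.~\ref{sec:compressibility}, for which $\Lambda^*_{x',c|x}(\id) = \mathbf{1}_{x=x'}\pi_c\id$, satisfies condition (2) of Thm.~\ref{thm:ancilla-free-tp-maps} with the weights $\pi_{x',c|x} = \mathbf{1}_{x=x'}\pi_c$, and hence is triviality-preserving.

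The ``only if'' direction is where the real work lies, and here I would isolate the effects $E_c$ one at a time using \emph{extremal} trivial inputs rather than appealing to Thm.~\ref{thm:ancilla-free-tp-maps}: the theorem only guarantees the existence of \emph{some} partially normalized realization, whereas the corollary is a statement about the \emph{given} instrument $\Phi$. Assuming the map is triviality-preserving and $k \geq 2$, I fix $c_0 \in [C]$ and an outcome $a_0 \in [k]$, and choose the trivial multimeter with $p_{a_0|x,c_0} = 1$ and $p_{a_1|x,c} = 1$ for $c \neq c_0$ with some fixed $a_1 \neq a_0$. Then Eq.~\eqref{eq:compression} collapses to $N_{a_0|x} = E_{c_0}$, which must be proportional to $\id$ by triviality-preservation; since $c_0$ was arbitrary this yields $E_c \propto \id$ for all $c$, completing the reduction. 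The main obstacle I anticipate is precisely this isolation step and its one genuine exception: when $k = 1$ every POVM equals $\id$, every multimeter is trivial, and the map is vacuously triviality-preserving without $\Phi$ needing to factorize, so the equivalence must be understood for $k \geq 2$ (or the degenerate single-outcome case flagged separately).
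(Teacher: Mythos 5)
Your proposal is correct, and it takes a genuinely different route from the paper's proof. The paper proves the ``only if'' direction by invoking Thm.~\ref{thm:ancilla-free-tp-maps}: since the realization from Sec.~\ref{sec:compressibility} has deterministic postprocessing $\nu_{a|a',x',c,x} = \mathbf{1}_{a=a'}$, every index $(x',c)$ lies in the set $X_{\neq}$ appearing in that theorem's proof, so the construction there leaves the given realization untouched ($\tilde\Lambda = \Lambda$) and conditions (2)/(3) can be read off directly for $\Lambda^*_{x',c|x} = \mathbf{1}_{x=x'}\Phi^*_c$; a short computation then extracts $\pi_c$ and the channels $\Omega_c$. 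You instead bypass the theorem entirely: your preliminary equivalence $\Phi_c = \pi_c\Omega_c \iff \Phi^*_c(\id)\propto\id$, combined with the extremal trivial multimeters that route outcome $a_0$ only through branch $c_0$ (so that $N_{a_0|x} = \Phi^*_{c_0}(\id)$ exactly, with no differencing needed), gives a self-contained argument. What your approach buys: it makes explicit, and then resolves, the issue that the \emph{statement} of Thm.~\ref{thm:ancilla-free-tp-maps} only yields \emph{some} partially normalized realization rather than a statement about the given instrument $\Phi$ --- a gap the paper can only close by reaching inside the theorem's proof via the $X_{\neq}$ observation --- and your isolation trick is more elementary than the theorem's differencing argument. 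What the paper's approach buys: it shows how the general characterization specializes to concrete simulation schemes, and the $X_{\neq}$ criterion is reusable for any realization with outcome-sensitive postprocessing. Finally, your flag of the degenerate case $k=1$ is a genuine edge case that the paper silently overlooks: for $k=1$ the paper's assertion that there exist $a,a',a''$ with $\nu_{a|a',x',c,x}\neq\nu_{a|a'',x',c,x}$ is false (then $X_{\neq}=\emptyset$), and indeed the ``only if'' direction of the corollary fails there, since every multimeter is trivial while $\Phi$ need not factorize; so both the corollary and both proofs should be read under the implicit assumption $k\geq 2$.
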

\begin{proof}
    As explained in Sec.\ \ref{sec:compressibility}, the compressibility map given by Eq.\ \eqref{eq:compression} admits an ancilla-free realization $(\Lambda,\nu)$, where $g=g' \cdot C$, $l=k$ and $r=g'$, $\nu_{a|a',x',c,x} = \mathbf{1}_{a=a'}$ for all $a,a' \in [k]$ and $x,x' \in [g']$, $c \in [C]$ and $\Lambda^*_{x',c|x} = \mathbf{1}_{x=x'} \Phi^*_c$ for all $x,x' \in [g']$ and $c \in [C]$ for some instrument $\Phi \in \qi{C}{n}{d}$. By applying the latter part of the proof of Thm.\ \ref{thm:ancilla-free-tp-maps} in this case, we see that since for all $x,x' \in [g]$ and $c \in [C]$ there exists $a,a',a'' \in [k]$ such that $\nu_{a|a',x',c,x} \neq \nu_{a|a'',x',c,x}$, we have that $(x',c) \in X_{\neq}$ for all $x' \in [g]$ and $c \in [C]$ so that we actually can take $\tilde{\Lambda} = \Lambda$ in the proof. This means that we can apply the necessary and sufficient condition (3) (or (2)) directly to the current realization $(\Lambda, \nu)$.
    
    Thus, by Thm.\ \ref{thm:ancilla-free-tp-maps} the compressibility map is triviality-preserving if and only if 
    $$\mathbf{1}_{x=x'} \Phi^*_c=\Lambda^*_{x',c|x} = \tilde{\pi}_{x',c|x} \tilde{\Omega}^*_{x',c,x}$$ 
    for all $c \in [C]$ and $x,x' \in [g]$ for some conditional probability distribution $\tilde{\pi}=(\tilde{\pi}_{\cdot,\cdot|x})_{x \in [g]}$ on $[g] \times [C]$ and some set of channels $\{\tilde{\Omega}_{x',c,x}\}_{c \in [C], x,x' \in [g]} \subset \qc{n}{d}$. It follows that \begin{equation}\label{eq:compressibility-tp}
        \Phi^*_c = \sum_{x'} \tilde{\pi}_{x',c|x} \tilde{\Omega}_{x',c,x} 
    \end{equation}
    and furthermore that $\Phi^*_c(\id) = \sum_{x'} \tilde{\pi}_{x',c|x} \id$ for all $c \in [C]$ and $x \in [g]$. Let us define a probability distribution $\pi$ on $[C]$ by setting $\pi_c := \sum_{x'} \tilde{\pi}_{x',c|x}$ for all $c \in [C]$ which we note that is now independent of $x \in [g]$. We note that $\pi_c \neq  0$ if and only if $\Phi^*_c(\id)\neq 0$ if and only if $\Phi^*_c \neq 0$. We can now define a set of CP maps $\{\Omega^*_c\}_{c \in [C]}$ by setting $\Omega^*_c = \Phi^*_c/\pi_c$ for all $\pi_c \neq 0$ and $\Omega^*_c= \Omega_0$ for all $\pi_c = 0$ for some fixed channel $\Omega_0 \in \qc{n}{d}$. From Eq.\ \eqref{eq:compressibility-tp} it follows that the maps are actually unital so that $\{\Omega_c\}_{c \in [C]} \subset \qc{n}{d}$. The claim follows.
\end{proof}

\begin{cor}[Compatibility-preserving]\label{cor:cp-tp}
    The compatibility-preserving map defined in Eq.\ \eqref{eq:comp-preserving} is triviality-preserving if the preprocessing instruments $\Gamma = \{\Gamma_{\cdot|\kappa}\}_{\kappa \in [K]} \in \qi{L}{n}{s}$ are of the form $\Gamma_{\lambda|\kappa} = \mu_{\lambda|\kappa} \Phi_{\lambda, \kappa}$ for all $\lambda \in [L]$ and $\kappa \in [K]$ for some conditional probability distribution $\mu=(\mu_{\cdot|\kappa})_{\kappa \in [K]}$ on $[L]$ and some set of channels $\{\Phi_{\lambda, \kappa}\}_{\lambda \in [L], \kappa \in [K]} \subset \qc{n}{d}$.
\end{cor}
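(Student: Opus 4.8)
The plan is to recognize this as the sufficient (``if'') direction of Thm.~\ref{thm:ancilla-free-tp-maps} applied to the classical-ancilla realization of the compatibility-preserving map, rather than as an equivalence. Recall from Sec.~\ref{sec:comp-preserving} that this map admits a realization with a classical ancilla of size $s = K \cdot L$, namely $\Lambda^*_{x,\lambda,\kappa|y} = p_\kappa \pi_{x|y,\lambda,\kappa}\Gamma^*_{\lambda|\kappa}$. Because this realization genuinely uses a classical ancilla (the indices $\lambda,\kappa$), the characterization of Thm.~\ref{thm:ancilla-free-tp-maps} does not apply verbatim; however, as noted in the remark following its proof, the factorized form in conditions (2)/(3) remains \emph{sufficient} for triviality preservation even in the presence of a classical ancilla. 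I would therefore only carry out the ``(3) $\implies$ (1)'' computation, now with the extra classical indices $\lambda,\kappa$ carried along.

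First I would substitute the hypothesis $\Gamma_{\lambda|\kappa} = \mu_{\lambda|\kappa}\Phi_{\lambda,\kappa}$ into the realization, giving $\Lambda_{x,\lambda,\kappa|y} = (p_\kappa \pi_{x|y,\lambda,\kappa}\mu_{\lambda|\kappa})\,\Phi_{\lambda,\kappa}$, a nonnegative scalar times a genuine channel. A quick check confirms that the coefficient is, for each fixed $y$, a probability distribution over the joint index $(x,\lambda,\kappa)$: summing it out and using $\sum_x \pi_{x|y,\lambda,\kappa}=1$, then $\sum_\lambda \mu_{\lambda|\kappa}=1$, then $\sum_\kappa p_\kappa=1$ yields $1$. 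This places the realization exactly into the factorized channel form of condition (3), with the classical ancilla index adjoined.

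Then comes the main computation. Feeding a trivial multimeter $M_{a|x} = t_{a|x}\id$ into Eq.~\eqref{eq:comp-preserving} and using that each $\Phi^*_{\lambda,\kappa}$ is unital, so that $\Gamma^*_{\lambda|\kappa}(\id) = \mu_{\lambda|\kappa}\,\Phi^*_{\lambda,\kappa}(\id) = \mu_{\lambda|\kappa}\id$, every effect is pulled out of the channel as a scalar multiple of the identity, yielding
\begin{equation*}
    N_{b|y} = \left( \sum_{\kappa=1}^K \sum_{x=1}^g \sum_{a=1}^k \sum_{\lambda=1}^L p_\kappa\, \nu_{b|a,x,y,\lambda,\kappa}\, \pi_{x|y,\lambda,\kappa}\, \mu_{\lambda|\kappa}\, t_{a|x} \right) \id =: q_{b|y}\id .
\end{equation*}
I would then observe that $q = (q_{\cdot|y})_y$ is automatically a conditional probability distribution on $[l]$: nonnegativity is immediate, and the normalization $\sum_b q_{b|y} = 1$ follows for free since $\sum_b \nu_{b|a,x,y,\lambda,\kappa}=1$ together with the three normalizations above (equivalently, because each $N_{\cdot|y}$ is a POVM). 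Hence the simulated multimeter $N$ is trivial, which is precisely triviality preservation.

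The only real subtlety — and the reason the statement is phrased as ``if'' rather than ``if and only if'' — is that Thm.~\ref{thm:ancilla-free-tp-maps} characterizes triviality preservation only in the \emph{ancilla-free} case, whereas here the realization carries a classical ancilla of dimension $K \cdot L$; the necessary condition in the classical-ancilla case is left open in the paper, so I would not attempt a converse (in contrast to Cor.~\ref{cor:cs-tp} and Cor.~\ref{cor:c-tp}). A minor bookkeeping point is the notational clash between the symbol $p$ (the distribution on $[K]$) and the weights of the trivial multimeter, which I rename to $t$ above to keep the computation unambiguous. Beyond this there is no real obstacle: the argument is exactly the unitality calculation from the first part of the proof of Thm.~\ref{thm:ancilla-free-tp-maps}.
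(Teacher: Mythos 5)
Your proof is correct and follows exactly the route the paper takes: the paper's own proof of this corollary is a one-line reference to ``a straightforward calculation as in the beginning of the proof of Thm.~\ref{thm:ancilla-free-tp-maps}'' (i.e., the (3) $\implies$ (1) unitality computation, extended to carry the classical indices $\lambda,\kappa$), which is precisely what you spell out, including the correct observation that only the sufficient direction is available because the realization uses a classical ancilla. No gaps; your version is simply a more explicit write-up of the same argument.
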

\begin{proof}
    The claim follows from a straightforward calculation as in the beginning of the proof of Thm.\ \ref{thm:ancilla-free-tp-maps}.
\end{proof}
Since the realization of the compatibility-preserving map utilizes a classical ancilla, Thm.\ \ref{thm:ancilla-free-tp-maps} cannot be applied to see whether this condition is also a necessary one for the map to be triviality-preserving. We leave the necessary condition as an open question.

Finally, let us give a further example of an explicit map that is not triviality-preserving.

\begin{example}
Let us fix $r=1$ and $g=k=l=d=n= 2$ so that the map transforms two dichotomic qubit POVMs, say $M_{\cdot|0}$ and $M_{\cdot|1}$, to a single dichotomic qubit POVM, say $N$. Let the map $M \mapsto N$ be defined  as
$$N_b  = \begin{bmatrix}
    \braket{ 0 | M_{b|0} | 0} & 0 \\
    0 & \braket{ 1 | M_{b|1} | 1}
\end{bmatrix}$$
for all $b \in \{0,1\}$ for the computational basis $\{\ket{0}, \ket{1}\}$ of $\complex^2$.
This map admits a realization $(\Lambda, \nu)$ with $\nu_{b|a,x} = \mathbf{1}_{b=a}$ and 
\begin{align*}
    \Lambda_0(Z) = \Lambda^*_0(Z) &= \begin{bmatrix}
    \braket{ 0 | Z | 0} & 0 \\
    0 & 0
\end{bmatrix} \\
    \Lambda_1(Z) = \Lambda^*_1(Z) &= \begin{bmatrix}
    0 & 0 \\
    0 & \braket{ 1 | Z | 1}
\end{bmatrix}.
\end{align*}
Note that neither of these maps satisfy $\Lambda_x^*(\id) \sim \id$ and the map $\Psi$ is not triviality-preserving, since if we take $M=\{M_{\cdot|0}, M_{\cdot|1}\} = \{p_{\cdot|0}\id, p_{\cdot|1}\id\}$ for some probability distributions with $p_{0|0}=p$, $p_{1|0}=1-p$, $p_{0|1}=q$, $p_{1|1}=1-q$ for some $q, p \in [0,1]$, we see that $M$ is mapped to a POVM $N$ such that 
$$N_0 = \begin{bmatrix}
    p & 0 \\ 0 & q 
\end{bmatrix} , \quad
N_1 = \begin{bmatrix}
    1-p & 0 \\ 0 & 1-q 
\end{bmatrix}  $$
which is not trivial in general.
\end{example}

\subsection{Trash-and-prepare maps} 

Since triviality-preserving maps are mostly not trash-and-prepare (they are trash-and-prepare only in the case when a fixed trivial multimeter is prepared), in order to determine when a transformation is a simulation, i.e., triviality-preserving, it might be easier to first check that the map is not trash-and-prepare.

To start ruling out the trash-and-prepare maps from the general quantum superchannels between multimeters we make a simple observation. Let a superchannel $\Psi: \M(\complex)_{kdg} \to \M(\complex)_{lnr}$ admit a realization $(\complex^s,\Lambda,\nu)$ as in Thm.\ \ref{thm:multimeter-transformation-realization}. Let us assume that the POVMs in $B= \{B_{\cdot|a,x,y}\}_{a, \in [k],x \in [g], y \in [r]} \subset \qm{l}{s}$ are independent of the output $a \in [k]$ of the input multimeters $M$, i.e., $B_{\cdot|a,x,y} = B_{\cdot|a',x,y} =: B_{\cdot|x,y}$ for all $a,a' \in [k]$, $x \in [g]$ and $y \in [r]$. Now we see that the resulting POVMs are given by
\begin{equation*}
    \sum_{x=1}^g \sum_{a=1}^k \Lambda^*_{x|y}(M_{a|x} \otimes B_{\cdot |a,x,y}) = \sum_{x=1}^g \sum_{a=1}^k \Lambda^*_{x|y}(M_{a|x} \otimes B_{\cdot |x,y}) =  \sum_{x=1}^g \Lambda^*_{x|y}(\id_d \otimes B_{\cdot |x,y}) \in \qm{l}{n}
\end{equation*}
for all $y \in [r]$. Thus, in this case we see that the process of transforming the multimeter $M$ is just to ignore $M$ and prepare the resulting multimeter irrespective of $M$. This means that it is trash-and-prepare.

The above result is intuitive: in Fig.\ \ref{fig:multimeter-simulation} for the multimeter $B$ being independent of the outcome $a$ corresponds to having no classical wire connecting the input multimeter $M$ and the fixed multimeter $B$. If this is the case the outcome $a$ of the multimeter $M$ can be simply discarded and the multimeter $B$ is applied to the ancilla not affected by $a$ at all. Thus, in the end a fixed multimeter is applied irrespective of the input multimeter $M$.

We see that the above sufficient condition for a map being trash-and-prepare works even in the case of a quantum ancilla. However, in the case of a classical ancilla, it turns out that the trash-and-prepare maps are exactly of this type:

\begin{thm}\label{thm:ancilla-free-tap-maps}
    Let $\Psi: \M(\complex)_{kdg} \to \M(\complex)_{lnr}$ be a quantum superchannel that admits a realization $(s,\Lambda,\tilde \nu)$ with a classical ancilla. Then $\Psi$ is a trash-and-prepare map if and only if it admits a (possibly different) realization $(s,\Lambda,\nu)$ with a classical ancilla such that all the conditional probability distributions $\nu= \{\nu_{\cdot|a,x,y,\lambda}\}_{a \in [k],x \in [g],y\in[r],\lambda \in [s]}$ on $[l]$ are independent of $a \in [k]$. Furthermore, if $s=1$, then we can take $\nu = \tilde \nu$.
\end{thm}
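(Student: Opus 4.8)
The plan is to prove the two implications separately, treating the backward direction as the short computation already recorded just before the statement and concentrating the real work on the forward direction, whose engine is an averaging trick over the outcome variable $a$.

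For the backward direction, suppose $\Psi$ admits a realization $(s,\Lambda,\nu)$ with $\nu$ independent of $a$, i.e.\ $\nu_{b|a,x,y,\lambda}=\nu_{b|x,y,\lambda}$. Substituting this into Eq.~\eqref{eq:ancilla-free} and using $\sum_{a=1}^k M_{a|x}=\id_d$ for every $x$, the simulated effects collapse to $\sum_{x,\lambda}\nu_{b|x,y,\lambda}\Lambda^*_{x,\lambda|y}(\id_d)$, which no longer depends on the input multimeter $M$. Hence $\Psi$ produces a fixed multimeter $N$ for every input and is therefore trash-and-prepare.

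For the forward direction, assume $\Psi$ is trash-and-prepare with realization $(s,\Lambda,\tilde\nu)$, so its output multimeter $N$ is the same for all inputs. I would define the candidate postprocessing by averaging out $a$, setting $\nu_{b|x,y,\lambda}:=\frac{1}{k}\sum_{a=1}^k\tilde\nu_{b|a,x,y,\lambda}$; for each fixed $(x,y,\lambda)$ this is a genuine probability distribution on $[l]$, being a convex combination of the $\tilde\nu_{\cdot|a,x,y,\lambda}$, so $(s,\Lambda,\nu)$ is a legitimate realization with a classical ancilla. To identify its output, I would feed the maximally mixed multimeter $M_{a|x}=\frac{1}{k}\id_d$ into the \emph{original} realization: pulling the scalar through $\Lambda^*_{x,\lambda|y}$ in Eq.~\eqref{eq:ancilla-free} yields exactly $\sum_{x,\lambda}\nu_{b|x,y,\lambda}\Lambda^*_{x,\lambda|y}(\id_d)$, which by the trash-and-prepare hypothesis equals the fixed $N_{b|y}$. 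By the backward direction the new realization $(s,\Lambda,\nu)$ is itself trash-and-prepare, and the computation just made shows its fixed output is precisely this same $N$. Since $(s,\Lambda,\nu)$ and $\Psi$ send every input multimeter to the same $N$, they agree on all inputs and hence realize the same superchannel, giving the required realization with $\nu$ independent of $a$.

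Finally, for the addendum $s=1$, I would argue that $\tilde\nu$ is already independent of $a$ wherever it affects the map, so no averaging is needed. Fixing $y$ and writing $E_x:=\Lambda^*_{x|y}(\id_d)$ (positive operators summing to $\id_n$), evaluate Eq.~\eqref{eq:ancilla-free} on the extremal trivial multimeters $M_{a|x}=\mathbf{1}_{a=\alpha(x)}\id_d$ parametrized by functions $\alpha:[g]\to[k]$; this gives $N_{b|y}=\sum_x\tilde\nu_{b|\alpha(x),x}E_x$, which must be independent of $\alpha$. Changing $\alpha$ at a single point $x_0$ produces $(\tilde\nu_{b|a_1,x_0}-\tilde\nu_{b|a_2,x_0})E_{x_0}=0$, so whenever $E_{x_0}\neq 0$ the numbers $\tilde\nu_{b|a,x_0}$ are independent of $a$; and if $E_{x_0}=0$ then, since $\Lambda^*_{x_0|y}$ is completely positive, $\Lambda^*_{x_0|y}=0$, whence the corresponding term $\sum_a\tilde\nu_{b|a,x_0}\Lambda^*_{x_0|y}(M_{a|x_0})$ vanishes regardless of $\tilde\nu$ and its $a$-dependence is irrelevant. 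Thus $(1,\Lambda,\tilde\nu)$ already has the required form and we may take $\nu=\tilde\nu$. The conceptually delicate point, and the precise reason $s=1$ is special, is exactly this single-$\lambda$ cancellation: for general $s$ the difference equation becomes $\sum_\lambda(\tilde\nu_{b|a_1,x_0,\lambda}-\tilde\nu_{b|a_2,x_0,\lambda})E_{x_0,\lambda}=0$, where distinct effects $E_{x_0,\lambda}$ can cancel so that $\tilde\nu$ genuinely depends on $a$, which is what makes the averaging construction indispensable there.
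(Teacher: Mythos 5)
Your proof is correct, and the forward implication takes a genuinely different — and leaner — route than the paper's. The paper's proof fixes a reference outcome tuple, feeds in multimeters whose $x'$-th POVM has effects $B$ and $\id_d-B$ for an \emph{arbitrary} effect $B$, and uses the fact that effects span $\M(\complex)_d$ to derive the operator identity $\sum_{\lambda}\tilde\nu_{b|a,x,y,\lambda}\Lambda^*_{x,\lambda|y}=\sum_{\lambda}\tilde\nu_{b|a',x,y,\lambda}\Lambda^*_{x,\lambda|y}$ (Eq.~\eqref{eq:t-a-p-cond}); only then does it introduce the averaged $\nu$ and verify term-by-term that $(s,\Lambda,\nu)$ reproduces the action of $\Psi$ on every multimeter. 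You bypass Eq.~\eqref{eq:t-a-p-cond} entirely: since trash-and-prepare means the output is one fixed $N$, and since any realization with $a$-independent postprocessing is automatically constant (your backward direction), it suffices to match the two constants, which you do by evaluating the original realization at the single uniform trivial multimeter $M_{a|x}=\frac{1}{k}\id_d$ — precisely the input at which the average $\frac{1}{k}\sum_a\tilde\nu_{b|a,x,y,\lambda}$ appears on its own. This is a real simplification for the general-$s$ claim, and it is legitimate because a realization, per the paper's definition, only has to reproduce the action of $\Psi$ on multimeter Choi matrices. What the paper's heavier argument buys is the pointwise identity \eqref{eq:t-a-p-cond} itself: it localizes $a$-independence at each $(x,y)$, makes the $s=1$ addendum immediate, and is what Cor.~\ref{cor:cp-tap} later points back to. Lacking \eqref{eq:t-a-p-cond}, you need a separate argument for $s=1$, and the one you give — extremal trivial multimeters parametrized by $\alpha:[g]\to[k]$, a one-point change of $\alpha$, plus the observation that a CP map with $\Lambda^*_{x|y}(\id_d)=0$ vanishes identically — is sound; your explicit treatment of the degenerate case $\Lambda^*_{x|y}=0$ (where $\tilde\nu$ may formally depend on $a$ but harmlessly so) is in fact slightly more careful than the paper, which asserts that \eqref{eq:t-a-p-cond} forces $a$-independence of $\tilde\nu$ without excluding vanishing $\Lambda^*_{x|y}$. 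Your closing remark, that for $s>1$ cancellations between different $\lambda$'s block the pointwise argument and make the averaging construction unavoidable, correctly identifies why the classical ancilla changes the statement from ``$\nu=\tilde\nu$'' to ``a possibly different realization.''
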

\begin{proof}
    The sufficiency of the condition follows from the more general observation made before the statement of the theorem.
    
    On the other hand, if a quantum superchannel $\Psi$ with a realization $(s,\Lambda,\tilde \nu)$ with a classical ancilla transforms a set of POVMs in $\qm{k}{d}$ as in Eq.~\eqref{eq:ancilla-free}, then in order for this to be of the trash-and-prepare type, it should result in some fixed set of POVMs $N = \{N_{\cdot|y}\}_{y\in [r]} \subset \qm{l}{n}$ so that 
\begin{equation}\label{eq:no-ancilla-tap}
    N_{b|y} = \sum_{x=1}^g \sum_{a=1}^k \sum_{\lambda=1}^s \tilde \nu_{b |a,x,y, \lambda} \Lambda^*_{x,\lambda|y}(M_{a|x}) 
\end{equation}
for all $ b\in [l]$ and $y \in [r]$ for all sets of POVMs $M=\{M_{\cdot|x}\}_{x \in [g]} \subset \qm{k}{d}$. As this should hold for all $M$, if we take $M$ to consist of trivial POVMs, i.e.~$M_{a|x} = p_{a|x} \id_d$ for all $a \in [k]$ and $x \in [g]$ for some conditional probability distribution $p = (p_{\cdot|x})_{x \in [g]}$ on $[k]$ , then it follows that
\begin{equation}
    N_{b|y} = \left[ \sum_{x=1}^g \sum_{\lambda=1}^s \left(\sum_{a=1}^k p_{a|x} \tilde \nu_{b|a,x,y,\lambda} \right) \Lambda^*_{x,\lambda|y}  \right](\id_d) 
\end{equation}
for all $ b\in [l]$ and $y \in [r]$. Now if we fix $(a_1, \ldots, a_g)\in [k]^g$ and set $p_{a_x|x}=1$ for all $x \in [g]$, we see that 
\begin{equation}
    N_{b|y} = \left( \sum_{x=1}^g \sum_{\lambda=1}^s \tilde \nu_{b|a_x,x,y,\lambda} \Lambda^*_{x,\lambda|y}  \right)(\id_d) 
\end{equation}
for all $ b\in [l]$ and $y \in [r]$. Since he choice of $(a_1, \ldots, a_g)\in [k]^g$ was arbitrary, we have that  $N_{b|y} = \left( \sum_{x=1}^g \sum_{\lambda=1}^s \tilde \nu_{b|a,x,y,\lambda} \Lambda^*_{x,\lambda|y}  \right)(\id_d) = \left( \sum_{x=1}^g \sum_{\lambda=1}^s \tilde \nu_{b|a',x,y,\lambda} \Lambda^*_{x,\lambda|y} \right)(\id_d)$ for all $a,a' \in [k]$, $ b\in [l]$ and $y \in [r]$.

Let us again fix $(a_1, \ldots, a_g) \in [k]^g$ and furthermore let us fix also $x' \in [g]$ and $a'_{x'} \in [k]$ such that $a'_{x'} \neq a_{x'}$ and define POVMs $M=\{M_{\cdot|x}\}_{x \in [g]} \subset \qm{k}{d}$ by setting $M_{a'_{x'}|x'} = B$ and $M_{a_{x'}|x'} = \id_d - B$ for some fixed effect $B \in \qe{d}$ and $M_{a_{x}|x} = \id_d$ for all $x \neq x'$. Inserting these POVMs in Eq.~\eqref{eq:no-ancilla-tap}, we see that
\begin{align*}
    N_{b|y} &= \sum_{x=1}^g \sum_{a=1}^k \sum_{\lambda=1}^s \tilde \nu_{b |a,x,y,\lambda} \Lambda^*_{x,\lambda|y}(M_{a|x}) \\
    &= \sum_{x\neq x'} \sum_{\lambda=1}^s \tilde \nu_{b |a_x,x,y,\lambda} \Lambda^*_{x,\lambda|y}(\id_d) + \sum_{\lambda=1}^s \tilde \nu_{b|a'_{x'},x',y,\lambda} \Lambda^*_{x',\lambda|y}(B) +  \sum_{\lambda=1}^s \tilde \nu_{b|a_{x'},x',y,\lambda} \Lambda^*_{x',\lambda|y}(\id_d-B) \\
    &= \sum_{x=1}^g \sum_{\lambda=1}^s \tilde \nu_{b |a_x,x,y,\lambda} \Lambda^*_{x,\lambda|y}(\id_d) + \sum_{\lambda=1}^s \tilde \nu_{b|a'_{x'},x',y,\lambda} \Lambda^*_{x',\lambda|y}(B)-  \sum_{\lambda=1}^s \tilde \nu_{b|a_{x'},x',y,\lambda} \Lambda^*_{x',\lambda|y}(B) \\
    &=  N_{b|y} + \left(  \sum_{\lambda=1}^s \tilde \nu_{b|a'_{x'},x',y,\lambda} \Lambda^*_{x',\lambda|y}- \sum_{\lambda=1}^s \tilde \nu_{b|a_{x'},x',y,\lambda} \Lambda^*_{x',\lambda|y}\right)(B)
\end{align*}
for all $ b\in [l]$ and $y \in [r]$. Since $x',a_{x'},a'_{x'}$ were chosen arbitrarily and since the set of effects spans $\M(\complex)_d$, we must have that 
\begin{equation}\label{eq:t-a-p-cond}
    \sum_{\lambda=1}^s \tilde \nu_{b|a',x,y,\lambda} \Lambda^*_{x,\lambda|y}= \sum_{\lambda=1}^s \tilde \nu_{b|a,x,y,\lambda} \Lambda^*_{x,\lambda|y} 
\end{equation} 
for all $a,a' \in [k]$, $b \in[l]$, $x \in [g]$ and $y \in [r]$. Let us now define another set of conditional probability distributions $\nu = \{\nu_{\cdot|a,x,y,\lambda}\}_{a \in [k], x \in [g], y \in [r], \lambda \in [s]}$ on $[l]$ by setting 
$$
\nu_{b|a,x,y,\lambda} := \frac{1}{k} \sum_{a'=1}^k \tilde \nu_{b|a',x,y,\lambda}
$$ 
for all $a \in [k]$, $b \in [l]$, $x \in [g]$, $y \in [r]$ and $\lambda \in [s]$. By definition $\nu$ is now independent of the outcome $a \in [k]$, and we can show that $(s, \Lambda, \nu)$ is also a realization of $\Psi$ with a classical ancilla: Let us denote $\Omega_{b,x|y} := \sum_{\lambda} \tilde \nu_{b |a,x,y, \lambda} \Lambda^*_{x,\lambda|y}$ for all $b \in [l]$, $a \in [k]$, $x \in [g]$ and $y \in [r]$, which by Eq.\ \eqref{eq:t-a-p-cond} is indeed independent of the outcome $a \in [k]$. On the other hand, also 
\begin{align*}
&\Omega_{b,x|y} = \frac{1}{k} \sum_{a=1}^k \Omega_{b,x|y} = \frac{1}{k} \sum_{a=1}^k \sum_{\lambda=1}^s \tilde \nu_{b |a,x,y, \lambda} \Lambda^*_{x,\lambda|y} = \sum_{\lambda=1}^s \left( \frac{1}{k} \sum_{a=1}^k  \tilde \nu_{b |a,x,y, \lambda} \right) \Lambda^*_{x,\lambda|y} \\&= \sum_{\lambda=1}^s \nu_{b |a,x,y, \lambda} \Lambda^*_{x,\lambda|y} 
\end{align*}
for all $b \in [l]$, $a \in [k]$, $x \in [g]$ and $y \in [r]$. Now we see that
\begin{align*}
     \sum_{x=1}^g \sum_{a=1}^k \sum_{\lambda=1}^s \nu_{b |a,x,y,\lambda} \Lambda^*_{x,\lambda|y}(M_{a|x}) &=  \sum_{x=1}^g \sum_{a=1}^k \Omega_{b,x|y} (M_{a|x}) =  \sum_{x=1}^g \sum_{a=1}^k \sum_{\lambda=1}^s \tilde \nu_{b |a,x,y,\lambda} \Lambda^*_{x,\lambda|y}(M_{a|x})
\end{align*}
for all $ b\in [l]$, $y \in [r]$ and for all $M=\{M_{\cdot|x}\}_{x \in [g]} \subset \qm{k}{d}$. Hence, $(s,\Lambda,\nu)$ is also a realization of $\Psi$ and this completes the proof. In the case when $s=1$ we see that Eq.\ \eqref{eq:t-a-p-cond} already implies that $\tilde \nu$ is independent of the outcome $a \in [k]$ and we can choose $\nu = \tilde  \nu$.
\end{proof}

We can now again apply our result to the previously introduced simulation schemes.

\begin{cor}[Classical simulation]\label{cor:cs-tap}
     The classical simulation map defined in Eq.\ \eqref{eq:simulation} is trash-and-prepare if and only if the postprocessing $\nu= \{\nu_{\cdot|a,x,y}\}_{a \in [k],x \in [g],y\in[r]}$ is independent of the outcome $a \in [k]$. In this case the prepared multimeter is always trivial.
\end{cor}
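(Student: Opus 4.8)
The plan is to read off Corollary~\ref{cor:cs-tap} as a direct specialization of Theorem~\ref{thm:ancilla-free-tap-maps}. Recall from Sec.~\ref{sec:classical-simulation} that the classical simulation map $\Psi$ of Eq.~\eqref{eq:simulation} admits the realization with a classical ancilla in which $s=1$, $n=d$, the preprocessing is $\Lambda^*_{x|y}=\pi_{x|y}\,\mathrm{id}_d$, and the postprocessing is the conditional probability distribution $\nu=(\nu_{\cdot|a,x,y})$. Since this is an ancilla-free ($s=1$) realization, I would simply invoke Theorem~\ref{thm:ancilla-free-tap-maps}: it asserts that $\Psi$ is trash-and-prepare if and only if it admits a realization whose postprocessing is independent of $a$, and that for $s=1$ one may keep $\nu=\tilde\nu$. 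This yields the stated equivalence with essentially no extra work, the content being entirely contained in the theorem.

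To make the specialization explicit and to locate the one delicate point, I would trace the $s=1$ part of the theorem's proof for this particular $\Lambda$. The key relation Eq.~\eqref{eq:t-a-p-cond} becomes $(\nu_{b|a',x,y}-\nu_{b|a,x,y})\,\pi_{x|y}\,\mathrm{id}_d=0$, so necessity forces $\nu_{b|a,x,y}$ to be independent of $a$ precisely on the support $\{(x,y):\pi_{x|y}\neq 0\}$. The main (and only) obstacle is the degenerate case $\pi_{x|y}=0$: there $\Lambda^*_{x|y}=0$, the corresponding terms drop out of Eq.~\eqref{eq:simulation} altogether, and $\nu$ is left unconstrained. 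One resolves this harmlessly by redefining $\nu$ on that null set to be independent of $a$, which changes neither $\Psi$ nor the realization; reading ``$\nu$ independent of $a$'' in this effective sense makes the iff exact. Sufficiency is immediate from the general observation preceding the theorem, since a postprocessing independent of $a$ is exactly the hypothesis there.

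Finally, for the last sentence of the corollary I would substitute a postprocessing that does not depend on $a$, writing $\nu_{b|a,x,y}=\nu_{b|x,y}$, into Eq.~\eqref{eq:simulation} and use the POVM normalization $\sum_{a=1}^k M_{a|x}=\id$:
$$N_{b|y}=\sum_{x=1}^g\pi_{x|y}\,\nu_{b|x,y}\sum_{a=1}^k M_{a|x}=\Big(\sum_{x=1}^g\pi_{x|y}\,\nu_{b|x,y}\Big)\id=:q_{b|y}\,\id.$$
Since $q_{\cdot|y}$ is a genuine probability distribution on $[l]$ (nonnegative and summing to one over $b$) and is manifestly independent of the input multimeter $M$, the prepared multimeter is a \emph{fixed} trivial one. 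This simultaneously confirms that $\Psi$ is trash-and-prepare and that its output is always trivial, completing the proof.
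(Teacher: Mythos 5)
Your proof is correct and takes essentially the same route as the paper: the equivalence is obtained by invoking the $s=1$ case of Theorem~\ref{thm:ancilla-free-tap-maps} for the realization $(\Lambda,\nu)$ with $\Lambda^*_{x|y}=\pi_{x|y}\,\mathrm{id}_d$, and the final claim is the same computation using $\sum_{a=1}^k M_{a|x}=\id$ to show the prepared multimeter is a fixed trivial one. Your additional treatment of the degenerate set $\{(x,y):\pi_{x|y}=0\}$, where Eq.~\eqref{eq:t-a-p-cond} leaves $\nu$ unconstrained so that the literal ``only if'' holds only after redefining $\nu$ on that null set, is a genuine (and welcome) sharpening of a point that the paper's proof, and the $s=1$ clause of the theorem itself, pass over silently.
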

\begin{proof}
    The first part of the statement follows from the case $s=1$ of the previous theorem. Now, if $\nu= \{\nu_{\cdot|a,x,y}\}_{a \in [k],x \in [g],y\in[r]}=:\{\nu_{\cdot|x,y}\}_{x \in [g],y\in[r]}$ is independent of the outcome $a \in [k]$, then the prepared multimeter is of the form
    \begin{equation*}
        \sum_{x=1}^g \pi_{x|y} \sum_{a=1}^k \nu_{\cdot |x,y} M_{a|x} = \sum_{x=1}^g \pi_{x|y} \nu_{\cdot |x,y} \id =: q_{\cdot |y} \id \in \qm{l}{d}
    \end{equation*}
    for all $y \in [r]$ and all multimeters $M= \{M_{\cdot|x}\}_{x \in [g]} \subset \qm{k}{d}$.
\end{proof}

\begin{cor}[Compressibility]\label{cor:c-tap}
     The compression map defined in Eq.\ \eqref{eq:compression} is never trash-and-prepare.
\end{cor}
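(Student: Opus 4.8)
The plan is to apply Theorem~\ref{thm:ancilla-free-tap-maps} to the explicit ancilla-free realization of the compression map exhibited in Section~\ref{sec:compressibility}, and to contradict the assumption that the map is trash-and-prepare. Recall that this realization has $s=1$, with $g = g'\cdot C$, $l=k$, $r=g'$, postprocessing $\nu_{a|a',x',c,x} = \mathbf 1_{a=a'}$, and preprocessing $\Lambda^*_{x',c|x} = \mathbf 1_{x=x'}\Phi^*_c$ for the compressing instrument $\Phi \in \qi{C}{n}{d}$. The feature to exploit is that $\nu$ depends genuinely on the outcome $a'$ of the input multimeter, which is precisely what the trash-and-prepare criterion of Theorem~\ref{thm:ancilla-free-tap-maps} forbids.

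Assume, for contradiction, that the map is trash-and-prepare. Since the realization has $s=1$, the necessary condition Eq.~\eqref{eq:t-a-p-cond} established in the proof of Theorem~\ref{thm:ancilla-free-tap-maps} applies to the given realization: for all input outcomes $a'_1, a'_2 \in [k]$, output outcome $a$, input setting $(x',c)$ and output setting $x$, we must have $\nu_{a|a'_1,x',c,x}\Lambda^*_{x',c|x} = \nu_{a|a'_2,x',c,x}\Lambda^*_{x',c|x}$. Substituting the explicit forms above and specializing to $x=x'$ (so that $\Lambda^*_{x',c|x'} = \Phi^*_c$) reduces this to $\mathbf 1_{a=a'_1}\Phi^*_c = \mathbf 1_{a=a'_2}\Phi^*_c$ for all $c \in [C]$.

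The contradiction is then delivered by the instrument constraint. Choosing $a'_1 = a$ and any $a'_2 \neq a$ (which is possible whenever $k\geq 2$) forces $\Phi^*_c = 0$ for every $c\in[C]$. But $\Phi$ is an instrument in $\qi{C}{n}{d}$, so $\sum_c \Phi_c$ is a channel and hence $\sum_c \Phi^*_c(\id_d) = \id_n \neq 0$, so the components $\Phi^*_c$ cannot all vanish. This contradiction shows the compression map is never trash-and-prepare.

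The only real care needed is in the index bookkeeping between the general Theorem~\ref{thm:ancilla-free-tap-maps} and the compression-specific labels --- the theorem's ``input outcome'' is played by $a'$, its ``input setting'' by the pair $(x',c)$, its ``output setting'' by $x$, and its ``output outcome'' by $a$ --- together with the observation that Eq.~\eqref{eq:t-a-p-cond} only constrains $\nu$ where $\Lambda^*_{x',c|x}\neq 0$. This is exactly why one specializes to $x=x'$ and leans on the non-vanishing of $\sum_c \Phi^*_c(\id_d)$ to close the argument. The degenerate case $k=1$, in which every multimeter is trivial and the map is trivially trash-and-prepare, is tacitly excluded.
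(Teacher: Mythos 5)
Your proof is correct and takes essentially the same approach as the paper: both apply the $s=1$ case of Thm.~\ref{thm:ancilla-free-tap-maps} to the explicit realization of the compression map from Sec.~\ref{sec:compressibility} and exploit the fact that its postprocessing $\nu_{a|a',x',c,x}=\mathbf{1}_{a=a'}$ genuinely depends on the input outcome $a'$. You additionally spell out two points that the paper's one-line proof leaves implicit --- that Eq.~\eqref{eq:t-a-p-cond} only constrains $\nu$ where $\Lambda^*_{x',c|x}\neq 0$ (hence your specialization to $x=x'$ and the appeal to unitality of $\sum_{c}\Phi^*_c$), and the degenerate case $k=1$, where the statement in fact fails --- so your write-up is, if anything, slightly more careful than the original.
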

\begin{proof}
    This follows from the case $s=1$ of the previous theorem when noting that the realization given in Sec.\ \ref{sec:compressibility} involves a postprocessing $\nu$ that is not independent of the outcome $a \in [k]$.
\end{proof}

\begin{cor}[Compatibility-preserving]\label{cor:cp-tap}
    The compatibility-preserving map defined in Eq.\ \eqref{eq:comp-preserving} with a realization $(L \cdot K, p \cdot \pi \cdot \Gamma, \tilde \nu)$ (as given in Sec.\ \ref{sec:comp-preserving}) is trash-and-prepare if and only if $(L \cdot K, p \cdot \pi \cdot \Gamma, \nu)$, where 
    $$
    \nu_{b|a,x,y,\lambda,\kappa} := \frac{1}{k} \sum_{a'=1}^k \tilde \nu_{b|a',x,y,\lambda,\kappa} \quad \forall  a \in [k], x \in [g], y \in [r], \lambda \in [L],\kappa \in [K],
    $$ 
    is also a realization.
\end{cor}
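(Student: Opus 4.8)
The plan is to recognize this corollary as a direct specialization of Theorem~\ref{thm:ancilla-free-tap-maps} to the particular classical-ancilla realization of the compatibility-preserving map exhibited in Sec.~\ref{sec:comp-preserving}. Recall from there that this map admits the realization $(L \cdot K, p \cdot \pi \cdot \Gamma, \tilde \nu)$, where the classical ancilla is indexed by the pair $(\lambda, \kappa) \in [L] \times [K]$ (so that $s = L \cdot K$) and the preprocessing multi-instrument is given by $\Lambda^*_{x,\lambda,\kappa|y} = p_\kappa \pi_{x|y,\lambda,\kappa} \Gamma^*_{\lambda|\kappa}$.

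First I would invoke Theorem~\ref{thm:ancilla-free-tap-maps} with $s = L \cdot K$ and the composite label $(\lambda,\kappa)$ playing the role of the single ancilla index appearing in the theorem. The theorem asserts that $\Psi$ is trash-and-prepare if and only if it admits a realization with the \emph{same} preprocessing $\Lambda$ and some postprocessing $\nu$ that is independent of the input outcome $a \in [k]$. Crucially, the necessity part of the proof of Theorem~\ref{thm:ancilla-free-tap-maps} does not merely assert the existence of such a $\nu$; it constructs it explicitly by averaging over $a$, namely as $\nu_{b|a,\ldots} = \frac{1}{k}\sum_{a'} \tilde\nu_{b|a',\ldots}$, which is precisely the $\nu$ displayed in the corollary statement.

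With this identification both directions follow at once. If $\Psi$ is trash-and-prepare, the theorem's construction shows that the averaged $\nu$ from the corollary statement, together with the unchanged preprocessing $p \cdot \pi \cdot \Gamma$, is again a realization of $\Psi$. Conversely, if $(L \cdot K, p \cdot \pi \cdot \Gamma, \nu)$ with this averaged $\nu$ is a realization, then since $\nu$ is manifestly independent of $a$ by its defining formula, the sufficiency direction of Theorem~\ref{thm:ancilla-free-tap-maps} (the easy implication verified just before its statement) immediately gives that $\Psi$ is trash-and-prepare.

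The only matter requiring attention—rather than a genuine obstacle—is the bookkeeping of the composite ancilla index: one must confirm that the theorem's averaging is performed only over the outcome $a$ and leaves the ancilla labels $\lambda$ and $\kappa$ untouched, which is exactly the case since the construction modifies only the dependence of the postprocessing probabilities on $a$. No computation beyond that already carried out in the proof of Theorem~\ref{thm:ancilla-free-tap-maps} is required, so the corollary is essentially a translation of that theorem into the compatibility-preserving parametrization.
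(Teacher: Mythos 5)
Your proposal is correct and follows exactly the paper's own route: the paper proves this corollary by noting it is ``just making the construction of $\nu$ obtained in the proof of Thm.~\ref{thm:ancilla-free-tap-maps} explicit,'' i.e., applying that theorem to the realization $(L\cdot K, p\cdot\pi\cdot\Gamma,\tilde\nu)$ and observing that the necessity direction constructs precisely the averaged $\nu$ in the statement, while the sufficiency direction handles the converse. Your attention to the fact that the averaging acts only on $a$ and leaves the composite ancilla label $(\lambda,\kappa)$ untouched is the same bookkeeping the paper implicitly relies on.
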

\begin{proof}
    This is just making the construction of $\nu$ obtained in the proof of Thm.\ \ref{thm:ancilla-free-tap-maps} explicit and rephrasing the original statement accordingly.
\end{proof}

Another curious application of Thm.\ \ref{thm:ancilla-free-tap-maps} is to show that in the absence of a quantum ancilla, there are maps that still require a classical ancilla in their realization.

\begin{example}\label{ex:cl-ancilla-needed}
    Consider the case of a trash-and-prepare map $\Psi$ transforming POVMs ($g=1$) to POVMs ($r=1$), preparing a non-trivial POVM $N$. We shall prove that such a map $\Psi$ cannot admit a realization without a quantum or a classical ancilla. If this were the case, so that $\Psi$ would have a realization $(\Lambda, \nu)$ with $s=1$. Then, Thm.\ \ref{thm:ancilla-free-tap-maps} would imply that $\nu = \{\nu_{\cdot|a}\}_{a \in [k]}$ is independent of $a \in [k]$ so that
    $$ N_b =  \sum_{a=1}^k \nu_{b|a} M_a = \nu_{b|a} \id_n$$
    for all $b \in [l]$, $a \in [k]$ and all POVMs $M$. This contradicts the fact that $N$ is non-trivial. 

    Of course, such a trash-and-prepare map can be realised with a \emph{classical} ancilla of size $s=l$, simply by defining
    $$\Lambda_z(\rho) = \Tr[\rho N_z]\sigma \qquad \forall z \in [l]$$
    and all input states $\rho$, for some fixed quantum state $\sigma$ and $\nu_{b|a,z} = \mathbf 1_{b=z}$ (see Fig.\ \ref{fig:classical-ancilla-needed}).

    \begin{figure}[htb]
        \centering
        \includegraphics{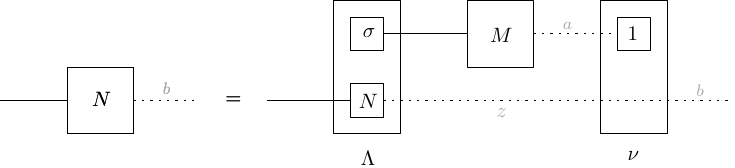}
        \caption{A trash-and-prepare POVM transformation that requires a classical ancilla.}
        \label{fig:classical-ancilla-needed}
    \end{figure}
\end{example}

\subsection{Comparing different simulations}

We will finish our investigation by considering the inclusions of the set of the previously considered maps.

    \begin{figure}[htb]
        \centering
        \includegraphics[scale=0.8]{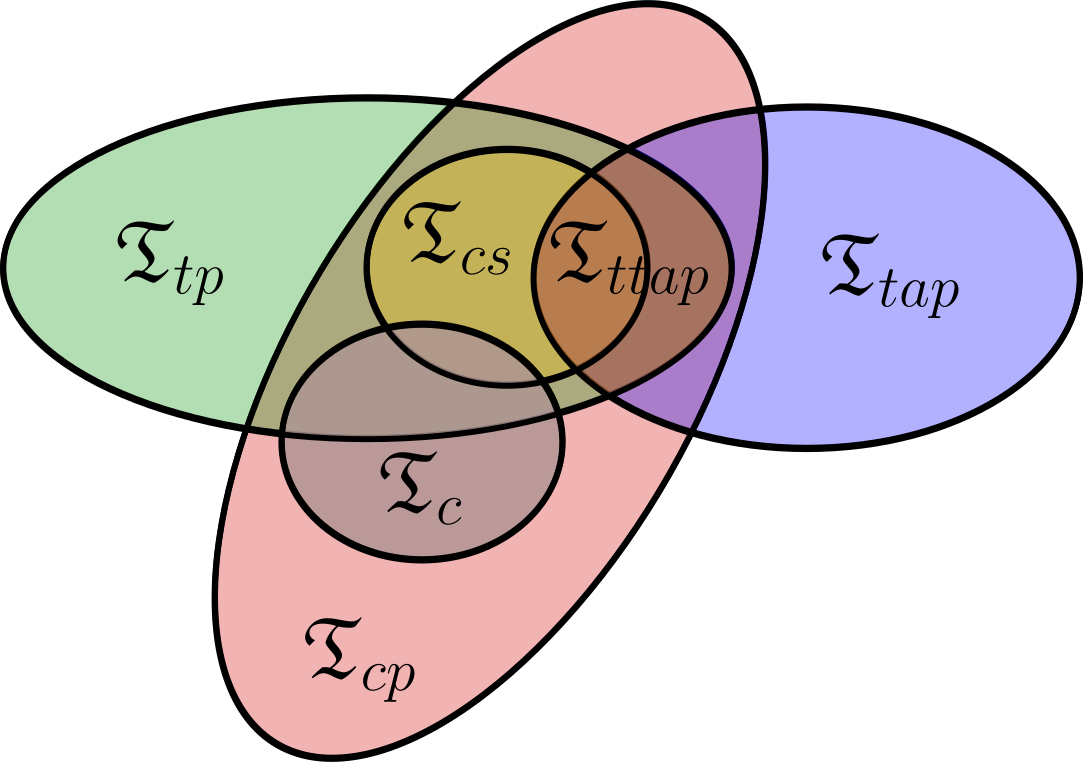}
        \caption{The inclusions of the sets of the investigated maps.}
        \label{fig:maps}
    \end{figure}

\begin{prop}
    Let us denote by $\mathfrak{T} :=\mathfrak{T}(\mathcal{MM}(g,k,d),\mathcal{MM}(r,l,n))$ the set of transformations $\Psi: \mathcal{M}_{gkd} \to \mathcal{M}_{rln}$ between multimeters, and let us consider the following sets of maps:
    \begin{align*}
        \mathfrak{T}_{tp} &:= \{\Psi \in \mathfrak{T} : \Psi \ \text{is  triviality-preserving} \}, \\
        \mathfrak{T}_{tap} &:= \{\Psi \in \mathfrak{T} : \Psi \ \text{is  trash-and-prepare} \}, \\
        \mathfrak{T}_{ttap} &:= \{\Psi \in \mathfrak{T}_{tap} : \Psi \ \text{prepares a fixed trivial multimeter} \}, \\
        \mathfrak{T}_{cs} &:= \{\Psi \in \mathfrak{T} : \Psi \ \text{is  a  classical  simulation}, \} \\
        \mathfrak{T}_{c} &:= \{\Psi \in \mathfrak{T} : \Psi \ \text{is  a  compression} \}, \\
        \mathfrak{T}_{cp} &:= \{\Psi \in \mathfrak{T} : \Psi \ \text{is  compatibility-preserving} \}.
    \end{align*}
    Then the inclusions presented in Fig.\ \ref{fig:maps} hold.
\end{prop}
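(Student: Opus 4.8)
The plan is to read off each relation depicted in Fig.~\ref{fig:maps} from results already established, combining the three corollaries on triviality-preservation (Cor.~\ref{cor:cs-tp}, \ref{cor:c-tp}, \ref{cor:cp-tp}) with the three on trash-and-prepare (Cor.~\ref{cor:cs-tap}, \ref{cor:c-tap}, \ref{cor:cp-tap}), adding direct specialization arguments for the containments among the concrete schemes and reusing the explicit maps already in the text for the separation (properness/incomparability) claims. I would organize the argument into three blocks: (i) containments between $\mathfrak{T}_{cs}$, $\mathfrak{T}_{c}$ and $\mathfrak{T}_{cp}$; (ii) the relations of these sets and of $\mathfrak{T}_{ttap}$ to the two defect classes $\mathfrak{T}_{tp}$ and $\mathfrak{T}_{tap}$; and (iii) the separations.

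For block (i) I would show $\mathfrak{T}_{cs}\subseteq\mathfrak{T}_{cp}$ and $\mathfrak{T}_{c}\subseteq\mathfrak{T}_{cp}$ by exhibiting each scheme as a parameter specialization of Eq.~\eqref{eq:comp-preserving}. Classical simulation is recovered with $K=L=1$ and $\Gamma^*_{\cdot|1}=\mathrm{id}_d$ (forcing $n=d$), so that Eq.~\eqref{eq:comp-preserving} collapses to Eq.~\eqref{eq:simulation}. Compression is recovered with $K=1$, $L=C$, $\Gamma_{\cdot|1}=\Phi$, by relabelling the $g=g'\cdot C$ input settings as pairs $(x',c')$ and choosing $\pi_{(x',c')|y,\lambda,1}=\mathbf{1}_{x'=y}\mathbf{1}_{c'=\lambda}$ together with $\nu_{b|a,\cdots}=\mathbf{1}_{a=b}$, which reproduces Eq.~\eqref{eq:compression}. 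The only delicate point is the index bookkeeping ensuring that the ambient parameters $(g,k,d)\to(r,l,n)$ line up. I would also record that $\mathfrak{T}_{cs}$ and $\mathfrak{T}_{c}$ are incomparable (classical simulation allows $l\neq k$ and nontrivial output postprocessing but fixes $n=d$, whereas compression forces $l=k$ but permits $d<n$), so neither contains the other.

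For block (ii) the defect-class relations come essentially for free. Cor.~\ref{cor:cs-tp} gives $\mathfrak{T}_{cs}\subseteq\mathfrak{T}_{tp}$. The central identity is $\mathfrak{T}_{tp}\cap\mathfrak{T}_{tap}=\mathfrak{T}_{ttap}$: the inclusion $\supseteq$ holds because preparing a fixed trivial multimeter is both trash-and-prepare by definition and triviality-preserving (every input, in particular every trivial one, lands on that fixed trivial multimeter); conversely, a map that is simultaneously trash-and-prepare and triviality-preserving sends every input—hence every trivial input—to one fixed $N$, and triviality-preservation forces $N$ to be trivial, which is exactly the observation recorded after Def.~\ref{def:min-def-sim}. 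Cor.~\ref{cor:cs-tap} then yields $\mathfrak{T}_{cs}\cap\mathfrak{T}_{tap}\subseteq\mathfrak{T}_{ttap}$ (a trash-and-prepare classical simulation always prepares a trivial multimeter), while Cor.~\ref{cor:c-tap} gives the disjointness $\mathfrak{T}_{c}\cap\mathfrak{T}_{tap}=\emptyset$.

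The main obstacle is block (iii): the properness of the inclusions and the fact that $\mathfrak{T}_{c}$ and $\mathfrak{T}_{cp}$ genuinely escape $\mathfrak{T}_{tp}$, which require explicit witnesses rather than bookkeeping. Here I would reuse the maps already constructed: the diagonal-qubit map built after the proof of Thm.~\ref{thm:ancilla-free-tp-maps} is a compression (take $g'=1$, $C=2$) that is not triviality-preserving, witnessing $\mathfrak{T}_{c}\setminus\mathfrak{T}_{tp}\neq\emptyset$ and, via $\mathfrak{T}_{c}\subseteq\mathfrak{T}_{cp}$, also $\mathfrak{T}_{cp}\setminus\mathfrak{T}_{tp}\neq\emptyset$; Example~\ref{ex:t-a-p-not-unique} provides a trash-and-prepare map preparing a fixed \emph{trivial} POVM, hence a point of $\mathfrak{T}_{ttap}$ that by Cor.~\ref{cor:c-tap} is not a compression; and Example~\ref{ex:cl-ancilla-needed}, which prepares a fixed \emph{nontrivial} POVM, separates $\mathfrak{T}_{tap}$ from $\mathfrak{T}_{ttap}$. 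The remaining strict inclusions follow by analogous witnesses (for instance a genuine quantum preprocessing $N=\Omega^*(M)$ with $n\neq d$ is a triviality-preserving compression that is not a classical simulation, so $\mathfrak{T}_{cs}\subsetneq\mathfrak{T}_{tp}$). Throughout, a subtlety worth flagging is that several inclusions are only meaningful once the ambient parameters are compatible—$\mathfrak{T}_{cs}$ is empty unless $n=d$ and $\mathfrak{T}_{c}$ unless $l=k$—so I would state the diagram as describing the generic configuration and note explicitly when a region is forced to be empty.
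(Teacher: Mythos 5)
Your blocks (i) and (ii) are sound and essentially coincide with the paper's own proof: the specializations showing $\mathfrak{T}_{cs}\subseteq\mathfrak{T}_{cp}$ and $\mathfrak{T}_{c}\subseteq\mathfrak{T}_{cp}$, the identity $\mathfrak{T}_{ttap}=\mathfrak{T}_{tap}\cap\mathfrak{T}_{tp}$, and the invocations of Cor.~\ref{cor:cs-tp}, Cor.~\ref{cor:cs-tap} and Cor.~\ref{cor:c-tap} are exactly the paper's steps. Your observation that the unlabeled diagonal-qubit example following Thm.~\ref{thm:ancilla-free-tp-maps} is literally a compression with $g'=1$, $C=2$ (the pinching instrument) is correct and gives a concrete witness for $\mathfrak{T}_{c}\setminus\mathfrak{T}_{tp}\neq\emptyset$ where the paper merely cites Cor.~\ref{cor:c-tp}; the caveat about parameter compatibility is also a fair point that the paper glosses over.

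The genuine gap is in block (iii): every separation involving $\mathfrak{T}_{cp}$ is left unproved, and these do \emph{not} follow ``by analogous witnesses'' --- they need qualitatively different constructions. First, you never show $\mathfrak{T}_{tp}\setminus(\mathfrak{T}_{cp}\cup\mathfrak{T}_{tap})\neq\emptyset$, i.e., that a triviality-preserving map can fail to be compatibility-preserving. This is the separation that underpins the paper's claim that its notion is strictly weaker than that of \cite{buscemi2020complete}, and it requires a map that \emph{creates} incompatibility: the paper's Example~\ref{ex:tp-ncp-ntap} sends a single POVM $M$ to the pair $\{M, HMH\}$ via conditional Hadamard conjugation, which is triviality-preserving (since $H\id H=\id$) but turns a compatible multimeter into an incompatible one. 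None of your witnesses do anything of this kind, and no parameter bookkeeping can produce one. Second, you never show $\mathfrak{T}_{tap}\setminus(\mathfrak{T}_{tp}\cup\mathfrak{T}_{cp})\neq\emptyset$; your only nontrivial trash-and-prepare witness, Example~\ref{ex:cl-ancilla-needed}, prepares a single fixed POVM ($r=1$) and therefore \emph{is} compatibility-preserving (this is precisely the construction of Example~\ref{ex:cp-tap-ntp}, with $\Gamma_b(\varrho)=\Tr[E_b\varrho]\sigma$), so it never leaves $\mathfrak{T}_{cp}$; one needs $r\geq 2$ and a prepared multimeter that is \emph{incompatible}, which escapes $\mathfrak{T}_{cp}$ because compatibility-preserving maps send the (compatible) trivial multimeters to compatible ones. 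Third, the nonemptiness of the regions $(\mathfrak{T}_{cp}\cap\mathfrak{T}_{tp})\setminus(\mathfrak{T}_{cs}\cup\mathfrak{T}_{c}\cup\mathfrak{T}_{tap})$, $(\mathfrak{T}_{cp}\cap\mathfrak{T}_{tap})\setminus\mathfrak{T}_{tp}$ and $\mathfrak{T}_{cp}\setminus(\mathfrak{T}_{tp}\cup\mathfrak{T}_{tap}\cup\mathfrak{T}_{c})$ is not addressed at all: your diagonal-qubit witness lies inside $\mathfrak{T}_{c}$, so it only yields $\mathfrak{T}_{cp}\not\subseteq\mathfrak{T}_{tp}$ and cannot certify the last of these regions, for which the paper needs the bespoke L\"uders-instrument construction of Example~\ref{ex:cp-ntp-nc-ntap}. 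Finally, a smaller omission: you never exhibit a point of $\mathfrak{T}_{c}\cap\mathfrak{T}_{cs}$ (the paper takes the compressing instrument $\Phi_c=\pi_c\,\mathrm{id}_d$, giving a mixing map), so the overlap of those two regions in Fig.~\ref{fig:maps} is also unjustified. Without these witnesses the diagram is not established.
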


\begin{proof}
    Let us start with the sets $\mathfrak{T}_{tap}$ and $\mathfrak{T}_{tp}$. The fact that $\mathfrak{T}_{ttap} = \mathfrak{T}_{tap} \cap \mathfrak{T}_{tp}$ is clear: As already stated before, if a map is triviality-preserving and trash-and-prepare, then the prepared fixed multimeter must be trivial. On the other hand, any trash-and-prepare map that prepares fixed trivial multimeters is triviality-preserving. 

    Let us now consider the set of classical simulations $\mathfrak{T}_{cs}$ and see how they relate to $\mathfrak{T}_{tap}$ and $\mathfrak{T}_{tp}$. The inclusion $\mathfrak{T}_{cs} \subset \mathfrak{T}_{tp}$ follows from Cor.\ \ref{cor:cs-tp} and by Cor.\ \ref{cor:cs-tap} we have that $\mathfrak{T}_{cs} \cap \mathfrak{T}_{tap} \subset \mathfrak{T}_{ttap}$. However, since maps in $\mathfrak{T}_{cs}$ cannot change the dimension of the quantum system, we cannot have the equality $\mathfrak{T}_{ttap}=\mathfrak{T}_{cs} \cap \mathfrak{T}_{tap}$ in general. But, on the other hand, if we have a map $\Psi: \mathcal{M}_{gkd} \to \mathcal{M}_{rln}$, $\Psi \in \mathfrak{T}_{ttap}$, such that $n=d$, and which prepares some fixed trivial multimeter $N = \{q_{\cdot|y}\id\}_{y \in [r]} \subset \qm{l}{d}$, then by defining the postprocessing $\nu= \{\nu_{\cdot|,a,x,y}\}_{a \in [k], x \in [g], y \in [r]}$ for the classical simulation as $\nu_{b|a,x,y} = q_{b|y}$ for all $b \in [l]$, $a \in [k]$, $x \in [g]$ and $y \in [r]$, it follows that
    \begin{equation*}
        \sum_{x=1}^g \pi_{x|y} \sum_{a=1}^k \nu_{\cdot |a,x,y} M_{a|x} = \sum_{x=1}^g \pi_{x|y} q_{\cdot |y} \id = q_{\cdot |y} \id 
    \end{equation*}
    for all $y \in [r]$ and all multimeters $M= \{M_{\cdot|x}\}_{x \in [g]} \subset \qm{k}{d}$, so that in this case $\Psi \in \mathfrak{T}_{cs}$.

    Let's now move on to the set $\mathfrak{T}_{c}$. The fact that $\mathfrak{T}_{c} \cap \mathfrak{T}_{cs} \neq \emptyset$ follows by considering a compression map with a compressing instrument $\Phi \in \qi{C}{d}{d}$ of the form $\Phi_c = \pi_c id_d$ for all $c \in [C]$. The "compressed" transformed multimeter is then of the form $\sum_c p_c M_{\cdot|x,c} $ for all $x \in [g]$ which is clearly just a mixture of the multimeter $M$ which is a special case of a classical simulation. We note that there are also compressions that are triviality-preserving which are not classical simulations so that $(\mathfrak{T}_c \cap \mathfrak{T}_{tp}) \setminus \mathfrak{T}_{cs} \neq \emptyset$: take now the compressing instrument as $\Phi_c = \pi_c \Phi_0$ for all $c \in [C]$ for some probability distrubution $\pi$ on $[C]$ and some fixed channel $\Phi_0 \in \qc{n}{d}$ for $n\neq d$. By Cor.\ \ref{cor:c-tp} the resulting compression map is then triviality-preserving but it is clearly not a classical simulation. Furthermore, there are also classical simulations that are not compressions (any classical simulation which non-trivially postprocesses the outcomes of the input multimeter to some different number of outcomes, i.e., $l \neq k$) so that also $\mathfrak{T}_{cs} \setminus \mathfrak{T}_{c} \neq \emptyset$. From Cor.\ \ref{cor:c-tp} one can see that there clearly are compression maps that are not triviality-preserving so that $\mathfrak{T}_{c} \setminus \mathfrak{T}_{tp} \neq \emptyset$. Finally, by Cor.\ \ref{cor:c-tap} we have that $\mathfrak{T}_{c} \cap \mathfrak{T}_{tap} = \emptyset$. 

    Lastly, let us focus on the compatibility-preserving maps $\mathfrak{T}_{cp}$. First, clearly any classical simulation can be obtained from a compatibility preserving map (Eq.\ \eqref{eq:comp-preserving}) by simply choosing $n=d$, $L=K=1$,  and by fixing the channel $\Gamma \in \qc{d}{d}$ as $\Gamma = id_d$. This shows that $\mathfrak{T}_{cs} \subset \mathfrak{T}_{cp}$. Also, any compression with a compressing instrument $\Phi \in \qi{C}{n}{d}$ can be obtained from the compatibility-preserving map by setting $g = g' \cdot C$, $K=1$, $L=C$, $l =k$, $r =g'$, and $\nu_{a|a',x',c',x,c} = \mathbf{1}_{a=a'}$, $\pi_{x',c'|x,c} = \mathbf{1}_{x=x'}\mathbf{1}_{c=c'}$ and $\Gamma_{c} = \Phi_{c}$ for all $a,a' \in [k]$, $x,x' \in [g]$ and $c,c' \in [C]$. Indeed, the resulting multimeter looks as follows:
    \begin{equation*}
         \sum_{x'=1}^g \sum_{c=1}^C \sum_{a=1}^k \sum_{c'=1}^C \nu_{a|a',x',c',x,c} \pi_{x',c'|x, c} \Gamma^*_{c}(M_{a'|x',c'}) =  \sum_{c=1}^C \Phi^*_c(M_{a|x,c}) 
    \end{equation*}
    for all $a \in [k]$ and $x \in [g]$. This show that $\mathfrak{T}_{c} \subset \mathfrak{T}_{cp}$. Furthermore, to see that also $\mathfrak{T}_{ttap} \subset \mathfrak{T}_{cp}$, we note that for any trivial multimeter $N = \{q_{\cdot|y}\}_{y \in [r]} \subset \qm{l}{n}$ we can choose a compatibility-preserving map with $K=1$, $L=g$, $\pi_{x|y,x'} = \mathbf{1}_{x = x'}$ and $\nu_{b|a,x,y,x'} = q_{b|y}$ for all $b \in [l]$, $a \in [k]$, $x,x'\in [g]$ and $y \in [r]$ so that then the resulting multimeter takes the form
    \begin{equation*}
        \sum_{a=1}^k \sum_{x,x' =1}^g \nu_{b|a,x,y,x'} \pi_{x|y,x'} \Gamma^*_{x'}(M_{a|x}) = q_{b|y} \sum_{x=1}^g \Gamma^*_{x}(\id) =   q_{b|y} \id = N_{b|y}
    \end{equation*}
    for all $b \in [l]$ and $y \in [r]$. Thus, $\mathfrak{T}_{ttap} \subset \mathfrak{T}_{cp}$. 

    To see that $\left(\mathfrak{T}_{cp} \cap \mathfrak{T}_{tp} \right) \setminus \left( \mathfrak{T}_{cs} \cup \mathfrak{T}_{c} \cup \mathfrak{T}_{tap} \right) \neq \emptyset$, we can take a compatibility-preserving map with $L=K=1$ (so that the map will automatically be triviality-preserving by Cor.\ \ref{cor:cp-tp}) and choose $n\neq d$ (so that the map cannot be a classical simulation), $l \neq k$ (so that the map cannot be a compression), and the postprocessing $\nu$ to depend on the outcome $a \in [k]$ (so that by Cor.\ \eqref{cor:cp-tap} it is not a trash-and-prepare map). On the other hand, to see that $\left(\mathfrak{T}_{cp} \cap \mathfrak{T}_{tap} \right) \setminus \mathfrak{T}_{tp}  \neq \emptyset$ we refer to Example \ref{ex:cp-tap-ntp} below.

    To finish the proof we still need to show that $\mathfrak{T}_{tap} \setminus \left( \mathfrak{T}_{tp} \cup \mathfrak{T}_{cp}\right)  \neq \emptyset$, $\mathfrak{T}_{tp} \setminus \left( \mathfrak{T}_{cp} \cup \mathfrak{T}_{tap}\right)  \neq \emptyset$ and $\mathfrak{T}_{cp} \setminus \left( \mathfrak{T}_{tp} \cup \mathfrak{T}_{tap} \cup \mathfrak{T}_c \right)  \neq \emptyset$. The first claim follows from the fact that there are trash-and-prepare maps that prepare incompatibile (and thus non-trivial) multimeters so that these maps cannot be triviality-preserving nor compatibility-preserving. For the second claim refer to Example \ref{ex:tp-ncp-ntap} below. The last claim follows from Example \ref{ex:cp-ntp-nc-ntap} below.
\end{proof}

\begin{example}[Compatibility-preserving trash-and-prepare but not triviality-preserving]\label{ex:cp-tap-ntp}
    Let us consider a trash-and-prepare map $\Psi: \M(\complex)_{kdg} \to \M(\complex)_{lnr}$ for which any input multimeter $M=\{M_{\cdot|x}\}_{x \in [g]} \subset \qm{k}{d}$ is mapped to a fixed multimeter $\{N_{\cdot|y}\}_{y \in [r]} \subset \qm{l}{n}$, where $N_{\cdot|y} = N_{\cdot|y'}$ for all $y,y' \in [r]$ for some nontrivial POVM $E := N_{\cdot|y} \in \qm{l}{n}$. It is then clear that the multimeter $N$ is compatible since it only consists of copies of the same POVM. Now $\Psi$ is not triviality-preserving since also trivial multimeters are mapped to $N$ which is nontrivial. On the other hand $\Psi$ is compatibility-preserving: in Eq.\ \eqref{eq:comp-preserving} we can choose $K=1$, $L=l$, $\nu_{b|a,x,y,b'} = \mathbf{1}_{b=b'}$ and define $\Gamma \in \qi{l}{n}{d}$ by setting $\Gamma_b(\varrho) = \Tr[E_b \varrho] \sigma$ for all $\varrho \in \qs{n}$ for some fixed state $\sigma \in \qs{d}$. By plugging these into Eq.\ \eqref{eq:comp-preserving} it is straightforward to see that indeed the transformed multimeter results in the multimeter $N$ which consists only copies of the POVM $E$.
\end{example}

\begin{example}[Triviality-preserving but not compatibility-preserving nor trash-and-prepare]\label{ex:tp-ncp-ntap}
    Let us consider a map $\Psi: \M(\complex)_{kdg} \to \M(\complex)_{lnr}$ in the special case of $s=g=1$ and $d=n=k=l=r=2$ which admits an ancilla-free realization $(\Lambda,\nu)$, where in Eq.\ \eqref{eq:ancilla-free} we choose $\nu_{b|a,y} = \mathbf{1}_{b=a}$ for all $a,b,y \in \{0,1\}$ and we define the channels $\{\Lambda^{(0)}, \Lambda^{(1)}\} \subset \qc{2}{2}$ as $\Lambda^{(y)}(\varrho)= H^y \varrho H^y$ for all $\varrho \in \qs{2}$ for $y \in \{0,1\}$, where $H$ is the Hadamard gate. The transformed POVMs take the following form:
    \begin{equation}
        \sum_{a=0}^1 \nu_{b|a,y} \left(\Lambda^{(y)}\right)^*(M_a) = H^y M_b H^y \subset \qm{2}{2}
    \end{equation}
    for all $b,y \in \{0,1\}$. Now it is clear that the map is triviality-preserving since $HH= I_2$. On the other hand, if we set $M_a = \ketbra{a}{a}$ for all $a \in \{0,1\}$ for the computational basis $\{\ket{0}, \ket{1}\} \subseteq \complex^2$, then although the input multimeter M is compatible (since it consists of only one POVM), the resulting two transformed POVMs given by the previous equation are incompatible. Thus, $\Psi$ is not compatibility-preserving. 
\end{example}

\begin{example}[Compatibility-preserving but not trash-and-prepare nor triviality-preserving nor a compression]\label{ex:cp-ntp-nc-ntap}
    Let us take a compatibility-preserving map and let us set $K=1$ and $g=r=1$ (transforming POVMs to POVMs) so that the transformed POVM $N$ given by Eq.\ \eqref{eq:comp-preserving} takes the form
    \begin{equation}\label{eq:cp-ntp-nc-ntap}
       N^M_b = \sum_{\lambda=1}^L \sum_{a=1}^k \nu_{c|a,\lambda} \Gamma^*_{\lambda}(M_a)
    \end{equation}
    for all $b \in [l]$. We can see that we can make such a compatibility-preserving map not trash-and-prepare by choosing the postprocessing $\nu$ suitably (i.e. such that the condition in Cor.\ \ref{cor:cp-tap} does not hold) nor a compression by choosing $\nu$ and $\Gamma$ such that the above equation will not lead to a channel (we note that compressions between POVMs will always just be  transformations given by channels). Lastly, by choosing $\Gamma$ suitably we can also make the map not triviality-preserving.

    To give an explicit example of such a map, let us take $L=k=l=2$ and $n=d$, and let us take $\Gamma$ to be a Lüders instrument related to some dichotomic POVM $G \in \qm{2}{d}$, which has effects $G_0 = E$ and $G_1 = \id-E$ for some effect $E \in \qe{d}$ such that $E \not\sim \id$, so that 
    $$\Gamma_\lambda(\varrho) = \sqrt{G_\lambda} \varrho  \sqrt{G_\lambda}$$
    for all $\lambda \in \{0,1\}$ and $\varrho \in \qs{d}$. Let us fix the postprocessing $\nu = \{\nu_{\cdot|a,\lambda}\}_{a,\lambda=0}^1$ by setting $\nu_{0|0,1}=\nu_{0|1,1}=0$ and $\nu_{0|0,0} =p$ and $\nu_{0|1,0} =q$ for some $p,q \in [0,1]$ such that $p \neq q$. Now, given an input POVM $M \in \qm{2}{d}$ with effects $M_0 = F$ and $M_1 = \id-F$ for some effect $F \in \qe{d}$, Eq.\ \eqref{eq:cp-ntp-nc-ntap} takes the form for outcome $b=0$ (note that $N^M_1 = \id-N^M_0$)
    \begin{equation*}
        N^F := N^M_0  = p \sqrt{E} F \sqrt{E} + q \sqrt{E} (\id-F) \sqrt{E} = q E + (p-q) \sqrt{E} F \sqrt{E} .
    \end{equation*}
    If we take $F=\pi \id$ for some $\pi \in [0,1]$, we see that
    \begin{equation*}
        N^{\pi \id} = q E + (p-q) \pi E = (p \pi + q (1-\pi))E.
    \end{equation*}
    Now, clearly the map $M \mapsto N^M$ is not triviality-preserving since $E \not\sim \id$, and it is also not trash-and-prepare since by choosing $\pi=0$ we get $N^{0} = q E$ and by choosing $\pi =1$ we get $N^{\id} = p E$, which are not the same effect since $p\neq q$. Lastly, if the map were a compression, in Eq.\ \eqref{eq:compression} we would have to have also $C=1$ so that $N^F = N^M_0 = \Phi(M_0) = \Phi(F)$ for some channel $\Phi \in \qc{d}{d}$. In particular, by taking $F=\id$ we would get $N^\id = \id$ which is a contradiction. Thus, the map cannot be a compression either.
\end{example}

\section{Discussion and outlook} \label{sec:discussion}

In this work, we have thoroughly examined what it means to simulate a set of measurements, i.e. a multimeter, by some other measurements. To this end, we have first characterized transformations between multimeters in terms of their realization involving a preprocessing instrument and a multimeter defined on an ancillary system that is conditioned on the classical inputs and outputs of the simulating multimeter. We have then argued that not all such transformations should be seen as simulations because otherwise one would be able to produce any resource related to multimeters for free. In particular, we argue that from the perspective of an experimenter with a fixed multimeter as measurement apparatus a trash-and-prepare transformation, i.e. the process of discarding a multimeter and replacing it with a fixed multimeter, is not a valid simulation strategy in general, and that simulations should be triviality-preserving, i.e. that no simulation can produce nontrivial multimeters from trivial ones, resulting in our minimal definition of simulability. We characterize these two properties of transformations in terms of their realization in the case when the realization is ancilla-free or only utilizes a classical ancilla. Finally, we demonstrate our findings in the previously proposed simulation scenarios, namely in classical simulation, compressibility, and compatibility-preserving simulations, and compare these simulations to each other and to trash-and-prepare and triviality-preserving transformations.

Some specific open questions arise from trying to generalize our results to the case with a quantum ancilla.  While our realization result for transformations between multimeters is valid in the most general case, when it comes to actual simulation scenarios, our results on triviality-preserving and trash-and-prepare maps are mostly restricted to cases when the realization is ancilla-free or only involves a classical ancilla. This limitation is partially intentional since all the previously proposed simulation scenarios are of this type. However, this does leave open questions regarding the general cases. Some of the difficulties that arise when working with the quantum ancilla results from the lack of fully understanding the degrees of freedom in the realizations of the transformations between multimeters and how different realizations might be connected. Answering these questions requires further research.

Another avenue for future work comes from the fact that although our minimal definition of simulability makes sure that non-triviality of multimeters can be generated in a simulation using trivial multimeters, we are not using a resource theoretic framework. Doing so would likely lead to different notions of simulation depending on the resource at hand. In particular, one could argue, as the authors in \cite{buscemi2020complete} have argued, that also the compatibility of the multimeters should be preserved under simulations so that incompatibility, a fundamental non-classical feature of quantum theory, cannot be produced freely with simulations. An immediate open question there is to look more closely into the proposed compatibility-preserving maps by considering them in full generality as transformations between multimeters and show that these are exactly the type of transformations that preserve compatibility altogether. Moreover, examining other possible resources would be an interesting future direction.

A related but different perspective on multimeter simulation is to consider the preorder that it induces on the set of multimeters: one can say that a multimeter is greater than another multimeter if the latter can be simulated by using the former one. This would create a way to see which multimeters are more useful with respect to each specific simulation task. Subsequently one can start examining questions from the order perspective such as what are the maximal and minimal elements with respect to a particular simulation, what are the simulation irreducible multimeters \cite{filippov2018simulability}, and if there is a natural representative in each equivalence class. We leave this perspective to future work as well.

\bigskip

\noindent\textbf{Acknowledgements.} A.B.\ was supported by the French National Research Agency in the framework of the ``France 2030” program (ANR-11-LABX-0025-01) for the LabEx PERSYVAL. L.L.\ acknowledges support from the European Union’s Horizon 2020 Research and Innovation Programme under the Programme SASPRO 2 COFUND Marie Sklodowska-Curie grant agreement No. 945478 as well as from projects APVV22-0570 (DeQHOST), VEGA 2/0183/21 (DESCOM) and APVV SK-FR-22-0018 (Optimal transport distance for quantum measurements). I.N.~was supported by the ANR projects \href{https://esquisses.math.cnrs.fr/}{ESQuisses}, grant number ANR-20-CE47-0014-01 and \href{https://www.math.univ-toulouse.fr/~gcebron/STARS.php}{STARS}, grant number ANR-20-CE40-0008, and by the PHC programs \emph{Star} (Applications of random matrix theory and abstract harmonic analysis to quantum information theory, grant number 47397VH) and \emph{Stefanik} (Optimal transport distance for quantum measurements, grant number 49882VC).
\bibliographystyle{unsrtnat}
\bibliography{lit}

\newpage

\appendix
\section{Additional properties of quantum channels and instruments}\label{appendix:A}

 One particular advantage of representing the quantum devices introduced in Section \ref{sec:fundamental-quantum-devices} as channels is the fact that then we can treat them all similarly with the tools and representations known for quantum channels (see e.g. \cite{watrous2018theory} for more details). One particularly useful representation of a quantum channel is given by the \emph{Choi–Jamiołkowski isomorphism} \cite{choi1975} which gives us a correspondence between channels and (subset of) states of a higher-dimensional quantum system. More specifically, a linear map $\Phi: \M(\complex)_d \to \M(\complex)_n$ corresponds to a matrix $J_\Phi \in \M(\complex)_{nd}$, called the \emph{Choi matrix of $\Phi$}, defined as 
\begin{equation}\label{eq:Choi-matrix}
    J_\Phi := \sum_{i,j=1}^d \Phi(\ketbra{i}{j}) \otimes \ketbra{i}{j}
\end{equation}
for some orthonormal basis $\{\ket{i}\}_{i=1}^d$ of $\complex^d$. It is know that $\Phi$ is CP if and only if $J_\Phi$ is positive semidefinite, it is TNI if and only if $\Tr_{\complex^n}[J_\Phi] \leq \id_d$ and it is TP if and only if $\Tr_{\complex^n}[J_\Phi] = \id_d$. We denote the set of Choi matrices of channels in $\qc{d}{n}$ by $\mathcal{J}(\complex^{nd}):= \{J_\Phi \in \M(\complex)_{nd} \, : \, \Phi \in \qc{d}{n}\}$ so that in particular $ \frac{1}{d} \mathcal{J}(\complex^{nd}) \subset \qs{nd}$. Conversely, a matrix $J \in \M(\complex)_{nd}$ defines a linear map $\mathcal{E}_J: \M(\complex)_d \to \M(\complex)_n$ by setting 
\begin{equation}\label{eq:Choi-map}
    \mathcal{E}_J(X) := \Tr_{\complex^n}[(\id_n \otimes X^T)J]
\end{equation}
for all $X \in \M(\complex)_d$ where the transpose is taken with respect to the same basis $\{\ket{i}\}_{i=1}^d$ of $\complex^d$. The Choi–Jamiołkowski isomorphism states that $\mathcal{E}_{J_\Phi} = \Phi$ for all $\Phi: \M(\complex)_d \to \M(\complex)_n$ and $J_{\mathcal{E}_J} = J$ for all $J \in \M(\complex)_{nd}$.

Another useful representation for CP maps is the \emph{Stinespring dilation} \cite{stinespring1955}: for any CP map $\Phi:  \M(\complex)_d \to \M(\complex)_n$ and for any $s \geq \rank(J_\Phi) $, there exists an ancillary system $\complex^s$  and a bounded operator $V: \complex^d \to \complex^n \otimes \complex^s$ such that 
\begin{equation}
    \Phi(X) = \Tr_{\complex^s}[V X V^*]
\end{equation}
for all $X \in \M(\complex)_d$. Equivalently, for the dual map $\Phi^*: \M(\complex)_n \to \M(\complex)_d$ defined as $\Tr[\Phi^*(N) D] = \Tr[N \Phi(D)]$ for all $D \in  \M(\complex)_d$ and $N \in  \M(\complex)_n$, this means that
\begin{equation}
    \Phi^*(A) = V^*(A \otimes \id_s) V
\end{equation}
for all $A \in \M(\complex)_n$. In this case, we refer to the tuple $(\complex^s,V)$ as a Stinespring dilation of $\Phi$ (or $\Phi^*$). Furthermore, we say that a dilation $(\complex^s,V)$ is \emph{minimal} if $s = \rank(J_\Phi)$, or equivalently $\complex^n \otimes \complex^s = \{(A \otimes \id_s)V \phi \, : \, A \in \M(\complex)_n, \ \varphi \in \complex^d\}$. It is known that a minimal dilation always exists and that any dilation $(\complex^{s'},V')$ of $\Phi$ is related to a minimal dilation $(\complex^s,V)$ by an isometry $W: \complex^s \to \complex^{s'}$ such that $V' = (\id_n \otimes W)V$, and that $(\complex^{s'},V')$ is minimal if and only if $W$ is unitary \cite[Sec.\ 2.4]{chiribella2009realization}. Now for any dilation of $(\complex^s, V)$ of $\Phi$ it follows that $\Phi$ is TNI if and only if $V^*V \leq \id_d$ and it is TP, or equivalently $\Phi^*$ is \emph{unital} meaning that $\Phi^*(\id_n) = \id_d$, if and only if $V$ is an isometry, i.e., $V^*V=\id_d$. 

Related to the Stinespring dilation of CP maps we will be using the following Radon-Nikodym theorem for CP maps \cite{raginsky2003radon}:

\begin{thm}[{\cite[Thm.~III.3]{raginsky2003radon}}]\label{thm:radon-nikodym-instruments}
Let $N \in \mathbb N$ and let $\{\Phi^*_i\}_{i \in [N]}$ be a collection of CP maps from  $\mathcal M(\complex)_d$ to $\mathcal M(\complex)_n$ such that $\Phi^* := \sum{\Phi_i}^*$ is a CP map with minimal Stinespring dilation $\Phi^*(A) = V^\ast (A \otimes \id_s) V$ for all $A \in \mathcal M(\complex)_d$. Then, there is $Q \in \qm{N}{s}$ such that 
\begin{equation*}
    \Phi_i^\ast(A) = V^\ast (A \otimes Q_i) V \qquad \forall A \in \mathcal M(\complex)_d.
\end{equation*}
\end{thm}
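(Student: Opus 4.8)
The plan is to realize each summand $\Phi_i^*$ as a compression of the total representation by a positive operator in its commutant — this is the content of Arveson's operator-valued Radon--Nikodym theorem — and then to extract the POVM normalization from minimality of the dilation.

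First I would record the domination structure. Since $\Phi^* - \Phi_i^* = \sum_{j \neq i} \Phi_j^*$ is a finite sum of CP maps, it is itself CP, so each $\Phi_i^*$ is \emph{dominated} by $\Phi^*$ in the sense that $\Phi^* - \Phi_i^*$ is completely positive. This is exactly the hypothesis needed to invoke the Radon--Nikodym correspondence for CP maps relative to the given minimal dilation $\Phi^*(A) = V^*(A \otimes \id_s)V$.

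Next I would apply that correspondence. Writing the representation as $\pi(A) = A \otimes \id_s$ on $\complex^d \otimes \complex^s$, the theorem says that any CP map $\Psi$ with $0 \le \Psi \le \Phi^*$ has the form $\Psi(A) = V^* T\, \pi(A)\, V$ for some positive contraction $T$ in the commutant $\pi(\M(\complex)_d)'$, and that for a minimal dilation this $T$ is unique. Applying this to $\Psi = \Phi_i^*$ yields operators $T_i$ with $0 \le T_i \le \id_{ds}$ commuting with every $A \otimes \id_s$. I would then use the elementary identification $\pi(\M(\complex)_d)' = \id_d \otimes \M(\complex)_s$, which forces $T_i = \id_d \otimes Q_i$ with $0 \le Q_i \le \id_s$; since $T_i$ commutes with $\pi(A)$ we get $T_i\,\pi(A) = A \otimes Q_i$, and hence $\Phi_i^*(A) = V^*(A \otimes Q_i)V$, as claimed.

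Finally, to see that the $Q_i$ form a POVM, I would sum over $i$: for every $A$, the identity $\sum_i \Phi_i^*(A) = V^*\!\left(A \otimes \sum_i Q_i\right)\!V$ must coincide with $\Phi^*(A) = V^*(A \otimes \id_s)V$. Minimality enters decisively here, since it guarantees that the vectors $\{(A \otimes \id_s)V\phi : A \in \M(\complex)_d,\ \phi \in \complex^n\}$ span all of $\complex^d \otimes \complex^s$; this makes the map $B \mapsto \big(A \mapsto V^*(A \otimes B)V\big)$ injective, so the two expressions can be cancelled to give $\sum_i Q_i = \id_s$. Together with $Q_i \ge 0$ this establishes $Q = \{Q_i\}_{i \in [N]} \in \qm{N}{s}$. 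The main obstacle is the Radon--Nikodym step producing the $T_i$ inside the commutant; everything afterwards is linear algebra. The subtle point worth flagging is that \emph{minimality} of the Stinespring dilation is used twice — once to make the correspondence with commutant operators well behaved (uniqueness of the $T_i$), and once through the spanning property to upgrade the equation $V^*(A \otimes \sum_i Q_i)V = V^*(A \otimes \id_s)V$ to the genuine normalization $\sum_i Q_i = \id_s$ rather than a mere inequality.
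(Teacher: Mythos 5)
The paper does not actually prove this statement: it is imported verbatim from the cited reference (Raginsky, Thm.~III.3), so there is no internal proof to compare against. Your argument is correct and is essentially the standard proof of that result, matching the approach of the cited source: CP-domination of each $\Phi_i^*$ by $\Phi^*$, Arveson's Radon--Nikodym theorem for the minimal Stinespring dilation producing positive contractions $T_i$ in the commutant, the identification $\left(\M(\complex)_d \otimes \id_s\right)' = \id_d \otimes \M(\complex)_s$ forcing $T_i = \id_d \otimes Q_i$, and minimality again to upgrade $V^*(A \otimes \sum_i Q_i)V = V^*(A \otimes \id_s)V$ to the exact normalization $\sum_i Q_i = \id_s$. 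The one step you leave implicit, that the spanning property makes $B \mapsto \left(A \mapsto V^*(A \otimes B)V\right)$ injective, does hold: for $C := \sum_i Q_i - \id_s$ one has $\langle (A_1 \otimes \id_s)V\phi,\, (\id_d \otimes C)(A_2 \otimes \id_s)V\psi\rangle = \langle \phi,\, V^*(A_1^*A_2 \otimes C)V\psi\rangle = 0$, and since such vectors span $\complex^d \otimes \complex^s$ this gives $\id_d \otimes C = 0$; alternatively, this normalization follows at once from the uniqueness part of the Radon--Nikodym correspondence applied to $\Phi^*$ dominated by itself.
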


We will now show that we can drop the requirement that the Stinespring dilation is minimal in the above theorem (see also \cite{Wolf2012}).
\begin{cor}\label{cor:radon-nikodym-instruments}
Let $N \in \mathbb N$ and let $\{\Phi^*_i\}_{i \in [N]}$ be a collection of CP maps from  $\mathcal M(\complex)_d$ to $\mathcal M(\complex)_n$ such that $\Phi^* := \sum{\Phi_i}^*$ is a CP map with Stinespring dilation $\Phi^*(A) = W^\ast (A \otimes \id_{s^\prime}) W$ for all $A \in \mathcal M(\complex)_d$. Then, there is $Q^\prime \in \qm{N}{s^\prime}$ such that 
\begin{equation*}
    \Phi_i^\ast(A) = W^\ast (A \otimes Q^\prime_i) W \qquad \forall A \in \mathcal M(\complex)_d.
\end{equation*}
\end{cor}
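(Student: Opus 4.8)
The plan is to reduce the statement to the minimal-dilation case already settled in Theorem~\ref{thm:radon-nikodym-instruments}, exploiting the fact recalled in Appendix~\ref{appendix:A} that any Stinespring dilation factors through a minimal one via an isometry. Concretely, let $(\complex^s, V)$ be a \emph{minimal} Stinespring dilation of $\Phi^*$. Then there is an isometry $U : \complex^s \to \complex^{s'}$ with $W = (\id_n \otimes U)V$ and $U^*U = \id_s$ (here I reserve the letter $U$ for the connecting isometry, since $W$ already denotes the given non-minimal dilation). Applying Theorem~\ref{thm:radon-nikodym-instruments} to the minimal dilation $V$ produces a POVM $Q \in \qm{N}{s}$ such that $\Phi_i^*(A) = V^*(A \otimes Q_i)V$ for all $A \in \M(\complex)_d$.

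The second step is to transport the POVM $Q$ from the ancilla $\complex^s$ to the larger ancilla $\complex^{s'}$ along $U$, in such a way that pulling back recovers $Q$. Let $P := \id_{s'} - UU^*$ be the projection onto the orthogonal complement of the range of $U$, and define $Q'_1 := U Q_1 U^* + P$ and $Q'_i := U Q_i U^*$ for $i \geq 2$. Each $Q'_i$ is positive, and $\sum_{i=1}^N Q'_i = U(\sum_i Q_i)U^* + P = UU^* + P = \id_{s'}$, so $Q' = \{Q'_i\}_i \in \qm{N}{s'}$. Since $U^*U = \id_s$ and $U^*P = U^* - U^*UU^* = 0$, one checks directly that $U^* Q'_i U = Q_i$ for every $i$.

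Finally, I would substitute the factorization $W = (\id_n \otimes U)V$ together with the identity $U^* Q'_i U = Q_i$ to compute, for all $A \in \M(\complex)_d$,
\begin{equation*}
W^*(A \otimes Q'_i)W = V^*(\id_n \otimes U^*)(A \otimes Q'_i)(\id_n \otimes U)V = V^*(A \otimes U^* Q'_i U)V = V^*(A \otimes Q_i)V = \Phi_i^*(A),
\end{equation*}
which is exactly the claim.

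The computations here are routine; the only point requiring care is the POVM-extension step, where one must simultaneously preserve positivity, normalization to $\id_{s'}$, and the pullback identity $U^* Q'_i U = Q_i$. The identities $U^*U = \id_s$ and $U^*P = 0$ (equivalently $PU = 0$) are precisely what make the complementary projection $P$ invisible under pullback, so that it can be absorbed into a single outcome to fix the normalization without disturbing the others. I expect this extension to be the main (though mild) obstacle; everything else follows immediately from the minimal-case theorem and the isometric relation between dilations.
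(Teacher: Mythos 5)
Your proof is correct and follows essentially the same route as the paper's: reduce to the minimal dilation via Theorem~\ref{thm:radon-nikodym-instruments}, relate the two dilations by the isometry $U$ with $W=(\id_n\otimes U)V$, and extend the POVM by conjugation while absorbing the complementary projection $\id_{s'}-UU^*$ so that it vanishes under the pullback $U^*(\cdot)U$. The only (immaterial) difference is bookkeeping: the paper distributes the complement uniformly as $(\id_{s'}-UU^*)/N$ over all outcomes, whereas you absorb it entirely into the first outcome.
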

\begin{proof}
    Let $\Phi^*(A) = V^\ast (A \otimes \id_s) V$ for all $A \in \mathcal M(\complex)_d$ be a minimal Stinespring dilation of $\Phi^*$. Then, by Thm.\  \ref{thm:radon-nikodym-instruments}, there is $Q \in \qm{N}{s}$ such that 
    \begin{equation*}
    \Phi_i^\ast(A) = V^\ast (A \otimes Q_i) V \qquad \forall A \in \mathcal M(\complex)_d.
\end{equation*}
As explained in \cite[Sec.\ 2.4]{chiribella2009realization}, $s^\prime \geq s$ and there exists an isometry $U: \mathbb C^s \to \mathbb C^{s^\prime}$ such that $W = (\mathds{1} \otimes U) V$. Let $P =U U^\ast$, which is an orthogonal projection because $U$ is an isometry. Then, let $Q_i^\prime := U Q_i U^{\ast} + (\mathds{1}_{s^\prime}-P)/N$ for all $i \in [N]$. As $Q_i^\prime \geq 0$ and 
\begin{equation*}
    \sum_{i = 1}^N Q_i^\prime = (\mathds{1}_{s^\prime}-P) + \sum_{i = 1}^N U Q_i U^{\ast} = \mathds{1}_{s^\prime},
\end{equation*}
indeed $Q^\prime \in \qm{N}{s^\prime}$. Finally, we verify that
\begin{align*}
    W^\ast (A \otimes Q^\prime_i) W & = V^\ast (\mathds{1} \otimes U^\ast) A \otimes (U Q_i U^{\ast} + (\mathds{1}_{s^\prime}-P)/N) (\mathds{1} \otimes U) V \\
    & = V^\ast (A \otimes Q_i) V \\
    & =  \Phi_i^\ast(A)
\end{align*}
for all $i \in [N]$, as $U^\ast P U = U^\ast U U^\ast U = \mathds{1}_s$.
\end{proof}

\section{Proof of Theorem \ref{thm:multimeter-transformation-realization}}\label{appendix:B}

In order to characterize the completely positive maps between multimeters, we need to first take a closer look on Choi matrices of multimeters. Let $\Phi_M \in \M\M(g,k,d) \subset \qc{dg}{k}$ be a multimeter for some number $g$  of $k$-outcome POVMs $M= \{M_{\cdot | x}\}_{x \in [g]} \subset \qm{k}{d}$ on $\complex^d$. Now if take some bases $\{\ket{i}\}_{i \in [g]}$ of $\complex^g$ and $\{\ket{\alpha}\}_{\alpha \in [d]}$ of $\complex^d$, then the Choi matrix $J_{\Phi_M}$ can be written as
\begin{align*}
    J_{\Phi_M} &= \sum_{\alpha, \beta =1}^d \sum_{i,j =1}^g  \Phi_M\left(\ketbra{\alpha}{\beta} \otimes \ketbra{i}{j} \right) \otimes \ketbra{\alpha}{\beta} \otimes \ketbra{i}{j}  \\
    &= \sum_{\alpha, \beta =1}^d \sum_{i,j =1}^g   \left[\sum_{x=1}^g \sum_{a=1}^k \Tr[( M_{a | x} \otimes \ketbra{x}{x} )(\ketbra{\alpha}{\beta} \otimes \ketbra{i}{j})]\ketbra{a}{a} \right] \otimes \ketbra{\alpha}{\beta} \otimes \ketbra{i}{j} \\
    &= \sum_{\alpha, \beta =1}^d \sum_{x=1}^g \sum_{a=1}^k \matrixel{\beta}{M_{a|x}}{\alpha} \ketbra{a}{a} \otimes \ketbra{\alpha}{\beta} \otimes \ketbra{x}{x} \\
    &=  \sum_{x=1}^g \sum_{a=1}^k  \ketbra{a}{a} \otimes \left(  \sum_{\alpha, \beta =1}^d \matrixel{\alpha}{M^T_{a|x}}{\beta} \ketbra{\alpha}{\beta}\right) \otimes \ketbra{x}{x} \\
    &= \sum_{x=1}^g \sum_{a=1}^k  \ketbra{a}{a} \otimes M^T_{a|x} \otimes \ketbra{x}{x},
\end{align*}
where the transpose $ M^T_{a|x}$ of $M_{a|x}$ is taken with respect to the basis $\{\ket{\alpha}\}_{\alpha \in [d]}$. Thus, now we have that 
\begin{align}
    \mathcal{J}(\M\M(g,k,d)) &= \left\lbrace \sum_{x=1}^g \sum_{a=1}^k  \ketbra{a}{a} \otimes M^T_{a|x} \otimes \ketbra{x}{x} \, : \, \{M_{\cdot|x}\}_{x \in [g]} \subset \qm{k}{d} \right\rbrace \\
    &= \left\lbrace \sum_{x=1}^g \sum_{a=1}^k  \ketbra{a}{a} \otimes M_{a|x} \otimes \ketbra{x}{x} \, : \, \{M_{\cdot|x}\}_{x \in [g]} \subset \qm{k}{d} \right\rbrace.
\end{align}

Now we can prove Theorem \ref{thm:multimeter-transformation-realization}:
\begin{thm*}
Let $\Psi: \M(\complex)_{kdg} \to \M(\complex)_{lnr}$ be a CP map such that $\Psi(\mathcal{J}(\M\M(g,k,d))) \subseteq \mathcal{J}(\M\M(r,l,n))$. Then $\Psi$ has a realization $(\complex^s, \Lambda, B)$, i.e., there exist an ancillary system $\complex^s$ for some $s \in \nat$, CP maps $\Lambda^*_{x|y}: \M(\complex)_{ds} \to \M(\complex)_n$ such that $\Lambda^*_y := \sum_{x \in [g]} \Lambda^*_{x|y}$ is a unital CP (UCP) map for all $y \in [r]$, and a set of POVMs $B = \{B_{\cdot|a,x,y} \}_{a \in [k], x \in [g], y \in [r]} \subset \qm{l}{s}$ such that 
\begin{equation}
    \Psi \left(J_{\Phi_{\left\lbrace M_{\cdot|x}\right\rbrace_{x\in [g]}}} \right) = J_{\Phi_{\left\lbrace \sum_{x=1}^g \sum_{a=1}^k \Lambda^*_{x|y}(M_{a|x} \otimes B_{\cdot |a,x,y}) \right\rbrace_{y \in [r]}}},   
\end{equation}
where $\left\lbrace \sum_{x=1}^g \sum_{a=1}^k \Lambda^*_{x|y}(M_{a|x} \otimes B_{\cdot |a,x,y}) \right\rbrace_{y \in [r]} \subset \qm{l}{n}$ is a set of POVMs for all\\ $\{M_{\cdot|x}\}_{x\in [g]} \subset \qm{k}{d}$.
\end{thm*}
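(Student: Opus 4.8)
The plan is to read off the action of $\Psi$ directly on the block-diagonal form of the multimeter Choi matrices computed at the start of this appendix, and then to repackage the resulting completely positive maps using the non-minimal Radon--Nikodym result, Corollary~\ref{cor:radon-nikodym-instruments}. First I would fix a Kraus decomposition $\Psi(Z)=\sum_\mu K_\mu Z K_\mu^*$ with $K_\mu:\complex^k\otimes\complex^d\otimes\complex^g\to\complex^l\otimes\complex^n\otimes\complex^r$ and expand each Kraus operator along the classical registers as $K_\mu=\sum_{a,x,b,y}\ketbra{b}{a}\otimes(K_\mu)^{a,x}_{b,y}\otimes\ketbra{y}{x}$ with $(K_\mu)^{a,x}_{b,y}:\complex^d\to\complex^n$. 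Since $\Psi\big(J_{\Phi_M}\big)$ is assumed to lie in $\mathcal{J}(\M\M(r,l,n))$, it is block-diagonal in the output registers $\complex^l$ and $\complex^r$, so extracting its $(b,y)$-diagonal block recovers the simulated effects. Plugging in $J_{\Phi_M}=\sum_{x,a}\ketbra{a}{a}\otimes M_{a|x}\otimes\ketbra{x}{x}$ and collapsing the classical Kronecker deltas yields
\begin{equation*}
N_{b|y}=\sum_{x=1}^g\sum_{a=1}^k\mathcal{K}^{a,x}_{b,y}(M_{a|x}),\qquad \mathcal{K}^{a,x}_{b,y}(X):=\sum_\mu (K_\mu)^{a,x}_{b,y}\,X\,\big((K_\mu)^{a,x}_{b,y}\big)^*,
\end{equation*}
where each $\mathcal{K}^{a,x}_{b,y}:\M(\complex)_d\to\M(\complex)_n$ is manifestly completely positive.

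The crucial structural step is to show that $\mathcal{L}^{a,x}_y:=\sum_{b}\mathcal{K}^{a,x}_{b,y}$ is \emph{independent of the outcome} $a$. For this I would use that $N=\{N_{\cdot|y}\}$ must be a valid multimeter for \emph{every} input multimeter $M$: summing the display over $b$ gives $\sum_{a,x}\mathcal{L}^{a,x}_y(M_{a|x})=\id_n$ for all POVM families with $\sum_a M_{a|x}=\id_d$. Comparing two such families that differ only in a single setting $x_0$ forces $\sum_a \mathcal{L}^{a,x_0}_y(D_a)=0$ whenever $\sum_a D_a=0$; choosing differences supported on only two outcomes then shows $\mathcal{L}^{a,x_0}_y$ does not depend on $a$, so we may write it as $\mathcal{L}^{x}_y$. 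The same computation, now with the full sum, gives $\sum_x \mathcal{L}^{x}_y(\id_d)=\id_n$.

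With this in hand I would fix a common ancilla dimension $s$ (for instance $s=nd$) and, for each pair $(x,y)$, choose a Stinespring dilation $\mathcal{L}^{x}_y(X)=V_{x,y}^*(X\otimes\id_s)V_{x,y}$ with $V_{x,y}:\complex^n\to\complex^d\otimes\complex^s$. For fixed $(x,y)$ and each $a$, the maps $\{\mathcal{K}^{a,x}_{b,y}\}_{b\in[l]}$ are completely positive and sum to the \emph{same} map $\mathcal{L}^{x}_y$; applying Corollary~\ref{cor:radon-nikodym-instruments} to this fixed dilation produces a POVM $\{B_{b|a,x,y}\}_{b}\in\qm{l}{s}$ with $\mathcal{K}^{a,x}_{b,y}(X)=V_{x,y}^*(X\otimes B_{b|a,x,y})V_{x,y}$. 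Setting $\Lambda^*_{x|y}(Y):=V_{x,y}^*\,Y\,V_{x,y}$ for $Y\in\M(\complex)_{ds}$ then gives exactly
\begin{equation*}
N_{b|y}=\sum_{x=1}^g\sum_{a=1}^k\Lambda^*_{x|y}\big(M_{a|x}\otimes B_{b|a,x,y}\big),
\end{equation*}
and unitality of $\Lambda^*_y=\sum_x\Lambda^*_{x|y}$ follows from $\sum_x V_{x,y}^*V_{x,y}=\sum_x\mathcal{L}^{x}_y(\id_d)=\id_n$.

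The main obstacle I anticipate is the independence-of-$a$ argument combined with arranging a \emph{single} dilation $V_{x,y}$ (and hence a single ancilla $\complex^s$) that works simultaneously for all outcomes $a$: the Radon--Nikodym splitting can only reuse the same preprocessing $\Lambda^*_{x|y}$ across all $a$ once one knows that $\sum_b\mathcal{K}^{a,x}_{b,y}$ does not depend on $a$. This is precisely why the non-minimal version Corollary~\ref{cor:radon-nikodym-instruments} is needed rather than the minimal statement Theorem~\ref{thm:radon-nikodym-instruments}: it lets us dilate each $\mathcal{L}^x_y$ into one fixed common ancilla $\complex^s$ and still extract POVM-valued densities $B_{b|a,x,y}$ on that ancilla.
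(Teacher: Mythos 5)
Your proposal is correct and follows essentially the same route as the paper's proof in Appendix~\ref{appendix:B}: you extract component CP maps $\mathcal{K}^{a,x}_{b,y}$ from the classical blocks (the paper builds the identical maps as $\Psi^*_{b,x|a,y}= S_b \circ R_y \circ \Psi \circ P_x^* \circ Q_a^*$ rather than via Kraus blocks), prove independence of $a$ and unitality by comparing two-outcome test multimeters that differ at a single setting, and then combine a common Stinespring dilation with the non-minimal Radon--Nikodym result, Cor.~\ref{cor:radon-nikodym-instruments}, exactly as the paper does. The only omission is cosmetic: the Choi convention gives $J_{\Phi_M}=\sum_{x,a}\ketbra{a}{a}\otimes M^T_{a|x}\otimes\ketbra{x}{x}$ (with transposes), so to land on the stated formula one must, as in the paper's final paragraph, conjugate the maps $\Lambda^*_{x|y}$ and the POVMs $B$ by transpositions, which preserves complete positivity.
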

\begin{proof}
Let us start by defining the following CP maps
\begin{table}[h!]
    \centering
    \begin{tabular}{lll}
    $\forall x \in [g]:$ & $P_x: \M(\complex)_{kdg} \to \M(\complex)_{kd},$ &\\\medskip
    &$P_x(W) := \Tr_{\complex^g}[(\id_k \otimes \id_d \otimes \ketbra{x}{x}) W]$ & $\qquad\forall W \in \M(\complex)_{kdg}$, \\
    $\forall a \in [k]:$ & $Q_a: \M(\complex)_{kd} \to \M(\complex)_{d},$&\\\medskip
    & $Q_a(X) := \Tr_{\complex^k}[(\ketbra{a}{a} \otimes \id_d) X]$ & $\qquad\forall X \in \M(\complex)_{kd}$, \\
    $\forall y \in [r]:$ & $R_y: \M(\complex)_{lnr} \to \M(\complex)_{ln},$&\\\medskip
    & $R_y(Y) := \Tr_{\complex^r}[(\id_l \otimes \id_n \otimes \ketbra{y}{y}) Y]$ & $\qquad\forall Y \in \M(\complex)_{lnr}$, \\
    $\forall b \in [l]:$ & $S_b: \M(\complex)_{ln} \to \M(\complex)_{n},$&\\
    & $S_b(Z) := \Tr_{\complex^l}[(\ketbra{b}{b} \otimes \id_n) Z]$ & $\qquad\forall  Z \in \M(\complex)_{ln}$, 
\end{tabular}
\end{table}

\noindent where now $\{\ket{x}\}_{x \in [g]}$ and $\{\ket{a}\}_{a \in [k]}$ are the same bases of $\complex^g$ and $\complex^k$ respectively that are used to define $\mathcal{J}(\M\M(g,k,d))$ and analogously $\{\ket{y}\}_{y \in [r]}$ and $\{\ket{b}\}_{b \in [l]}$ are the same bases of $\complex^r$ and $\complex^l$ respectively that are used to define $\mathcal{J}(\M\M(r,l,n))$. It is straightforward to see that
\begin{align*}
    P_x^*(A) &= A \otimes \ketbra{x}{x} \quad \forall A \in \M(\complex)_{kd}, \quad Q_a^*(B) = \ketbra{a}{a} \otimes B \quad \forall B \in \M(\complex)_{d}, \\
    R_y^*(C) &= C \otimes \ketbra{y}{y} \quad \forall C \in \M(\complex)_{ln}, \quad S_b^*(D) = \ketbra{b}{b} \otimes D \quad \forall D \in \M(\complex)_{n},
\end{align*}
so that for all $x \in [g]$, $y \in [r]$, $a \in [k]$ and $b \in [l]$ we have
\begin{align}
    P_x(J_{\Phi_M}) &= J_{\Phi_{M_{\cdot|x}}}, \quad J_{\Phi_M} = \sum_{x=1}^g P_x^*(J_{\Phi_{M_{\cdot|x}}}),  \quad Q_a(J_{\Phi_{M_{\cdot|x}}}) = M^T_{a|x}, \quad J_{\Phi_{M_{\cdot|x}}} = \sum_{a=1}^k Q_a^*(M^T_{a|x}), \nonumber \\
    R_y(J_{\Phi_N}) &= J_{\Phi_{N_{\cdot|y}}}, \quad J_{\Phi_N} = \sum_{y=1}^r R_y^*(J_{\Phi_{N_{\cdot|y}}}),  \quad S_b(J_{\Phi_{N_{\cdot|y}}}) = N^T_{b|y}, \quad J_{\Phi_{N_{\cdot|y}}} = \sum_{b=1}^l S_b^*(N^T_{b|y}), \label{eq:PQRS-maps}
\end{align}
for all $M= \{M_{\cdot | x}\}_{x \in [g]} \subset \qm{k}{d}$ and $N= \{N_{\cdot | y}\}_{y \in [r]} \subset \qm{l}{n}$.

For all $x \in [g]$, $y \in [r]$, $a \in [k]$ and $b \in [l]$ we define CP maps $\Psi^*_{b,x|a,y}: \M(\complex)_d \to \M(\complex)_n$ by setting $\Psi^*_{b,x|a,y}:= S_b \circ R_y \circ \Psi \circ P_x^* \circ Q_a^*$. By using the properties of the above defined maps and the fact that $\Psi(\mathcal{J}(\M\M(g,k,d))) \subseteq \mathcal{J}(\M\M(r,l,n))$ it is straightforward to check that
\begin{align}
     \sum_{x=1}^g \sum_{a=1}^k \sum_{y=1}^r \sum_{b=1}^l\left(R_y^* \circ S_b^* \circ \Psi^*_{b,x|a,y} \circ Q_a \circ P_x\right)(J_{\Phi_{M}}) &= \Psi(J_{\Phi_{M}}) 
\end{align}
for all $M=\{M_{\cdot|x}\}_{x\in [g]} \subset \qm{k}{d}$. 

Let us now focus on the properties of the maps $\Psi^*_{b,x|a,y}$. First, since $\Psi(\mathcal{J}(\M\M(g,k,d))) \subseteq \mathcal{J}(\M\M(r,l,n))$, from the properties of the maps $P_x$, $Q_a$, $R_y$ and $S_b$ it follows that
\begin{equation}
    \left\lbrace   \sum_{x=1}^g \sum_{a=1}^k \Psi^*_{\cdot,x|a,y}(M^T_{a|x}) \right\rbrace_{y \in [r]} \subset \qm{l}{n}
\end{equation}
for all $M = \{M_{\cdot|x}\}_{x \in [g]} \subset \qm{k}{d}$. In particular, now we have that 
\begin{equation}
    \sum_{b=1}^l\sum_{x=1}^g \sum_{a=1}^k \Psi^*_{b,x|a,y}(M^T_{a|x}) = \id_n
\end{equation}
for all $y \in [r]$ for all $M = \{M_{\cdot|x}\}_{x \in [g]} \subset \qm{k}{d}$.

Let us now take some $(a_1, \ldots, a_g) \in [k]^g$ and define POVMs $A = \{A_{\cdot|x}\}_{x \in [g]} \subset \qm{k}{d}$ by setting $A_{a|x}= \id_d$, only if $a=a_x$, and naturally due to normalization $A_{a|x}= 0$ otherwise. Now from the above equation we see that
\begin{equation}
    \sum_{b=1}^l\sum_{x=1}^g \sum_{a=1}^k \Psi^*_{b,x|a,y}(A^T_{a|x}) = \sum_{b=1}^l\sum_{x=1}^g \Psi^*_{b,x|a_x,y}(\id_d) = \id_n.
\end{equation}
Hence, since $(a_1, \ldots, a_g)$ was chosen arbitrarily, we have that $\sum_{b=1}^l\sum_{x=1}^g \Psi^*_{b,x|a,y}$ is unital for all $a \in [k]$ and $y \in [r]$.

On the other hand, let us now fix some effect operators $B_x \in \qe{d}$ for all $x \in [g]$. Now if we take some $(a_1, \ldots, a_g), (a'_1, \ldots, a'_g) \in [k]^g$ such that $a_x \neq a'_x$ for all $x \in [g]$ and define POVMs $B = \{B_{\cdot|x}\}_{x \in [g]} \subset \qm{k}{d}$ by setting $B_{a_x|x}= B_x$, $B_{a'_x|x} = \id_d-B_x$ and $B_{a|x}= 0$ otherwise, we see that
\begin{align*}
    \id_n &= \sum_{b=1}^l\sum_{x=1}^g \sum_{a=1}^k \Psi^*_{b,x|a,y}(B^T_{a|x}) = \sum_{b=1}^l\sum_{x=1}^g \Psi^*_{b,x|a_x,y}(B^T_x)+\sum_{b=1}^l\sum_{x=1}^g \Psi^*_{b,x|a'_x,y}(\id_d -B^T_x) \\
    &= \id_n + \sum_{b=1}^l\sum_{x=1}^g \left(\Psi^*_{b,x|a_x,y} - \Psi^*_{b,x|a'_x,y}\right)(B^T_x) \\&= \id_n +  \sum_{x=1}^g \left( \sum_{b=1}^l\Psi^*_{b,x|a_x,y} - \sum_{b=1}^l\Psi^*_{b,x|a'_x,y}\right)(B^T_x).
\end{align*}
Now if we fix $x' \in [g]$ and take $B_{x'}=B$ for some $B \in \qe{d}$ and $B_x =0$ otherwise, we then must have that $ \left( \sum_{b=1}^l\Psi^*_{b,x'|a_{x'},y} - \sum_{b=1}^l\Psi^*_{b,x'|a'_{x'},y}\right)(B^T) = 0$  for all $y \in [r]$. Since $B$ can be chosen arbitrarily and since the set of effects $\qe{d}$, as well as their transposes, spans $\M(\complex)_d$, we thus have that $\sum_{b=1}^l\Psi^*_{b,x'|a_{x'},y} =\sum_{b=1}^l\Psi^*_{b,x'|a'_{x'},y}$ for all $y \in [r]$. Finally, since $x' \in [g]$ and the outcomes $(a_1, \ldots, a_g), (a'_1, \ldots, a'_g) \in [k]^g$ can also be chosen arbitrarily, we conclude that
$\sum_{b=1}^l\Psi^*_{b,x|a,y} = \sum_{b=1}^l\Psi^*_{b,x|a',y}$ for all $x \in [g]$, $y \in [r]$ and $a,a' \in [k]$. Thus, we may denote $\Psi^*_{x|y} := \sum_{b=1}^l\Psi^*_{b,x|a,y}$.

To summarize so far, we have the following properties
\begin{align}
    & \Psi^*_{b,x|a,y}: \M(\complex)_d \to \M(\complex)_n  \quad \mathrm{CP} \quad \forall x \in [g], \ \forall y \in [r], \ \forall a \in [k], \ \forall b \in [l] \\ 
    & \Psi^*_{x|y} := \sum_{b=1}^l\Psi^*_{b,x|a,y} = \sum_{b=1}^l\Psi^*_{b,x|a',y}  \quad \mathrm{CP} \quad \forall x \in [g], \ \forall y \in [r], \ \forall a,a' \in [k]\\  
    &  \Psi^*_y := \sum_{x=1}^g \Psi^*_{x|y}  \quad \mathrm{UCP} \quad \forall y \in [r],
\end{align}
where UCP means that the maps are unital and CP.

Let us now take $s \geq \max_{x,y} \rank(J_{\Psi^*_{x|y}})$ and a Stinespring dilation $(\complex^s, V_{xy})$ for $\Psi^*_{x|y}$ so that $\Psi^*_{x|y}(A) = V^*_{xy}(A \otimes \id_s)V_{xy}$ for all $A \in \M(\complex)_d$ and such that $V^*_{xy} V_{xy} \leq \id_d$ for all $x \in [g]$ and $y \in [r]$. Furthermore, since $\Psi^*_y = \sum_{x} \Psi^*_{x|y}$ is unital it follows that $\sum_x V^*_{xy}V_{xy} = \id_d$ for all $y \in [r]$. Since $\Psi^*_{x|y} = \sum_{b} \Psi^*_{b,x|y,a}$ for all $a \in [k]$, $x \in [g]$ and $y \in [r]$, by Cor.~\ref{cor:radon-nikodym-instruments} there exists POVMs $\tilde B=\{\tilde B_{\cdot |a,x,y}\}_{a \in [k], x \in [g], y \in [r]} \subset \qm{l}{s}$ such that 
\begin{equation}
    \Psi^*_{b,x|a,y}(A) = V^*_{xy}(A \otimes B_{b|a,x,y} ) V_{xy}\, .
\end{equation}

We can now define CP maps $\tilde \Lambda^*_{x|y}: \M(\complex)_{ds} \to \M(\complex)_n$ by setting $\tilde \Lambda^*_{x|y}(X) = V^*_{xy} X V_{xy}$ for all $x \in [g]$ and $y \in [r]$. From the fact that  $\sum_x V^*_{xy}V_{xy} = \id_d$ it follows that $\tilde \Lambda^*_y := \sum_x \tilde \Lambda^*_{x|y}$ is UCP for all $y \in [r]$. Thus, we have shown that there exist an ancillary system $\complex^s$, CP maps $\tilde \Lambda^*_{x|y}: \M(\complex)_{ds} \to \M(\complex)_n$ such that $\tilde \Lambda^*_y := \sum_{x \in [g]} \tilde \Lambda^*_{x|y}$ is  UCP for all $y \in [r]$, and a set of POVMs $\tilde B = \{\tilde B_{\cdot|a,x,y} \}_{a \in [k], x \in [g], y \in [r]} \subset \qm{l}{s}$ such that 
\begin{equation}
    \Psi^*_{b,x|a,y}(A) = \tilde \Lambda^*_{x|y}(A \otimes \tilde B_{b|a,x,y})
\end{equation}
for all $x \in [g]$, $y \in [r]$, $a \in [k]$ and $b \in [l]$.

Finally, we now see that for any $M = \{M_{\cdot|x}\}_{x \in [g]} \subset \qm{k}{d}$ we have that
\begin{align}
    \Psi(J_{\Phi_M}) &=  \sum_{x=1}^g \sum_{a=1}^k \sum_{y=1}^r \sum_{b=1}^l\left(R_y^* \circ S_b^* \circ \Psi^*_{b,x|a,y} \circ Q_a \circ P_x\right)(J_{\Phi_M})  \\
    &= \sum_{x=1}^g \sum_{a=1}^k \sum_{y=1}^r \sum_{b=1}^l\left(R_y^* \circ S_b^* \right) \left( \Psi^*_{b,x|a,y}(M^T_{a|x})  \right)  \\
    &=  \sum_{y=1}^r \sum_{b=1}^l\left(R_y^* \circ S_b^* \right)\left( \sum_{x=1}^g \sum_{a=1}^k \tilde \Lambda^*_{x|y}(M^T_{a|x} \otimes \tilde B_{b|a,x,y}) \right) \\
    &=  \sum_{y=1}^r R_y^* \left(J_{\Phi_{(\sum_{x=1}^g \sum_{a=1}^k \tilde \Lambda^*_{x|y}(M^T_{a|x} \otimes \tilde B_{\cdot |a,x,y}))^T}} \right) \\
    &= J_{\Phi_{\left\lbrace \sum_{x=1}^g \sum_{a=1}^k (\tilde \Lambda^*_{x|y}(M^T_{a|x} \otimes \tilde B_{\cdot |a,x,y}))^T \right\rbrace_{y \in [r]}}},
\end{align}
where the outer transposition in the last line is taken in the $\{\ket{\beta}\}_{\beta \in [n]}$ basis of $\complex^n$. Now, we can define a new CP map $\Lambda^*_{x|y}(X) = (\tilde \Lambda^*_{x|y}(X^T))^T$, where the outer transposition is in the $\{\ket{\beta}\}_{\beta \in [n]}$ basis and the inner in the $\{\ket{\alpha} \ket{c}\}_{\alpha \in [d], c \in [s]}$ basis, with $\{\ket{c}\}_{c \in [s]}$ being a basis of $\complex^s$. Moreover, we define a  new UCP map $\Lambda^*_{y} = \sum_{x \in [g]} \Lambda^*_{x|y}$ and a new POVM $ B=\{\tilde B_{\cdot |a,x,y}^T\}_{a \in [k], x \in [g], y \in [r]}$, where the transpose is in the $\{\ket{c}\}_{c \in [s]}$ basis. Thus,
\begin{equation*}
     \Psi \left(J_{\Phi_{\left\lbrace M_{\cdot|x}\right\rbrace_{x\in [g]}}} \right) = J_{\Phi_{\left\lbrace \sum_{x=1}^g \sum_{a=1}^k \Lambda^*_{x|y}(M_{a|x} \otimes B_{\cdot |a,x,y}) \right\rbrace_{y \in [r]}}}
\end{equation*}
and $\left\lbrace \sum_{x=1}^g \sum_{a=1}^k \Lambda^*_{x|y}(M_{a|x} \otimes B_{\cdot |a,x,y}) \right\rbrace_{y \in [r]} = \left\lbrace \sum_{x=1}^g \sum_{a=1}^k (\Psi^*_{\cdot,x|a,y}(M^T_{a|x}))^T \right\rbrace_{y \in [r]}   \subset \qm{l}{n}$.
\end{proof}

\end{document}